\documentclass[cleveref,thm-restate,numberwithinsect]{lipics-v2021}
\pdfoutput=1
\usepackage[T1]{fontenc}
\usepackage{amsmath,amsthm,amssymb}
\usepackage{thmtools}
\usepackage{graphicx}
\usepackage{enumerate}
\usepackage{algorithm}
\usepackage[noend]{algpseudocode}
\usepackage{todonotes}
\usepackage{dsfont}
\usepackage{xspace}
\usepackage{xcolor}
\usepackage{tikz}
\sloppy

\usepackage[mathlines]{lineno}
\linenumbers

\usetikzlibrary{patterns}
\usetikzlibrary{calc} 
\usetikzlibrary{decorations.pathreplacing}
\usetikzlibrary{math} 

\newcommand{\Alg}{\mathcal{A}}
\newcommand{\caA}{\mathcal{A}}
\newcommand{\NN}{\mathbb{N}}%
\newcommand{\Cc}{\mathcal{C}}%
\newcommand{\Ss}{\mathcal{S}}%
\newcommand{\Pp}{\mathcal{P}}%

\newcommand{\dd}{.\,.}
\newcommand{\Oh}{O}

\newcommand{\eps}{\varepsilon}%
\newcommand{\Ot}{\tilde{O}}
\DeclareMathOperator*{\poly}{poly}
\DeclareMathOperator*{\polylog}{polylog}
\newcommand{\ham}{\mathsf{hd}}
\newcommand{\hd}[2]{\ham( #1, #2)}
\newcommand{\hdk}[3]{\ham_{\le #3}( #1, #2)}
\newcommand{\enc}{\mathsf{enc}}

\newcommand{\ed}{\mathsf{ed}}
\newcommand{\edd}[2]{\ed( #1, #2)}
\newcommand{\edk}[3]{\ed_{\le #3}( #1, #2)}
\newcommand{\sk}{\ensuremath \mathsf{sk}^\ham}
\newcommand{\skk}[1]{\ensuremath \sk_k( #1)}
\newcommand{\ske}{\ensuremath \mathsf{sk}^{\ed}}
\newcommand{\skek}{\ske_k}
\newcommand{\LED}{\textsc{LED}\xspace}
\newcommand{\LHD}{\textsc{LHD}\xspace}
\newcommand{\PAL}{\textsc{PAL}\xspace}
\newcommand{\SQ}{\textsc{SQ}\xspace}
\newcommand{\kpalh}{$k$-\LHD-\PAL}
\newcommand{\kpale}{$k$-\LED-\PAL}
\newcommand{\ksqh}{$k$-\LHD-\SQ}
\newcommand{\ksqe}{$k$-\LED-\SQ}
\newcommand{\lcp}{\textsf{LCP}\xspace}
\newcommand{\occ}{\mathsf{Occ}}
\newcommand{\occj}{\occ_j}
\newcommand{\MI}{\mathsf{MI}}
\newcommand{\ES}{\mathsf{ES}}

\newcommand{\rot}{\mathsf{rot}}
\newcommand{\koucky}{Kouck\'y\xspace}

\crefname{property}{Property}{Properties}
\newtheorem{fact}[theorem]{Fact}
\theoremstyle{definition}
\newtheorem{problem}[theorem]{Problem}

\author{Gabriel Bathie}{DIENS, \'{E}cole normale sup\'{e}rieure de Paris, PSL Research University, France \and LaBRI, Université de Bordeaux, France}{gabriel.bathie@gmail.com}{https://orcid.org/0000-0003-2400-4914}{}
\author{Tomasz Kociumaka}{Max Planck Institute for Informatics, Saarland Informatics Campus, Germany}{tomasz.kociumaka@mpi-inf.mpg.de}{https://orcid.org/0000-0002-2477-1702}{}
\author{Tatiana Starikovskaya}{DIENS, \'{E}cole normale sup\'{e}rieure de Paris, PSL Research University, France}{tat.starikovskaya@gmail.com}{https://orcid.org/0000-0002-7193-9432}{}

\title{Small-Space Algorithms for the Online Language Distance Problem for Palindromes and Squares}
\date{\empty}

\authorrunning{G. Bathie, T. Kociumaka, and T. Starikovskaya}
\titlerunning{Online Language Distance Problem for Palindromes and Squares}
\Copyright{Gabriel Bathie, Tomasz Kociumaka, and Tatiana Starikovskaya}
\ccsdesc[500]{Theory of computation~Streaming, sublinear and near linear time algorithms}
\ccsdesc[500]{Theory of computation~Pattern matching}

\relatedversiondetails{Full version with details for the edit distance algorithms}{https://arxiv.org/abs/2309.14788} 

\keywords{Approximate pattern matching, streaming algorithms, palindromes, squares}

\funding{Partially funded by the grant ANR-20-CE48-0001 from the French National Research Agency.}

\hideLIPIcs
\nolinenumbers

\EventEditors{John Q. Open and Joan R. Access}
\EventNoEds{2}
\EventLongTitle{42nd Conference on Very Important Topics (CVIT 2016)}
\EventShortTitle{CVIT 2016}
\EventAcronym{CVIT}
\EventYear{2016}
\EventDate{December 24--27, 2016}
\EventLocation{Little Whinging, United Kingdom}
\EventLogo{}
\SeriesVolume{42}
\ArticleNo{23}

\begin{document}
\maketitle
\begin{abstract}
We study the online variant of the language distance problem for two classical formal languages, the language of palindromes and the language of squares, and for the two most fundamental distances, the Hamming distance and the edit  (Levenshtein) distance.
In this problem, defined for a fixed formal language $L$, we are given a string $T$ of length $n$, and the task is to compute the minimal distance to $L$ from \emph{every} prefix of~$T$.
We focus on the low-distance regime, where one must compute only the distances smaller than a given threshold $k$. In this work, our contribution is twofold:
\begin{enumerate}
\item First, we show \emph{streaming} algorithms, which access the input string $T$ only through a single left-to-right scan.
Both for palindromes and squares, our algorithms use $\Oh(k \polylog n)$ space and time per character in the Hamming-distance case
and $\Oh(k^2 \polylog n)$  space and time per character in the edit-distance case.
These algorithms are randomised by necessity, and they err with probability inverse-polynomial in $n$.
\item Second, we show \emph{deterministic read-only} online algorithms, which are also provided with read-only random access to the already processed characters of $T$.
 Both for palindromes and squares, our algorithms use $\Oh(k \polylog n)$ space and time per character in the Hamming-distance case and~$\Oh(k^4 \polylog n)$  space and amortised time per character in the edit-distance case.
\end{enumerate}
\end{abstract}

\section{Introduction}\label{sec:intro}
The \emph{language distance} problem is one of the most fundamental problems in formal language theory.
In this problem, the task is to compute the minimal distance between a given string~$S$ and any string belonging to a formal language $L$. Introduced in the early 1970s by Aho and Peterson~\cite{doi:10.1137/0201022}, the language distance problem has been studied extensively for regular languages under Hamming and edit distances~\cite{regular}, for general context-free languages, mainly focusing on the edit distance~\cite{ABW18,doi:10.1137/0201022,BGSW19,chi2022faster,10.1145/505241.505242,MYERS199585,RUZ79,7354391,8104067,Sat94}, and the Dyck language (the language of well-nested parentheses sequences) in particular~\cite{ABW18,BO16,BGSW19,chi2022faster,ApproxDyckED,DURR2023106358,FGK+22a,DBLP:conf/soda/0001S23,10.1007/978-3-642-22993-0_38,6979046,7354391,8104067}.

\subparagraph*{Our results.}
In this work, we study the complexity of the online and low-distance version of the language distance problem, where we are given a string $T$ of length $n$, and the task is to compute the minimal distance from \emph{every} prefix of $T$ to a formal language~$L$ (the distance and the language are specified in the problem definition).
We focus on the low-distance regime, i.e., we assume to be given a threshold parameter $k$ such that distances larger than $k$ do not need to be computed.
We consider the edit distance (defined as the minimum number of character insertions, deletions, and substitutions needed to transform one string into the other) and, as a preliminary step, the Hamming distance (allowing for substitutions only).
We study the problem for two classical languages: the language \PAL of all palindromes, where a palindrome is a string that is equal to its reversed copy, and the language \SQ of all squares, where a square is the concatenation of two copies of a string. These two languages are very similar yet very different in nature: \PAL is not regular but is context-free, whereas~\SQ is not even context-free. Formally, the problems we consider are defined as follows:

\begin{problem}{\kpalh{} (resp. \ksqh)}\\
\textbf{Input:} A string $T$ of length $n$ and a positive integer~$k$.\\
\textbf{Output:} For each $1 \le i \le n$, report $\min\{k+1,hd_i\}$, where $hd_i$
is the minimum Hamming distance between $T[1\dd i]$ and a string in \PAL (resp. in \SQ{}).
\end{problem}

\begin{problem}{\kpale{} (resp. \ksqe)}\\
\textbf{Input:} A string $T$ of length $n$ and a positive integer~$k$.\\
\textbf{Output:} For each $1 \le i \le n$, report $\min\{k+1,ed_i\}$, where $ed_i$
is the minimum edit distance between $T[1\dd i]$ and a string in \PAL (resp. in \SQ{}).
\end{problem}

\begin{table}[htbp]
	\centering
	\begin{tabular}{|l|l|l|l|l|}
	 \hline
	 Problem&Model& Time per character & Space complexity & Reference\\
	 \hline\hline
	 \kpalh & Streaming & $O(k\log^3 n)$ & $O(k \log n)$ &Thm~\ref{th:pal_ham}\\
	 \ksqh & Streaming & $\Ot(k)$ & $O(k \log^2 n)$ & Thm~\ref{th:square_ham}\\
	 \kpalh/SQ& Read-only & $O(k \log n)$ & $O(k \log n)$ & Thms~\ref{thm:ro-pal-ham} and~\ref{thm:ro-sq-ham}\\
	 \hline
	 \kpale/\SQ& Streaming & $\Ot(k^2)$ & $\Ot(k^2)$ & Thms~\ref{th:pal_edit} and~\ref{th:streaming_ksqe}\\
	 \kpale/SQ&Read-only& $\Ot(k^4)$\hfill (amortised) & $\Ot(k^4)$ & Thms~\ref{thm:ro-pal-e} and~\ref{thm:ro-sq-e}\\
	 \hline
	\end{tabular}
	\caption{Summary of the complexities of the algorithms introduced in this work.}
	\label{table:complexities}
\end{table}

Amir and Porat~\cite{amir2014approximate} showed that there is a randomised streaming algorithm that solves the \kpalh{} problem in $\Ot(k)$ space and $\Ot(k^2)$ time per input character.\footnote{Hereafter, $\Ot(\cdot)$ hides factors polynomial in $\log n$.}
We continue their line of research and show streaming algorithms for all four problems that use $\poly(k,\log n)$ time per character and $\poly(k,\log n)$ space.
While streaming algorithms are extremely efficient (in particular, the space complexities above account for \emph{all} the space used by the algorithms, including the space needed to store information about the input), it is important to note that they are randomised in nature, which means that they may produce incorrect results with a certain probability (inverse polynomial in the input size $n$).
Motivated by this, we also study the problems in the read-only model, where random access to the input is allowed (and not accounted for in the space usage).
In this model, we show \emph{deterministic} algorithms for the four problems that use $\poly(k,\log n)$ time per character and $\poly(k,\log n)$ space; see \cref{table:complexities} for a summary.
As a side result of independent interest, we develop the first $\poly(k, \log n)$ space read-only algorithms for computing $k$-mismatch and $k$-edit occurrences of a pattern in a text.

Due to the lack of space, descriptions of the algorithms for the Language edit distance problems (\kpale and \ksqe) are omitted from this version of the paper, but can be found in the full one. 

\subsection{Related work.}
\textbf{Offline model.} In the classical \emph{offline} model, the problem of finding \emph{all maximal substrings} that are within Hamming distance $k$ from \PAL can be solved in $O(nk)$ time as a simple application of the kangaroo jumps technique~\cite{galil1986improved}. For the edit distance, Porto and Barbosa~\cite{porto2002finding} showed an $O(n k^2)$ solution. For the \SQ language, the best known solutions take $O(n k \log k + \mathrm{output})$ time for the Hamming distance~\cite{KOLPAKOV2003135} and $O(nk\log^2 k + \mathrm{output})$ for the edit distance~\cite{LS93,SBT07,SOKOL2014103}.

\noindent\textbf{Online model.} The problems \kpalh and \kpale can be viewed as a generalization of the classical online palindrome recognition problem (see~\cite{Galil1976RealtimeAF} and references therein).

\noindent\textbf{Streaming algorithms for \PAL and \SQ.} Berebrink et al.~\cite{BEM+14} followed by Gawrychowski et al.~\cite{GawrychowskiMSU19} studied the question of computing the length of a maximal substring of a stream that belongs to \PAL. Merkurev and Shur~\cite{MS22} considered a similar question for the \SQ language.

\section{Preliminaries}
\label{sec:prelim}
We assume to be given an alphabet $\Sigma$, the elements of which, called \textit{characters}, can be stored in a single machine word of the RAM model.
For an integer $n\geq 0$, we denote the set of all length-$n$ strings by $\Sigma^n$, and we set $\Sigma^{\le n} = \bigcup_{m=0}^n \Sigma^m$ as well as $\Sigma^* = \bigcup_{n= 0}^\infty \Sigma^n$. The empty string is denoted by $\eps$.

For two strings $S,T \in\Sigma^*$, we use $ST$ or $S\cdot T$ indifferently to denote their concatenation.
For an integer $m \ge 0$, the string obtained by concatenating $S$ to itself $m$ times is denoted by~$S^m$; note that $S^0=\eps$.
A string $S$ is a \emph{square} if there exists a string $T$ such that $S = T^2$.

For a string $T \in\Sigma^n$ and an index $i\in [1\dd n]$,\footnote{For integers $i,j\in \mathbb{Z}$, denote $[i\dd j] =
\{k \in \mathbb{Z} : i \le k \le j\}$, $[i
\dd j)=\{k \in \mathbb{Z} : i \le k <
j\}$, and $(i\dd j]={\{k \in \mathbb{Z}: i
< k \le j\}}$.} the $i$th character of $T$ is denoted by $T[i]$.
We use $|T| = n$ to denote the length of $T$.
For indices $1 \le i, j\le n$,
$T[i\dd j]$ denotes the substring $T[i] T[{i+1}]\cdots T[j]$ of~$T$ if $i\le j$
and the empty string otherwise.  
When $i=1$ or $j=n$, we omit these indices, i.e., we write $T[\dd j] = T[1\dd j]$
and $T[i\dd ] = T[i\dd n]$.
We extend the above notation with $T[i\dd j) = T[i\dd j-1]$
and $T(i\dd j] = T[i+1\dd j]$.
We say that a string $P$ is a \emph{prefix} of $T$ if there exists $j\in [1\dd n]$
such that $P = T[\dd j]$, and a \emph{suffix} of $T$ if there exists $i\in [1\dd n]$
such that $P = T[i\dd ]$.
We use $T^R$ to denote the reverse of $T$,
that is~$T^R = T[n]T[n-1]\cdots T[1]$.
A string $T$ is a \emph{palindrome} if $T^R = T$.

We define the \emph{forward cyclic rotation} $\rot(T) = T[2]\cdots T[n]T[1]$. In general, a cyclic rotation $\rot^s(T)$ with shift $s \in \mathbb{Z}$ is obtained by iterating $\rot$ or the inverse operation $\rot^{-1}$. A non-empty string $T\in \Sigma^n$ is \emph{primitive} if it is distinct from its non-trivial rotations, i.e., if~$T = \rot^s(T)$ holds only when $n$ divides~$s$.

Given two strings  $U,V$ and two indices $i \in [1\dd |U|]$ and $j \in [1\dd  |V|]$,
the \emph{longest common prefix} (\lcp) of $U[i\dd ]$ and $V[j\dd ]$, denoted $\lcp(U[i\dd ],V[j\dd ])$,
is the length of the longest string that is a prefix of both $U[i\dd ]$ and~$V[j\dd ]$.

Given two non-empty strings $U, Q$ and an operator $F$ defined over pairs of strings,
we use the notation $F(U, Q^\infty)$ for the application of $F$ to $U$ and the prefix of $Q^\infty = QQ\cdots$
that has the same length as $U$, i.e., $F(U, Q^\infty) = F(U, Q^m[\dd |U|])$,
where $m$ is any integer such that $|Q^m| \ge |U|$.
We define $F(Q^\infty, U)$ symmetrically.

\subsection{Hamming distance, palindromes, and squares}
The Hamming distance between two strings $S, T$ (denoted $\hd{S}{T}$) is defined to be equal
to infinity if $S$ and $T$ have different lengths,
and otherwise to the number of positions where the two strings differ (mismatches).
We define the \emph{mismatch information} between two length-$n$ strings $S$ and $T$,
$\MI(S,T)$  as the set $\{(i, S[i], T[i]) : i\in [1\dd n]\text{ and } S[i] \neq T[i]\}$.
For two strings $P, T$, a position $i\in [|P|\dd |T|]$ of $T$ is a \textit{$k$-mismatch occurrence}
of $P$ in~$T$ if~$\hd{T(i-|P|\dd i]}{P} \le k$. For an integer $k$, we denote $\hdk{X}{Y}{k}= \hd{X}{Y}$ if $\hd{X}{Y}\le k$ and $\infty$ otherwise.

Due to the self-similarity of palindromes and squares,
the Hamming distance from a string $U$ to \PAL and \SQ can be measured in terms of the self-similarity of $U$.

\begin{restatable}{property}{propkpalham}\label{prop:kpalham}
    Each string $U \in \Sigma^m$ satisfies $\hd{U}{\PAL}= \hd{U[ \dd \lfloor{m/2}\rfloor]}{U(\lceil{m/2}\rceil\dd]^R} = \frac12\hd{U}{U^R}$.
\end{restatable}
\begin{proof}
    Denote $U_1=U[ \dd \lfloor{m/2}\rfloor]$ and $U_2=U(\lceil{m/2}\rceil\dd ]$.
    For the second equality, we have $\hd{U}{U^R} = \hd{U_1}{U_2^R} + \hd{U_2}{U_1^R} = 2\cdot \hd{U_1}{U_2^R}$.
   
    The first equality is equivalent to $\hd{U_1}{U_2^R} = \hd{U}{\PAL}$. As the Hamming distance between $U$ and the palindrome $U_2^RU_2$
    (or $U_2^RaU_2$ if $m$ is odd) is $\hd{U_1}{U_2^R}$, we have~$\hd{U_1}{U_2^R} \geq \hd{U}{\PAL}$.

    Conversely, let $V$ be a palindrome such that $\hd{U}{V} = \hd{U}{\PAL}$.
    We decompose similarly~$V$ into $V_1V_1^R$ (or $V_1bV_1^R$
    for odd $m$)   and obtain $\hd{U}{V} \ge \hd{U_1}{V_1} + \hd{U_2}{V_1^R}$.
    Using the fact that $\hd{U_2}{V_1^R} = \hd{U_2^R}{V_1}$ and applying the triangle inequality,
    we get~$\hd{U_1}{U_2^R}\le \hd{U}{\PAL}$.
\end{proof}

\begin{restatable}{property}{propksqham}\label{prop:k-sq-ham}
    Each string $U\in \Sigma^m$ satisfies $\hd{U}{\SQ}=\hd{U[\dd m/2]}{U(m/2\dd ]}$ if $m$ is even
    and $\hd{U}{\SQ} = \infty$ if $m$ is odd.
\end{restatable}
\begin{proof}
    Every square has even length; hence, if $m$ is odd, the distance between $U$
    and \SQ is infinite. In what follows, we assume that $m = 2i$ for some $i\in\NN$.
    Let $U_1 = U[\dd i]$ and $U_2 = U(i\dd ]$.
    By modifying the copy of $U_1$ in $U$ into $U_2$, we obtain a square $U_2U_2$; hence, $\hd{U}{\SQ} \le \hd{U_1}{U_2}$.

    For the converse inequality, let $V^2$ be a square such that $\hd{U}{\SQ} = \hd{U}{V^2}$.
    We have $|V| = |U_1| = |U_2|$; hence, $\hd{U}{V^2} = \hd{U_1}{V} + \hd{V}{U_2}$.
    Applying the triangle inequality,
    we obtain $\hd{U}{\SQ} = \hd{U}{V^2} \ge \hd{U_1}{U_2}$.
\end{proof}

\subsection{Models of computation}
In this work, we focus on two by now classical models of computation: streaming and read-only random access.
In the streaming model, we assume that the input string $T$ arrives as a stream, one character at a time. For each prefix $T[1\dd i]$, we must report the distance to \PAL or \SQ as soon as we receive $T[i]$. We account for all the space used, including the space needed to store any information about $T$.
In contrast, in the read-only model, we do not account for the space occupied by the input string. We assume that $T$ is read from the left to the right. After having read $T[1\dd i]$, we assume to have constant-time read-only random access to the first~$i$ characters of $T$. Similar to the streaming model, the distance from $T[1\dd i]$ to \PAL or \SQ must be reported as soon as we read $T[i]$. 

\section{Warm-up: Streaming algorithms for the LHD problems}\label{sec:ham}
In this section, we present streaming algorithms for \kpalh and \ksqh.
Our solutions use the Hamming distance sketches introduced by Clifford, Kociumaka, and Porat~\cite{clifford2019streaming}
to solve the streaming~$k$-mismatch problem.

\begin{fact}\label{fact:ham_sketch}
    There exists a function $\sk_k$ (parameterized by a constant $c>1$, integers $n \ge k \ge 1$, and a seed of $\Oh(\log n)$ random bits)
    that assigns an $\Oh(k\log n)$-bit sketch to each string in~$\Sigma^{\le n}$. Moreover:
    \begin{enumerate}
        \item There is an $\Oh(k\log^2 n)$-time \emph{encoding} algorithm that, given $U\in \Sigma^{\le k}$, builds $\skk{U}$.
        \item There is an $\Oh(k\log n)$-time algorithm that, given any two among $\skk{U}, \skk{V}$, or $\skk{UV}$, computes the third one (provided that $|UV|\le n$).
        \item There is an $O(k \log^3 n)$-time \emph{decoding} algorithm that, given $\skk{U}$ and $\skk{V}$, computes $\MI(U,V)$ if $\hd{U}{V}\le k$. The error probability is $O(n^{-c})$.
    \end{enumerate}
\end{fact}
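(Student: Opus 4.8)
This is, in essence, the $k$-mismatch sketch of Clifford, Kociumaka, and Porat~\cite{clifford2019streaming}, so the plan is to exhibit its construction and verify the three properties. I would fix a prime $q=\Theta(n^{c+2})$, let the $\Oh(\log n)$-bit seed specify two independent uniformly random elements $r_0,r_1\in\FF_q$, and identify $\Sigma$ with a subset of $\FF_q$ (if $|\Sigma|$ is super-polynomial, first hash $\Sigma\to\FF_q$ at random; this is injective on the $\le n$ distinct input characters except with probability $\Oh(n^{-c})$). For $S\in\Sigma^{\le n}$ I set
\[
  \skk{S}=\Bigl(|S|,\ \Phi(S),\ \bigl(\tau_t(S)\bigr)_{t=0}^{2k-1},\ \bigl(\tau'_t(S)\bigr)_{t=0}^{2k-1}\Bigr),\quad
  \Phi(S)=\sum_{i=1}^{|S|}S[i]\,r_0^{i},\quad
  \tau_t(S)=\sum_{i=1}^{|S|}S[i]\,r_1^{it},\quad
  \tau'_t(S)=\sum_{i=1}^{|S|}S[i]^2\,r_1^{it},
\]
which is $\Oh(k)$ field elements, i.e.\ $\Oh(k\log n)$ bits; note that $r_1^1,\dots,r_1^n$ are pairwise distinct except with probability $\Oh(n^{-c})$, which I assume henceforth.

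\emph{Properties 1 and 2.} For concatenation, $\Phi(UV)=\Phi(U)+r_0^{|U|}\Phi(V)$ and $\tau_t(UV)=\tau_t(U)+r_1^{|U|t}\tau_t(V)$ (and likewise for $\tau'$), so each of $\skk{U},\skk{V},\skk{UV}$ is an invertible affine function of the other two — the scalars $r_0^{|U|},r_1^{|U|t}$ follow from the stored lengths — giving the combine operation in $\Oh(k)$ field operations. For the encoding, when $|U|\le k$ the fingerprint $\Phi(U)$ is computed by Horner's rule in $\Oh(k)$ operations, and $(\tau_t(U))_{t<2k}$ is the evaluation of the degree-$<k$ polynomial $\sum_i U[i]x^i$ at the geometric progression $1,r_1,\dots,r_1^{2k-1}$, which a chirp-$z$ (Bluestein) transform performs in $\Oh(k\log k)$ field operations; altogether $\Oh(k\log^2 n)$ time.

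\emph{Property 3.} Given $\skk{U},\skk{V}$ with $|U|=|V|$, put $\Delta=\{i:U[i]\neq V[i]\}$ and $e_i=U[i]-V[i]$, so that $\tau_t(U)-\tau_t(V)=\sum_{i\in\Delta}e_i\,(r_1^{i})^{t}$ for $t=0,\dots,2k-1$. When $\hd{U}{V}\le k$ we have $|\Delta|\le k$, and these $2k$ ``syndromes'' determine $\Delta$ and $(e_i)$ uniquely: I would run Berlekamp--Massey to obtain the error locator $\Lambda(x)=\prod_{i\in\Delta}(1-r_1^{i}x)$, find its roots to recover $\Delta\subseteq[1\dd n]$, solve a Vandermonde system for $(e_i)$, and use the analogous $\tau'$-system to get $U[i]+V[i]$ and hence both $U[i]$ and $V[i]$ — i.e.\ $\MI(U,V)$. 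Finally I verify: by linearity of $\Phi$, the recovered mismatches must satisfy $\Phi(U)=\Phi(V)+\sum_{i\in\Delta}(U[i]-V[i])\,r_0^{i}$; if not, I report $\hd{U}{V}>k$ instead. Since $r_0$ is independent of the syndromes (hence of what was recovered), a wrong output forces $r_0$ to be a root of a fixed nonzero polynomial of degree $<n$, which happens with probability $\Oh(n/q)=\Oh(n^{-c})$.

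\emph{Main obstacle.} The part that needs genuine work, and for which I would invoke~\cite{clifford2019streaming} rather than redo it, is the decoding's efficiency: a Chien search for the roots of $\Lambda$ costs $\Oh(n)$, far above an $\Oh(k\,\polylog n)$ budget, while a naive ``binary search on the mismatch position'' would need sketch data at $\Oh(\log n)$ scales and thus $\Oh(k\log^2 n)$ bits. Locating the $\le k$ nonzero coordinates of a vector in $\FF_q^{\,n}$ in $\Oh(k\,\polylog n)$ time from an $\Oh(k\log n)$-bit sketch — equivalently, organising the above as a sparse-recovery structure whose buckets still behave well under concatenation — together with the fast Berlekamp--Massey needed to meet the time bound, is exactly the content of~\cite{clifford2019streaming}; with those components the decoding runs in $\Oh(k\log^3 n)$ time and the Fact follows.
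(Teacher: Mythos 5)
The paper does not prove this statement at all: it is imported verbatim as a black box from Clifford, Kociumaka, and Porat~\cite{clifford2019streaming}, so any comparison is between your reconstruction and the construction in that reference. Your overall architecture (syndrome-style sketch, Berlekamp--Massey for the error locator, a second moment family to disentangle $U[i]$ from $V[i]$, and an independent Karp--Rabin fingerprint to certify that the decoded mismatch set is genuine) is the right one and matches the cited work in spirit, and deferring the fast sparse-recovery/root-finding machinery to the citation is legitimate.

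However, the concrete sketch you wrote down has a defect that is more than an efficiency issue. You chose geometric locators, $\tau_t(S)=\sum_i S[i]\,r_1^{it}$, so the error locator polynomial is $\Lambda(x)=\prod_{i\in\Delta}(1-r_1^{i}x)$ and its roots are $r_1^{-i}$. Even granting an $\Oh(k\polylog n)$-time root finder, recovering the \emph{positions} $i\in[1\dd n]$ from these roots requires discrete logarithms base $r_1$ in $\FF_q^*$ with $q=\Theta(n^{c+2})$, which is not computable in $\poly(k,\log n)$ time (and a table of $n$ discrete logs is not available in $\Oh(k\log n)$ bits). The actual construction uses polynomial locators $\tau_t(S)=\sum_i S[i]\,i^{t}$, so the roots of $\Lambda$ directly encode the positions; the price is that your one-line concatenation identity $\tau_t(UV)=\tau_t(U)+r_1^{|U|t}\tau_t(V)$ is replaced by a Taylor shift $\sum_j V[j](|U|+j)^t=\sum_{s\le t}\binom{t}{s}|U|^{t-s}\tau_s(V)$, which must be done with fast polynomial shift techniques to stay within the stated $\Oh(k\log n)$ combine time rather than the naive $\Oh(k^2)$. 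So your choice of locators buys a trivial combine at the cost of making decoding infeasible; the reference makes the opposite trade. Since you ultimately invoke~\cite{clifford2019streaming} for the decoding anyway, this does not sink the citation-based argument, but the sketch as you defined it does not satisfy Property~3.
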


\subsection{A streaming algorithm for \texorpdfstring{\kpalh}{k-LHD-PAL}}
We first show that the sketches described in~\cref{fact:ham_sketch} give a simple algorithm improving upon the result of Amir and Porat~\cite{amir2014approximate} and achieving the time complexity of $\Ot(k)$ per letter.

\begin{restatable}{theorem}{thmpalham}\label{th:pal_ham}
    There is a randomised streaming algorithm that solves the \kpalh{} problem for a string $T \in \Sigma^n$ using $\Oh(k\log n)$ bits of space and $O(k\log^3 n)$ time per character. The algorithm errs with probability inverse-polynomial in $n$.
\end{restatable}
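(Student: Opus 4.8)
The key identity is \cref{prop:kpalham}: $\hd{U}{\PAL}=\tfrac12\hd{U}{U^R}$, so computing the distance from every prefix $T[1\dd i]$ to \PAL amounts to computing $\tfrac12\hdk{T[1\dd i]}{T[1\dd i]^R}{2k}$ for each $i$. The plan is to maintain, as characters arrive, a Hamming sketch of the current prefix and to be able to recover a sketch of its reverse, then decode the mismatch information between the two and divide the count by two. The subtlety is that the reverse of a prefix changes completely when a new character arrives (a character prepended, not appended), so we cannot directly update $\sk_k(T[1\dd i]^R)$ from $\sk_k(T[1\dd i-1]^R)$ via the composition property of \cref{fact:ham_sketch}, which only concatenates on the right.

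\textbf{Handling the reverse via blocking.} To get around this, I would process the stream in blocks and use the following observation: if we split $T[1\dd i]$ at a position $p$, say $T[1\dd i] = X\cdot Y$ with $X = T[1\dd p]$ and $Y = T[p+1\dd i]$, then $T[1\dd i]^R = Y^R X^R$. So $\sk_k(T[1\dd i]^R)$ can be assembled from $\sk_k(Y^R)$ and the \emph{fixed} sketch $\sk_k(X^R)$ using the composition operation. The sketch $\sk_k(X^R)$ only needs to be recomputed once per block (via the encoding algorithm applied to the reversed block contents, or incrementally as the block is built), and within a block of length at most $k$ we can afford to recompute $\sk_k(Y^R)$ from scratch in $\Oh(k\log^2 n)$ time whenever a character is appended, because $|Y^R|\le k$. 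Thus at each step $i$ we hold $\sk_k(T[1\dd i])$ (maintained by right-concatenation, constant amortised or $\Oh(k\log n)$ worst-case per character) and can produce $\sk_k(T[1\dd i]^R)$, then invoke the decoding algorithm (\cref{fact:ham_sketch}(3)) with threshold $2k$ to get $\MI(T[1\dd i],T[1\dd i]^R)$, whose cardinality is $\hd{T[1\dd i]}{T[1\dd i]^R}$; we output half of that, capped at $k+1$.

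\textbf{Parameter bookkeeping and error probability.} A couple of points need care. First, the sketch threshold must be $2k$ rather than $k$, since $\hd{U}{U^R}$ can be up to $2k$ when $\hd{U}{\PAL}=k$; this only changes the constants in the space and time bounds: $\Oh(k\log n)$ space, $\Oh(k\log^3 n)$ decoding time, consistent with the claimed bounds. Second, when $i$ is odd the middle character of $T[1\dd i]$ contributes a trivial zero to $\hd{U}{U^R}$ and the halving is still exact, so no special case is needed; I would nevertheless spell this out. Third, to decide whether the distance exceeds $k$, the decoding fails gracefully (returns ``$\ge 2k+1$'' or an error) whenever $\hd{U}{U^R}>2k$, and we report $k+1$ in that case. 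Fourth, the decoding algorithm is run once per character and errs with probability $\Oh(n^{-c})$; a union bound over the $n$ prefixes keeps the total error probability $\Oh(n^{1-c})$, inverse-polynomial for $c>1$ (or we pick $c$ slightly larger).

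\textbf{Main obstacle.} The only real obstacle is the reverse-direction update, resolved by the block decomposition above; everything else is routine composition and decoding of the sketches of \cref{fact:ham_sketch}. I would double-check the per-character time: appending to $\sk_k(T[1\dd i])$ costs $\Oh(k\log n)$, rebuilding $\sk_k(Y^R)$ inside a block costs $\Oh(k\log^2 n)$, composing with $\sk_k(X^R)$ costs $\Oh(k\log n)$, and decoding costs $\Oh(k\log^3 n)$, so the bottleneck is decoding and the bound $\Oh(k\log^3 n)$ per character holds; refreshing $\sk_k(X^R)$ at block boundaries costs $\Oh(k\log^2 n)$ once every $\Theta(k)$ characters, which is negligible. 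Total space is $\Oh(\log n)$ words for the seed plus $\Oh(1)$ sketches of $\Oh(k\log n)$ bits each, i.e.\ $\Oh(k\log n)$ bits.
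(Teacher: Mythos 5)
Your proof is correct and follows essentially the same route as the paper: reduce via \cref{prop:kpalham} to computing $\hd{T[\dd i]}{T[\dd i]^R}$, maintain $\sk_{2k}$ sketches of the current prefix and of its reverse, decode, and halve. The one divergence—your block decomposition for maintaining the reverse sketch—is a workaround for a limitation that does not exist: item 2 of \cref{fact:ham_sketch} computes $\skk{UV}$ from $\skk{U}$ and $\skk{V}$ with no left/right asymmetry, so prepending $T[i]$ is the single composition of $\sk_{2k}(T[i])$ with $\sk_{2k}(T[\dd i-1]^R)$ (new character as the left operand), which is exactly what the paper does.
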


Using Property~\ref{prop:kpalham}, we can reduce the \kpalh{} problem
to that of computing the threshold Hamming distance between the current prefix of the input string and its reverse.
The algorithm maintains the sketches $\sk_{2k}(T[\dd i])$ and $\sk_{2k}(T[\dd i]^R)$. When it receives~$T[i]$, it constructs $\sk_{2k}(T[i])$,
updates both $\sk_{2k}(T[\dd i])$ and $\sk_{2k}(T[\dd i]^R)$, and computes $d = \hdk{T[\dd i]}{T[\dd i]^R}{2k}$ (in $O(k \log^3 n)$ total time by \cref{fact:ham_sketch}). \Cref{prop:kpalham} implies $\hdk{T[\dd i]}{\PAL}{k} = d/2$.
The error probability of the algorithm follows from the error probability for the decoding algorithm for Hamming distance sketches.

The algorithm uses $\Oh(k \log n)$ bits, which is nearly optimal:
Indeed, by Property~\ref{prop:kpalham}, if~$U = VW$, with $|V| = |W|$,
then $\hd{U}{U^R} = 2\cdot \hd{V}{W^R}$.
Therefore, using a standard reduction from streaming algorithms
to one-way communication complexity protocols,
we obtain a lower bound of $\Omega(k)$ bits for the space complexity
of streaming algorithms for the \kpalh problem
from the $\Omega(k)$ bits lower bound for the communication complexity of the Hamming distance~\cite{huang2006communication}.

\subsection{A streaming algorithm for \texorpdfstring{\ksqh}{k-LHD-SQ}}\label{sec:ksqh}
In this section, we show the following theorem:

\begin{restatable}{theorem}{thmsquareham}\label{th:square_ham}
    There is a randomised streaming algorithm that solves the \ksqh{} problem for a
    string $T \in \Sigma^n$ using $\Oh(k\log^2 n)$ bits of space and $\Ot(k)$ time per character.
    The algorithm errs with probability inverse-polynomial in $n$.
\end{restatable}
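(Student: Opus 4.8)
The plan is to turn \ksqh{} into a self-referential $k$-mismatch matching problem and to solve it with the Hamming sketches of \cref{fact:ham_sketch}. By \cref{prop:k-sq-ham}, nontrivial answers occur only at even positions, and at position $j=2a$ we must output $\hdk{T[1\dd a]}{T(a\dd 2a]}{k}$; equivalently, for every $a$ we must decide whether the prefix $T[1\dd a]$ has a $k$-mismatch occurrence starting at position $a+1$ and, if so, recover the associated mismatch information. The key observation is that if at time $2a$ we held both $\skk{T[1\dd 2a]}$ and $\skk{T[1\dd a]}$, then \cref{fact:ham_sketch} would yield $\skk{T(a\dd 2a]}$ and then $\MI(T[1\dd a],T(a\dd 2a])$ (when the distance is at most $k$) in $\Ot(k)$ time, which is all we need. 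The sketch $\skk{T[1\dd 2a]}$ comes for free from a single running sketch of the stream; the real difficulty is $\skk{T[1\dd a]}$, the sketch of a prefix completed $a=\Theta(j)$ characters ago, which we cannot afford to store at every position.

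To cope, I would split the output positions into $\Oh(\log n)$ length scales, scale $t$ being responsible for the even positions $2a$ with $a\in[2^{t-1}\dd 2^t)$; this decomposition is the source of the extra $\log n$ factor in the space bound. A subroutine $\Alg_t$ is launched at time $2^{t-1}$, where it copies the current running sketch $\skk{T[1\dd 2^{t-1}]}$, and from then on it maintains its own running sketch $\skk{T(2^{t-1}\dd j]}$, so that $\skk{T[1\dd j]}$ --- in particular $\skk{T[1\dd 2a]}$ --- is available to it at every later time via one sketch combination. Thus, within each scale, the only missing ingredient remains $\skk{T[1\dd a]}$ for $a$ in that scale's range, needed at time $2a$.

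The crux is a structural statement about the positions still \emph{alive} at a time $j$ in the lifetime of $\Alg_t$, namely $A_j=\{a\in[\lceil j/2\rceil\dd 2^t)\ :\ \hd{T[1\dd j-a]}{T(a\dd j]}\le k\}$, which (since $j-a\le a$) is a necessary condition for $\hd{T[1\dd a]}{T(a\dd 2a]}\le k$. A standard periodicity argument shows that whenever $A_j$ contains many positions in a short interval, the corresponding stretch of $T$ must be approximately periodic with a short approximate period; consequently $A_j$ decomposes into $\Oh(\log n)$ arithmetic progressions, each described --- together with its approximate period and the $\Oh(k)$ positions at which that period is violated --- in $\Oh(k\log n)$ bits, and this description changes in a controlled way when a single character is appended. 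For each progression the subroutine additionally stores one endpoint sketch $\skk{T[1\dd a_0]}$; since $T(a_0\dd a]$ differs from a power of a rotation of the period in only $\Oh(k)$ known positions, any $\skk{T[1\dd a]}$ with $a$ in the progression can be recomputed in $\Ot(k)$ time using the operations of \cref{fact:ham_sketch} (sketching a periodic string by repeated combination, then applying the recorded corrections). Feeding this $\skk{T[1\dd a]}$ and the running sketch $\skk{T[1\dd 2a]}$ into \cref{fact:ham_sketch} produces $\hdk{T[1\dd a]}{T(a\dd 2a]}{k}$ at time $2a$ in $\Ot(k)$ time, and since only $\Oh(1)$ scales are active at once, the total space is $\Oh(k\log^2 n)$ bits.

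I expect the main obstacle to be the structural lemma together with its streaming maintenance: one must update the compressed description of $A_j$ under each new character in $\Ot(k)$ time, correctly split or retire arithmetic progressions as mismatches accumulate, and keep the endpoint sketches and stored approximate periods consistent --- all without re-reading past characters. The remaining ingredients are comparatively routine: prefixes with $a=\Oh(\poly(k,\log n))$ are handled by a direct buffered computation, the handoff between consecutive scales is local, and the $\Oh(n^{-c})$ per-decoding error of \cref{fact:ham_sketch} is absorbed by a union bound over the $\Oh(n\,\polylog n)$ decodings after choosing the constant $c$ large enough.
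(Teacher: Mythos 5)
Your outline is viable and shares the paper's skeleton---reduce via \cref{prop:k-sq-ham} to comparing $T[1\dd a]$ with $T(a\dd 2a]$, filter candidate midpoints, work in $\Oh(\log n)$ length scales, and finish with one sketch combination and decoding at time $2a$---but it diverges in how the candidate set is compressed and how the timing is arranged. The paper filters with a \emph{fixed} prefix $P_j$ per scale (\cref{obs:ksqh}, \cref{claim:streaming-k-sq-occ}) and outsources the entire periodicity-based compression of the surviving positions, together with the sketches $\skk{T[\dd i-|P_j|]}$ attached to them, to the black-box $k$-mismatch stream of Clifford--Kociumaka--Porat (\cref{fact:encode-ham-occ}); the only genuinely new ingredient is then the scheduling problem---the occurrence at $i+\ell_j$ must surface exactly when $T[2i]$ arrives---which is solved by the non-decreasing-delay construction of \cref{lemma:variable-delay}. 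You instead maintain the evolving set $A_j$ of alive midpoints directly (occurrences of \emph{growing} prefixes rather than of one fixed prefix), which eliminates the delay issue by construction but forces you to re-derive the compressed representation from scratch. Be aware that the step you flag as ``the main obstacle'' is not a routine detail: the decomposition of $A_j$ into $\Oh(\log n)$ arithmetic progressions (bucketing by dyadic prefix length and invoking the dichotomy of \cref{fact:ham-k-occ} within each bucket), its maintenance under appending a character, and the reconstruction of $\skk{T[1\dd a]}$ from a stored endpoint sketch plus the approximate period and its $\Oh(k)$ violations, together constitute essentially the full machinery of the streaming $k$-mismatch algorithm of~\cite{clifford2019streaming}. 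These claims are all true and provable, so your route would work if fleshed out, but as written it asserts rather than establishes the hardest component; the paper's route exists precisely to avoid re-proving it, at the price of the extra delay-scheduling lemma. One further small point: your union bound over decodings is fine, but note that the paper additionally observes that since all $P_j$ are prefixes of $T$ and each text position lies in $\Oh(1)$ windows $T_j$, only $\Oh(1)$ scale instances are simultaneously active, which is what keeps the per-character time at $\Ot(k)$ rather than $\Ot(k\log n)$; your plan implicitly needs the same observation.
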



Property~\ref{prop:k-sq-ham} allows us to derive $\hdk{T[\dd 2i]}{\SQ}{k}$ from the sketches $\skk{T[\dd i]}$
and~$\skk{T[\dd 2i]}$: we can combine them to obtain $\skk{T(i\dd 2i]}$,
and a distance computation on $\skk{T[\dd i]}$ and $\skk{T(i\dd 2i]}$
returns $\hdk{T[\dd i]}{T(i\dd 2i]}{k}=\hdk{T[\dd 2i]}{\SQ}{k}$.

Naively applying this procedure requires storing the sketch $\skk{T[\dd i]}$
until the algorithm has read $T[\dd 2i]$, that is, storing $\Theta(n)$ sketches at the same time.
To reduce the number of sketches stored, we use a filtering procedure based on the following observation:
\begin{observation}\label{obs:ksqh}
If $\hd{T[\dd 2i]}{\SQ}\le k$ and $\ell\in [1\dd i]$, then $i+\ell$ is a~$k$-mismatch occurrence of $T[\dd \ell]$,
that is, $\hd{T[\dd \ell]}{T(i\dd i+\ell]}\le k$.
\end{observation}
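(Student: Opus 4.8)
The plan is to reduce to the even-length case of \cref{prop:k-sq-ham} and then exploit that the Hamming distance is computed position by position, so that passing from an aligned pair of strings to a common prefix of that pair can only decrease the number of mismatches.

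First I would note that $2i$ is even, so \cref{prop:k-sq-ham} gives $\hd{T[\dd i]}{T(i\dd 2i]} = \hd{T[\dd 2i]}{\SQ} \le k$; equivalently, the alignment matching position $j$ of $T[\dd i]$ against position $i+j$ of $T$, for $j\in[1\dd i]$, has at most $k$ mismatches. Now fix $\ell\in[1\dd i]$. The strings $T[\dd \ell]$ and $T(i\dd i+\ell]$ are precisely the length-$\ell$ prefixes of $T[\dd i]$ and $T(i\dd 2i]$, and comparing them amounts to restricting the same alignment to $j\in[1\dd \ell]$. Hence, after the obvious identification of indices, $\MI(T[\dd \ell], T(i\dd i+\ell]) \subseteq \MI(T[\dd i], T(i\dd 2i])$, and therefore $\hd{T[\dd \ell]}{T(i\dd i+\ell]} \le \hd{T[\dd i]}{T(i\dd 2i]} \le k$, which is exactly the claim.

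There is essentially no real obstacle here; the only thing to be careful about is bookkeeping the offset $i$ so that the ``prefix of an aligned pair'' argument is applied to the correct pair of strings. If one prefers to avoid invoking \cref{prop:k-sq-ham}, an equivalent route is to take a square $V^2$ with $|V|=i$ realising $\hd{T[\dd 2i]}{V^2}\le k$, split its mismatches across the two halves as $\hd{T[\dd i]}{V} + \hd{T(i\dd 2i]}{V} \le k$, restrict both halves to the length-$\ell$ prefix $V[\dd \ell]$, and conclude with the triangle inequality $\hd{T[\dd \ell]}{T(i\dd i+\ell]} \le \hd{T[\dd \ell]}{V[\dd \ell]} + \hd{V[\dd \ell]}{T(i\dd i+\ell]} \le k$.
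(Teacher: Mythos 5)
Your proof is correct and matches the paper's intended reasoning: the paper states this as an observation without a formal proof, relying exactly on \cref{prop:k-sq-ham} to rewrite $\hd{T[\dd 2i]}{\SQ}$ as $\hd{T[\dd i]}{T(i\dd 2i]}$ and on the fact that restricting an aligned pair to a common prefix can only discard mismatches. Your alternative route via a witnessing square $V^2$ and the triangle inequality is also valid but unnecessary here.
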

\begin{example}
For $k = 1, \ell = 2$, and $i = 3$, the word $T[\dd 6] = \mathtt{abc\underline{ac}c}$ is a $1$-mismatch square
(by \cref{prop:k-sq-ham}) and the fragment $T(3\dd 5] = \mathtt{ac}$ is a 1-mismatch occurrence of the prefix $T[\dd 2]= \mathtt{ab}$.
\end{example}

\Cref{obs:ksqh} motivates our filtering procedure:
if we choose some prefix $P = T[\dd \ell]$ of the string,
we only need to store every $i \ge \ell$ such that
$i+\ell$ is a $k$-mismatch occurrence of $P$. Clifford, Kociumaka and Porat~\cite{clifford2019streaming} showed a data structure $\Ss$
that exploits the structure of such occurrences and stores them using $O(k\log^2 n)$ bits of space while allowing reporting the occurrence at position $i+\ell$ when $T[i+\ell+\Delta]$ is pushed into $\Ss$
-- we say that $\Ss$ reports the $k$-mismatch occurrences of $P$ in $T$ with \emph{a fixed delay} $\Delta$~\cite{clifford2019streaming}.
Our algorithm needs to receive the occurrence at position $i+\ell$ when $T[2i]$ is pushed into the stream,
i.e. we require $\Ss$ to report occurrences with a \textit{non-decreasing} delay. In~\cref{sec:incr-delay} we present a modification of the data structure~\cite{clifford2019streaming} to allow non-decreasing delays, and in \cref{sec:alg-k-sq-h} we explain how we use it to implement a space-efficient streaming algorithm for \ksqh.

\subsubsection{Reporting $k$-mismatch occurrences with nondecreasing delay.}\label{sec:incr-delay}

The algorithm of Clifford, Kociumaka, and Porat~\cite{clifford2019streaming}
reports additional information along with the positions of the $k$-mismatch occurrences:
specifically, it produces the \emph{stream of $k$-mismatch occurrences of $P$ in $T$},
defined as follows.
\newcommand{\kstream}{S_P^k}
\newcommand{\kstreamj}{S_{P_j}^k}
\begin{definition}[{\cite[Definition 3.2]{clifford2019streaming}}]
    The stream of $k$-mismatch occurrences of a pattern~$P$ in a text $T$ is a sequence $\kstream$ such that $\kstream[i] =  (i,\;\MI(T(i-|P|\dd i],P),\;\skk{T[\dd i-|P|]})$ if $\hd{P}{T(i-|P|\dd i]} \le k$ and $\kstream[i]=\bot$ otherwise.
\end{definition}
As explained next, the algorithm of~\cite{clifford2019streaming} can report the $k$-mismatch occurrences
with a prescribed delay.

\newcommand{\Dd}{\mathcal{D}}
\newcommand{\Bb}{\mathcal{B}}
\begin{corollary}[of {\cite{clifford2019streaming}}]\label{fact:encode-ham-occ}
    There is a streaming algorithm that, given a pattern $P$ followed by a text $T \in \Sigma^n$,
    reports the $k$-mismatch occurrences of $P$ in $T$
    using $O(k\log^2 n)$ bits of space and $O(\sqrt{k\log^3 n}+\log^4 n)$ time per character.
    The algorithm can report each occurrence $i$ with no delay (that is, upon receiving $T[i]$)
    or with any prescribed delay $\Delta = \Theta(|P|)$ (that is, upon receiving $T[i+\Delta]$).
    For each reported occurrence~$i$, the underlying tuple $\kstream[i]$ can be provided on request in $\Oh(k\log^2 n)$ time.
\end{corollary}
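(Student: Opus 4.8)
The plan is to use, as a black box, the streaming $k$-mismatch algorithm of Clifford, Kociumaka, and Porat~\cite{clifford2019streaming} together with its compact occurrence store $\Ss$, and to wrap it in a thin buffering layer. Recall that, after reading $P$ and then a prefix $T[\dd i]$, their algorithm uses $\Oh(k\log^2 n)$ bits and $\Oh(\sqrt{k\log^3 n}+\log^4 n)$ time per character, detects whether $i$ is a $k$-mismatch occurrence of~$P$, and is built to emit the stream $\kstream$ defined above; for every detected occurrence it maintains, and can provide on request in $\Oh(k\log^2 n)$ time, the associated tuple $\kstream[i]=(i,\,\MI(T(i-|P|\dd i],P),\,\skk{T[\dd i-|P|]})$, releasing the occurrence position with a constant delay. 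For the no-delay variant I would run their algorithm unchanged: occurrence $i$ is detected upon reading $T[i]$, so it can be announced right away, and $\kstream[i]$ answered on request within the claimed bound.

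For a prescribed delay $\Delta=\Theta(|P|)$, the task is to postpone announcing each detected occurrence $i$ until $T[i+\Delta]$ arrives, while storing the pending occurrences (there may be $\Theta(|P|)$ of them) in $\Oh(k\log^2 n)$ bits. I would modify $\Ss$ so that, when it would flush an occurrence after its constant delay, it instead inserts the occurrence into a compact buffer and releases it only once the stream has advanced by $\Delta$. What makes this affordable is the periodicity dichotomy already exploited in~\cite{clifford2019streaming}: within any text window of length $\Theta(|P|)$ the $k$-mismatch occurrences of $P$ either number $\Oh(k)$ or form a single arithmetic progression whose difference $p$ is a period of $P$, in which case the corresponding stretch of $T$ is $p$-periodic up to $\Oh(k)$ mismatches. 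Hence at every moment the pending set is a union of $\Oh(k)$ clusters, each stored either as a short explicit list of tuples or as an arithmetic progression described by its first element, its difference $p$, its length, and the tuple $\kstream[\cdot]$ of its first element --- $\Oh(k\log^2 n)$ bits in total, on top of the black-box algorithm. Releasing an occurrence merely announces the smallest pending position once the stream passes $i+\Delta$, so the per-character overhead is $\Oh(\polylog n)$ and is absorbed into the stated time bound.

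It remains to answer a request for $\kstream[i]$ in $\Oh(k\log^2 n)$ time in the delayed setting, where the occurrence was flushed from $\Ss$ long before it is announced, so its tuple must be recovered from the cluster that stores it. For a short explicit cluster this is immediate; for an arithmetic-progression cluster, say $i=i_0+jp$ with $i_0$ its first element and $B$ the length-$p$ period block of the corresponding periodic stretch of $T$, the mismatch information $\MI(T(i-|P|\dd i],P)$ is obtained from the stored $\MI(T(i_0-|P|\dd i_0],P)$ by shifting its periodic part and patching in the $\Oh(k)$ aperiodic corrections, in $\Oh(k)$ time; and $\skk{T[\dd i-|P|]}$ equals $\skk{T[\dd i_0-|P|]}$ combined with $\skk{B^j}$, which I would compute by repeated doubling of $\skk{B}$ using the combination operation of \cref{fact:ham_sketch}, i.e.\ $\Oh(\log n)$ combinations at $\Oh(k\log n)$ time each, $\Oh(k\log^2 n)$ in total. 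Correctness, with inverse-polynomial error probability, is inherited from~\cite{clifford2019streaming} and from the guarantees of $\sk_k$ in \cref{fact:ham_sketch}.

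The step I expect to be the main obstacle is the second one --- keeping the pending buffer within $\Oh(k\log^2 n)$ bits for a delay as large as $\Delta=\Theta(|P|)$. This rests entirely on the periodicity dichotomy for $k$-mismatch occurrences and on the fact that, in the periodic regime, both the mismatch informations and the prefix sketches vary ``affinely'' along a cluster (a shift plus $\Oh(k)$ corrections, respectively a fixed base combined with a power of the block sketch) --- precisely the structure underlying the occurrence store $\Ss$ of~\cite{clifford2019streaming}; the no-delay case and the sparse clusters are routine.
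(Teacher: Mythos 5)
Your proposal rests on the same two pillars as the paper's proof --- the no-delay streaming $k$-mismatch algorithm of Clifford, Kociumaka, and Porat together with the periodicity dichotomy for $k$-mismatch occurrences --- but it allocates the work differently. The paper splits on whether the \emph{pattern} has a $2k$-period $p\le k$: if so, it invokes \cite[Theorem 4.2]{clifford2019streaming} verbatim, which already reports occurrences with a prescribed delay $\Theta(|P|)$; if not, occurrences are sparse (at most one per $k$ consecutive text positions), so the no-delay stream can be postponed with the generic buffer of \cite[Proposition 3.3]{clifford2019streaming}, plus a small de-amortization step (running with delay $\Delta-k$ and buffering up to $k$ characters and stream elements) to keep the per-character time worst-case. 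You instead re-derive the delayed buffer directly from the structural theorem, storing pending occurrences as a few clusters that are either short explicit lists or arithmetic progressions with a stored representative tuple; this essentially re-implements the internals that the paper obtains by citation, which is legitimate but buys nothing extra and leaves more to verify. Two points you gloss over: (i) in an arithmetic-progression cluster, $T(i_0-|P|\dd i-|P|]$ is only within Hamming distance $\Oh(k)$ of $B^j$, so $\skk{T[\dd i-|P|]}$ is \emph{not} simply $\skk{T[\dd i_0-|P|]}$ combined with $\skk{B^j}$ --- you must also splice the $\Oh(k)$ stored corrections into the sketch (doable via the encoding and combination operations of \cref{fact:ham_sketch} within the $\Oh(k\log^2 n)$ budget, but it needs saying); and (ii) the claim that the release and insertion overhead is absorbed into the \emph{worst-case} per-character bound requires an explicit sparsity or de-amortization argument, since in the periodic regime occurrences can arrive at $\Theta(1)$ of all positions and maintaining the cluster invariants is not obviously $\Oh(\polylog n)$ per character. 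Neither issue is fatal, but both require the kind of care that the paper delegates to the cited results.
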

\begin{proof}
    If no delay is required, we use~\cite[Theorem 1.2]{clifford2019streaming}, which reports $k$-mismatch occurrences of $P$ in $T$
    and, upon request, provides the mismatch information $\MI(T(i-|P| \dd i],P)$;
    this algorithm uses $\Oh(k\log^2 n)$ bits of space and takes $O(\sqrt{k\log^3 n}+\log^4 n)$ time per character.
    We also use \cite[Fact~4.4]{clifford2019streaming} to maintain the sketch $\skk{T[\dd i]}$ (reported on request);
    this algorithm uses $\Oh(k\log n)$ bits of space and takes $O(\log^2 n)$ time per character.

    Whenever requested to provide $\kstream[i]$ for some $k$-mismatch occurrence $i$ of $P$ in $T$,
    we retrieve the mismatch information $\MI(T(i-|P|\dd i],P)$ (in $\Oh(k)$ time)
    and the sketch $\skk{T[\dd i]}$ (in $\Oh(k \log^2 n)$ time).
    Combining $\skk{P}$ with $\MI(T(i-|P|\dd i],P)$, we build $\skk{T(i-|P|\dd i]}$ (using \cite[Lemma 6.4]{clifford2019streaming} in $\Oh(k\log^2 n)$ time) and then derive $\skk{T[\dd i-|P|]}$ using \cref{fact:ham_sketch} (in $\Oh(k\log n)$ time).
    Overall, processing the request takes $\Oh(k\log^2 n)$ time and $\Oh(k\log^2 n)$ bits of space.

    If a delay $\Delta = \Theta(|P|)$ is required, our approach depends on whether there exists $p\in [1\dd k]$ such that $\hd{P[\dd |P|-p]}{P(p\dd |P|]}\le 2k$
    (such $p$ is called a $2k$-period in \cite{clifford2019streaming}).
    This property is tested using a streaming algorithm of \cite[Lemma 4.3]{clifford2019streaming}, which takes $\Oh(k\log n)$ bits of space,
    $\Oh(\sqrt{k\log n})$ time per character of $P$, and requires $\Oh(k\sqrt{k\log n})$-time post-processing (performed while reading $T[\dd k]$).
    If $P$ satisfies this condition, then we just use \cite[Theorem 4.2]{clifford2019streaming}, whose statement matches that of \cref{fact:encode-ham-occ}.

    Otherwise, \cite[Observation 4.1]{clifford2019streaming} shows that $P$ has at most one $k$-mismatch occurrence among any $k$ consecutive positions in $T$.
    In that case, we use the aforementioned approach to produce the stream $\kstream$ with no delay and the buffer of \cite[Proposition 3.3]{clifford2019streaming} to delay the stream by $\Delta$ characters.
    The buffering algorithm takes $\Oh(k\log^2 n)$ bits of space and processes each character $T[i]$ in $\Oh(k\log^2 n + \log^3 n)$ time (if $P$ has $k$-mismatch occurrences at positions~$i$ or $i-\Delta$) or $O(\sqrt{k\log n}+\log^3 n)$ time (otherwise).
    Since the former case holds for at most two out of every $k$ consecutive positions, we can achieve  $O(\sqrt{k\log^3 n}+\log^4 n)$ worst-case time per character by decreasing the delay to $\Delta-k$ and buffering up to $k$ characters of $T$ and up to $k$ elements of $\kstream$.
    While the algorithm processes $T[i+\Delta]$, the latter buffer already contains $\kstream[i]$, but $\Oh(k)$ time is still needed to output this value (if $\kstream[i]\ne \bot$).
\end{proof}

The algorithm of \cref{fact:encode-ham-occ} has a fixed delay $\Delta$, i.e., it outputs $\kstream[i]$
upon receiving $T[i+\Delta]$.
Our application requires a variable delay: we need to access $\kstream[i+|P|]$ upon reading~$T[2i]$,
that is, with a delay of $i-|P|$.
We present a black-box construction that extends the data structure of \cref{fact:encode-ham-occ}
to support non-decreasing delays $\Delta_i$, $i \in [1\dd d]$. Naively, one could use the algorithm $\Alg$ of \cref{fact:encode-ham-occ}
with a fixed delay $\Delta_1$ and buffer the input characters so that $\Alg$ receives $T[i+\Delta_1]$ only when we actually process $T[i+\Delta_i]$.
Unfortunately, this requires storing $T[i+\Delta_1\dd i+\Delta_i)$, which could take too much space.
Thus, we feed $\Alg$ with $T[1\dd \Delta_1]$ followed by blank characters $\bot$ (issued at appropriate time steps without the necessity of buffering input characters) so that $\Alg$ reports $k$-mismatch occurrences $i\in [1\dd \Delta_1]$ with prescribed delays.
Then, we use another instance of the algorithm of \cref{fact:encode-ham-occ},
with a fixed delay $\Delta_{1+\Delta_1}$, to output $k$-mismatch occurrences $i\in (\Delta_1\dd \Delta_1 + \Delta_{1+\Delta_1}]$;
we continue this way until the whole interval $[1\dd d]$ is covered.  We formalise this idea in the following lemma.

\begin{restatable}{lemma}{variabledelay}\label{lemma:variable-delay}
    Let $\Delta_1 \le \Delta_2 \le \cdots \le \Delta_d$ be a non-decreasing sequence of $d=\Oh(|P|)$ integers
    $\Delta_i = \Theta(|P|)$, represented by an oracle that reports each element $\Delta_i$ in constant time.

    There is a streaming algorithm that, given a pattern $P$ followed by a text $T$,
    reports the $k$-mismatch occurrences of $P$ in $T$
    using $O(k\log^2 n)$ bits of space and $O(\sqrt{k\log^3 n}+\log^4 n)$ time per character.
    The algorithm reports each occurrence $i\in [1\dd d]$ with delay $\Delta_i$, that is, upon receiving $T[i+\Delta_i]$.
    For each reported occurrence $i\in [1\dd d]$, the underlying tuple $\kstream[i]$ can be provided on request in $\Oh(k\log n)$ time.
\end{restatable}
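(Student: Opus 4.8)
The plan is to partition the set $[1\dd d]$ of relevant occurrence positions into $O(1)$ contiguous blocks and handle each block with its own instance of the fixed-delay algorithm of \cref{fact:encode-ham-occ}. Put $b_0 := 0$ and, as long as $b_{j-1}<d$, set $\delta_j := \Delta_{b_{j-1}+1}$ and $b_j := b_{j-1}+\delta_j$; the blocks are the intervals $(b_{j-1}\dd b_j]$ for $j=1,\dots,J$, where $J$ is the least index with $b_J\ge d$. Since $\Delta_i=\Theta(|P|)$ for all $i$, each $\delta_j=\Theta(|P|)$, so every block has length $\Theta(|P|)$; together with $d=O(|P|)$ this yields $J=O(1)$. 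For block $j$ I would run an instance $\Alg_j$ of \cref{fact:encode-ham-occ} on $P$ followed by $T[\dd b_j]\bot^\infty$, configured with the constant delay $\delta_j$ (admissible since $\delta_j=\Theta(|P|)$); $\Alg_j$ is responsible for reporting precisely the occurrences $i\in(b_{j-1}\dd b_j]$.

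Next I would fix the schedule by which $\Alg_j$ consumes its input. The pattern $P$ and the real text prefix $T[\dd b_j]$ are fed to $\Alg_j$ in lockstep with the incoming stream; afterwards $\Alg_j$ receives only blank characters $\bot$, which require no buffering of the input. The crucial choice is \emph{when} each blank is issued: for $i=b_{j-1}+m\in(b_{j-1}\dd b_j]$, the $(b_j+m)$-th text character of $\Alg_j$ (a blank) is delivered exactly when the stream delivers $T[i+\Delta_i]$. Since $b_j+m=i+\delta_j$, under the constant delay $\delta_j$ the instance $\Alg_j$ reports occurrence $i$ upon receiving this very character, hence upon arrival of $T[i+\Delta_i]$, as required. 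Two monotonicity facts make the schedule legitimate: as $\Delta$ is non-decreasing, $\Delta_i\ge\Delta_{b_{j-1}+1}=\delta_j$ for every $i$ in block $j$, so $i+\Delta_i\ge i+\delta_j=b_j+m>b_j$ (the blanks genuinely follow $T[\dd b_j]$), and $(i+1)+\Delta_{i+1}>i+\Delta_i$ (so $\Alg_j$ gets its characters in strictly increasing time order, never two at once). Any occurrence $i\le b_{j-1}$ that $\Alg_j$ also emits — necessarily from a real text character, hence by global time $b_j$ — is simply discarded, being owned by an earlier block. (The last block may extend past $d$; occurrences beyond $d$ are neither reported nor scheduled, which is consistent since the oracle supplies only $\Delta_1,\dots,\Delta_d$.)

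For the auxiliary data, note that for $i\in(b_{j-1}\dd b_j]$ we have $i\le b_j$, so both the window $T(i-|P|\dd i]$ and the prefix $T[\dd i-|P|]$ lie inside the genuine portion $T[\dd b_j]$ that $\Alg_j$ actually processed; therefore $\Alg_j$ can produce the tuple $\kstream[i]=(i,\;\MI(T(i-|P|\dd i],P),\;\skk{T[\dd i-|P|]})$ on request within the time bound of \cref{fact:encode-ham-occ}. For the resources, at every time step at most $J=O(1)$ instances are fed a character: each instance still in its real-text phase receives the current real character, and at most one instance is due a blank (since $i\mapsto i+\Delta_i$ is injective, at most one $i$ satisfies $i+\Delta_i=p$ at a given step $p$, and it lies in one block). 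Hence the time per character is $O(1)$ times that of \cref{fact:encode-ham-occ}, i.e.\ $O(\sqrt{k\log^3 n}+\log^4 n)$, the space is $J=O(1)$ times $O(k\log^2 n)$ bits, i.e.\ $O(k\log^2 n)$, and one $O(\log n)$-bit pointer per instance (advanced via the oracle for the $\Delta_i$'s) suffices to drive the blank schedule.

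The main obstacle I anticipate is the bookkeeping of this schedule: checking that the pattern-feeding phase, the lockstep real-text phase, and the blank-padding phase dovetail correctly across the $O(1)$ concurrent instances — in particular that each instance sees its characters in the right order, that the reports landing "too early" are exactly those to discard, and that issuing a blank at the prescribed time never requires remembering an already-processed input character.
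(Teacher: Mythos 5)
Your proposal is correct and matches the paper's proof essentially verbatim: the paper likewise partitions $[1\dd d]$ into $O(1)$ blocks via $s_0=1$, $s_r=s_{r-1}+\Delta_{s_{r-1}}$ (your $b_j$'s shifted by one), runs one fixed-delay instance of \cref{fact:encode-ham-occ} per block with delay $\Delta_{s_{r-1}}$ on input $T[1\dd s_r)\cdot\bot^{\Delta_{s_{r-1}}}$, and issues each blank exactly when $T[i+\Delta_i]$ arrives. Your monotonicity checks (blanks follow the real prefix; characters arrive in order; at most one blank per step) are precisely the invariants the paper verifies for its passive/active/inactive phases.
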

\begin{proof}
    We use multiple instances $\Alg_1,\ldots,\Alg_t$ of the algorithm of \cref{fact:encode-ham-occ}.
    We define a sequence $(s_r)_{r=0}^{t}$ so that $\Alg_r$ works with a fixed delay $\Delta_{s_{r-1}}$,
    it is given $T[1\dd s_{r})\cdot \bot^{\Delta_{s_{r-1}}}$, and it reports $k$-mismatch occurrences $i\in [s_{r-1}\dd s_{r})$. Specifically, we set $s_0 = 1$ and $s_{r}=s_{r-1}+\Delta_{s_{r-1}}$,
    with $t$ chosen as the smallest integer such that $s_{t} > d$.
    Note that $s_{r}-s_{r-1} = \Delta_{s_{r-1}} \ge \Delta_1$ implies $t\le 1 + \frac{d}{\Delta_1}= \Oh(1)$.

    We assign three different roles to the algorithms $\Alg_1,\ldots,\Alg_r$:
    \emph{passive}, \emph{active}, and \emph{inactive}.
    While we process $T[j]$, the algorithm $\Alg_r$ is passive if $j < s_{r}$,
    active if $j\in [s_r\dd s_{r+1})$, and inactive if $j\ge s_{r+1}$.
    Our invariant is that, once we process $T[j]$, each passive algorithm $\Alg_r$ has already received $T[1\dd j]$,
    the unique active algorithm $\Alg_r$ has already received $T[1\dd s_{r})\cdot \bot^{1+i-s_{r-1}}$,
    where $i$ is the largest integer such that $i+\Delta_i \le j$,
    and each inactive algorithm~$\Alg_r$ has already received its entire input, that is,  $T[1\dd s_{r})\cdot \bot^{\Delta_{s_{r-1}}}$.

    Upon receiving $T[j]$, we simply forward $T[j]$ to all passive algorithms.
    Moreover, if $j = i+\Delta_i$ for some $i\in [1\dd d]$, we feed the active algorithm with $\bot$
    so that it checks whether $i$ is a $k$-mismatch occurrence of $P$ in $T$ and, upon request, outputs $\kstream[i]$.

    Let us argue that this approach is correct from the perspective of a fixed algorithm $\Alg_r$. As we process $T[1\dd s_r)$,
    the algorithm is passive, and it is fed with subsequent characters of $T$.
    For $j=s_r-1$, the position $i=s_{r-1}-1$ is the maximum one such that $i+\Delta_i \le j$.
    Consequently, the input $T[1\dd s_r)$ already satisfies the invariant for passive algorithms.
    For subsequent iterations $j\in [s_r\dd s_{r+1})$, as $\Alg_r$ is active, it receives $\bot$ whenever $i$ increases,
    so its input stays equal to $T[1\dd s_{r})\cdot \bot^{1+i-s_{r-1}}$.
    The length of this string is $s_r+i-s_{r-1}=i+\Delta_{s_{r-1}}$, so the algorithm indeed checks whether $i$ is a $k$-mismatch occurrence of $P$ in $T$ at each such iteration (recall that its fixed delay is $\Delta_{s_{r-1}}$), and it satisfies the invariant for active algorithms.
    Once we reach $j=s_{r+1}-1$, we have $i=s_{r}-1=s_{r-1}+\Delta_{s_{r-1}}-1$, so the input becomes $T[1\dd s_{r})\cdot \bot^{\Delta_{s_{r-1}}}$,
    and it already satisfies the invariant for inactive algorithms. The state of inactive algorithms does not change, so this invariant remains satisfied as $\Alg_r$ stays inactive indefinitely.

    The time and space complexity analysis follows from the fact that $t = O(1)$.
\end{proof}

\subsubsection{Algorithm}\label{sec:alg-k-sq-h}

We now show how to use the data structure of \cref{lemma:variable-delay} to implement our filtering procedure using low space.
For each $j\in [1\dd \lfloor \log n \rfloor]$, let $P_j$ denote the prefix of the text of length $\ell_j = 2^j$, i.e., $P_j = T[\dd 2^j]$. We search for $k$-mismatch occurrences of $P_j$ in~$T_j =T(3\ell_j/2 \dd 4\ell_j]$. As argued below, this allows filtering positions in $(3\ell_j\dd  6\ell_j]$. Additionally, our choice of $(\ell_j)_{j}$ ensures that we do not miss any $k$-mismatch square when running our search for every $P_j$ in parallel.

\begin{restatable}{claim}{streamingksqocc}\label{claim:streaming-k-sq-occ}
    For each $j\in [1\dd \lfloor \log n \rfloor]$, let $\occ_j$ be the set of $k$-mismatch occurrences of $P_j$ in $T_j = T(3\ell_j/2\dd 4\ell_j]$.
    If $\hd{T[\dd 2i]}{\SQ}\le k$ and $2i\in[3\ell_j\dd 6\ell_j)$, then $p = i - \ell_j/2 \in \occ_{j}$.
\end{restatable}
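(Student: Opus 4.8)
The plan is to read the claim off from \cref{obs:ksqh}. Recall that $P_j = T[\dd\ell_j]$ is the length-$\ell_j$ prefix of $T$, so $\occ_j$ is the set of $k$-mismatch occurrences of $P_j$ inside the window $T_j = T[3\ell_j/2\dd 4\ell_j)$. I will turn the near-square prefix $T[\dd 2i]$ into such an occurrence and then check that it falls inside $T_j$.

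First I would rewrite the hypothesis $2i \in [3\ell_j\dd 6\ell_j)$ as $i \in [3\ell_j/2\dd 3\ell_j)$; in particular $\ell_j < 3\ell_j/2 \le i$, so $\ell_j \in [1\dd i]$ and \cref{obs:ksqh} applies with $\ell := \ell_j$. It yields $\hd{T[\dd\ell_j]}{T(i\dd i+\ell_j]} \le k$, i.e., $i+\ell_j$ is a $k$-mismatch occurrence of $P_j$ in $T$, aligning $P_j$ with the length-$\ell_j$ window $T(i\dd i+\ell_j] = T[i+1\dd i+\ell_j]$.

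The second step is to verify that this window lies inside $T_j$. From $i \ge 3\ell_j/2$ its left endpoint satisfies $i+1 > 3\ell_j/2$, and from $i \le 3\ell_j-1$ its right endpoint satisfies $i+\ell_j \le 4\ell_j-1$; hence $[i+1\dd i+\ell_j] \subseteq [3\ell_j/2\dd 4\ell_j-1]$, so $T(i\dd i+\ell_j]$ is a substring of $T_j$ (and also of $T[\dd 2i]$, since $i+\ell_j \le 2i$). Consequently the occurrence found above is a $k$-mismatch occurrence of $P_j$ within $T_j$, and translating $T$-indices into the local indexing of $T_j$ identifies it with the position $p = i - \ell_j/2$; thus $p \in \occ_j$.

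I do not expect a genuine obstacle here: the proof is essentially index bookkeeping. The one point requiring care is that the aligned window must sit two-sidedly inside $T_j$ — it may neither start before position $3\ell_j/2$ nor end past position $4\ell_j-1$ — and this is precisely where both inequalities in $3\ell_j \le 2i < 6\ell_j$ are used, the upper bound $2i < 6\ell_j$ being what keeps the occurrence from running off the right end of $T_j$. One should also note that $\ell_j$ is even (here $\ell_j = 2^j$ with $j \ge 1$), so that $3\ell_j/2$ and $\ell_j/2$ are integers and $p$ is well defined.
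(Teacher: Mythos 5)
Your proof is correct and follows essentially the same route as the paper's: both invoke \cref{obs:ksqh} with $\ell=\ell_j$ to get a $k$-mismatch occurrence of $P_j$ ending at $i+\ell_j$ in $T$, then use $3\ell_j/2\le i<3\ell_j$ to check containment in $T_j$ and translate to the local index $p=i-\ell_j/2$. Your extra remarks (the upper bound $2i<6\ell_j$ keeping the window inside $T_j$, and $\ell_j$ being even so $p$ is an integer) are just more careful bookkeeping of the same argument.
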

\begin{claimproof}
    Since $\ell_j \le i$, \cref{obs:ksqh} implies that $i+\ell_j$ is a $k$-mismatch occurrence of $P_j$ in $T$.
    Moreover, when $2i\in[3\ell_j\dd 6\ell_j)$, we have $3\ell_j/2 \le i \le 3\ell_j$;
    therefore, that $k$-mismatch occurrence of $P_j$ is fully contained within $T_j$,
    and it ends at positions $i+\ell_j - 3\ell_j/2 = i -\ell_j/2$ of $T_j$.
\end{claimproof}

In what follows, we use $p$ to denote indices in $T_j$, whereas $i$ denotes indices in the original text $T$.
As $T_j = T(3\ell_j/2\dd 4\ell_j]$, the correspondence is given by $i = p + 3\ell_j/2$.
In other words, we only need to compute $\hdk{T[\dd 2i]}{\SQ}{k}$ when $i - \ell_j/2\in\occ_j$.
As noted in Property~\ref{prop:k-sq-ham},
it suffices to know the sketches $\skk{T(i\dd 2i]}$ and $\skk{T[\dd i]}$.
We store $\skk{P_j}=\skk{T[\dd \ell_j]}$ as well as $s_j=\skk{T[\dd 3\ell_j/2]}$ and maintain $\skk{T[\dd 2i]}$ in a rolling manner as we receive the characters of the text.

We use the algorithm of \cref{lemma:variable-delay}, asking for $k$-mismatch occurrences of $P_j$ in $T_j$,
to report $\skk{T_j[\dd i-\ell_j]} = \skk{T(\ell_j\dd i]}$ for every $i \in \occ_j$.
The delay sequence is specified as $\Delta_p = p-\ell_j/2$ for $p\in [\ell_j\dd 5\ell_j/2)$
so that the conditions of \cref{lemma:variable-delay} are satisfied.
(For $p < \ell_j$, we can assume $\Delta_p = \Delta_{\ell_j}=\ell_j/2$; anyway, there cannot be a $k$-mismatch occurrence of $P_j$ before position $\ell_j$.)
This way, for every $i\in [3\ell_j/2\dd 3\ell_j)$,
we receive $\kstreamj[i+\ell_j]$ (which corresponds to a potential $k$-mismatch occurrence starting at position $i+1$)
while processing $T_j[p+\Delta_{p}]$ for $p = i + \ell_j - 3\ell_j/2 = i - \ell_j/2$.
As $\Delta_p = p-\ell_j/2$, this corresponds to position $p' = 2p - \ell_j/2$ in $T_j$, or position
$i' = 2p +\ell_j = 2i$ in $T$, i.e., this happens precisely as we are processing~$T[2i]$.
See \cref{fig:filter-ksqh} for an illustration of the above.
If~$\kstreamj[i+\ell_j]$ is blank, we move on to the next position.
Otherwise, we retrieve the sketch $\skk{T_j[\dd i]} = \skk{T(3\ell_j/2\dd i]}$,
combine it with $s_j=\skk{T[\dd 3\ell_j/2]}$ and $\skk{T[\dd 2i]}$ to obtain $\skk{T[\dd i]}$ and $\skk{T(i\dd 2i]}$,
and use the latter two sketches to compute
$\hdk{T[\dd i]}{T(i\dd 2i]}{k}$, which is equal to $\hdk{T[\dd 2i]}{\SQ}{k}$ by \cref{prop:k-sq-ham}.

\begin{figure}[htbp]
    \tikzmath{
    \plen = 5; 
    \slen = 9;
    \tlen = 21.5;
    \posheight = -.4;
    \tjheight = 2;
    }
    \centering
    \begin{tikzpicture}[scale=0.6, every node/.style={scale=.9}]
    	\node (v) at (\tlen + .5, 0.5) {$T$};
        \draw (0, 0) rectangle ++(\tlen, 1);
    	
    	\node (3lj2) at (3*\plen/2, \posheight) {$3\ell_j/2$};
        \draw[dashed] (3*\plen/2, 0) -- ++(0, \tjheight);
    	
    	\node (4lj) at (4*\plen, \posheight) {$4\ell_j$};
        \draw[dashed] (4*\plen, 0) -- ++(0, \tjheight);

    	\node (i) at (\slen, \posheight) {$i$};

    	\node (ilj) at (\slen+\plen, \posheight) {$i+\ell_j$};

    	\node (ii) at (2*\slen, \posheight) {$2i$};
        \draw[dashed] (2*\slen, 0) -- ++(0, \tjheight+1);

        \draw (0, 0) rectangle ++(\plen, 1) node [midway] {$P_j$};
        \draw (\slen, 0) rectangle ++(\plen, 1) node [midway] {$P'$};

        \draw (3*\plen/2, \tjheight) rectangle ++(5*\plen/2, 1);
        \node (tj) at (4*\plen+.7, \tjheight+.5) {$T_j$};

    	\node (z) at (3*\plen/2, \tjheight+1.4) {$0$};
    	
    	\node (p) at (\slen+\plen, \tjheight+3) {$p = i-\ell_j/2$};
        \draw[->] (p) -> (\slen+\plen, \tjheight+1.2);
        \draw[dashed] (\slen+\plen, \tjheight+1) -- ++(0, -\tjheight);
        \draw[|<->|] (\slen+\plen, \tjheight+1.5) -- (2*\slen, \tjheight+1.5) node[above, midway] {$\Delta_p = p-\ell_j/2$};

    \end{tikzpicture}
	\caption{
		Illustration of our filtering procedure.
		Here, $P'$ is a $k$-mismatch occurrence of $P_j$ at position $i + \ell_j$ in $T$ and position $p = i -\ell_j/2$ in $T_j$,
		reported with delay $\Delta_p = p-\ell_j/2$ in $T_j$, hence it arrives at time $2i$ in $T$.
	}
	\label{fig:filter-ksqh}
\end{figure}
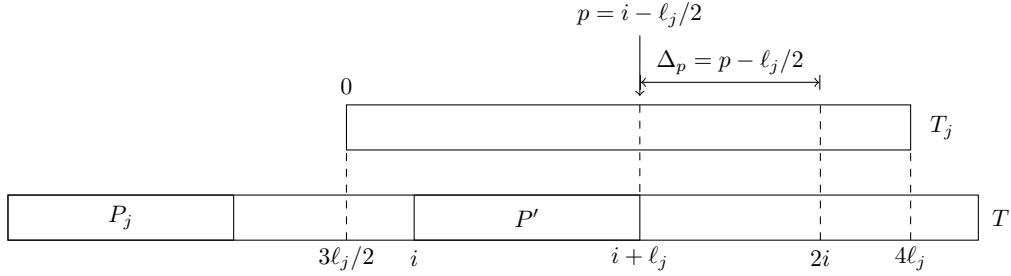

We proceed with the complexity analysis of our algorithm.
The $k$-mismatch pattern matching algorithm of \cref{lemma:variable-delay} uses $O(k \log^2 n)$ bits of space and $\Ot(k)$ time per character,
and we maintain $\Oh(\log n)$ instances of this algorithm. However, since all the patterns $P_j$ are prefixes of $T$,
the instances can share the pattern processing phase. Moreover, since any position is contained in at most three fragments $T[\ell_j\dd 6\ell_j)$ (each such fragment follows $P_j$ and contains $T_j$), at most three instances contribute to the time and space complexity at any given moment.
Thus, the entire algorithm uses $\Oh(k\log^2 n)$ bits of space and $\Ot(k)$ time per character, which completes the proof of~\cref{th:square_ham}.

Our streaming algorithm for \ksqe{} (\cref{th:streaming_ksqe}) relies on the streaming algorithm for \ksqh{}.
It requires testing $\hd{T[\dd 2i]}{\SQ}\le k$ only for selected positions $i$, and thus it benefits from the following variant of \cref{th:square_ham}:

\begin{restatable}{proposition}{hdsqmi}\label{rm:hd-sq-mi}
    There is a randomised streaming algorithm that, given a string $T \in \Sigma^n$, upon receiving~$T[2i]$,
    can be requested to test whether $\hd{T[\dd 2i]}{\SQ}\le k$ and, if so, report the mismatch information between $T[\dd 2i]$ and a closest square.
    The algorithm uses $\Oh(k\log^2 n)$ bits of space and processes each character in $\Ot(\sqrt{k})$ or $\Ot(k)$ time, depending on whether the request has been issued at that character.
\end{restatable}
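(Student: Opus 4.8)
The plan is to reuse the streaming algorithm for \ksqh{} essentially verbatim (the construction of \cref{sec:alg-k-sq-h}), but to avoid paying the per-character cost of extracting and recombining sketches at \emph{every} position $i$, deferring that work to the positions where a request is actually issued. First I would observe that the algorithm of \cref{th:square_ham} does two things at position $2i$: (1) it runs the $k$-mismatch matching machinery of \cref{lemma:variable-delay} for each relevant scale $j$, which costs $\Ot(\sqrt{k})$ amortised time per character and is independent of whether a request is made; and (2) \emph{when} $\kstreamj[i+\ell_j]$ is non-blank, it retrieves the underlying tuple (in $\Oh(k\log n)$ time by \cref{lemma:variable-delay}), recombines it with the stored sketches $s_j$ and $\skk{T[\dd 2i]}$ to obtain $\skk{T[\dd i]}$ and $\skk{T(i\dd 2i]}$, and performs a distance computation (in $\Oh(k\log^3 n)$ time by \cref{fact:ham_sketch}). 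Step (2) is exactly the part that costs $\Ot(k)$, and it is only needed when the caller asks. So the modification is: always run step (1), and only perform step (2) at a position $2i$ if a request has been issued for that $i$; in that case, additionally decode the full mismatch information, which \cref{fact:ham_sketch} (item 3) provides from $\skk{T[\dd i]}$ and $\skk{T(i\dd 2i]}$ together with $\MI$, and then reconstruct $\MI$ between $T[\dd 2i]$ and the closest square $(T(i\dd 2i])^2$ in the obvious way, following \cref{prop:k-sq-ham}.

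The key steps, in order, are: (i) run $\Oh(\log n)$ instances of \cref{lemma:variable-delay} as in \cref{sec:alg-k-sq-h}, sharing the pattern-processing phase, so that upon processing $T[2i]$ we know, for the appropriate scale $j$ with $2i\in[3\ell_j\dd 6\ell_j)$, whether $p=i-\ell_j/2\in\occ_j$ and, if so, have access on request to $\kstreamj[i+\ell_j]$; by \cref{claim:streaming-k-sq-occ} this $j$ is the only one that can witness $\hd{T[\dd 2i]}{\SQ}\le k$. This costs $\Ot(\sqrt{k})$ time per character and $\Oh(k\log^2 n)$ bits regardless of requests. (ii) Maintain $\skk{T[\dd 2i]}$ in a rolling fashion and keep the stored sketches $\skk{P_j}$ and $s_j=\skk{T[\dd 3\ell_j/2)}$, again at cost $\Oh(k\log n)$ bits and $\Oh(\polylog n)$ time per character. (iii) On a request at position $2i$: if $p\notin\occ_j$ (or $\kstreamj[i+\ell_j]=\bot$), report that $\hd{T[\dd 2i]}{\SQ}>k$; otherwise retrieve the tuple, build $\skk{T[3\ell_j/2\dd i]}$, combine with $s_j$ and $\skk{T[\dd 2i]}$ to get $\skk{T[\dd i]}$ and $\skk{T(i\dd 2i]}$, decode $\MI(T[\dd i],T(i\dd 2i])$ via \cref{fact:ham_sketch}, and from it emit $\MI$ between $T[\dd 2i]$ and the square $(T(i\dd 2i])^2$ (the mismatches are exactly those in the first half, shifted appropriately). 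This costs $\Ot(k)$ time, charged to the request.

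The main obstacle — really the only delicate point — is verifying that step (i), the filtering machinery, genuinely costs only $\Ot(\sqrt{k})$ per character when no request is made, i.e.\ that nothing in the "always-on" part of the \ksqh{} algorithm secretly costs $\Ot(k)$. Here I would lean on \cref{lemma:variable-delay}, whose per-character cost is $\Oh(\sqrt{k\log^3 n}+\log^4 n)$, and on the observation (from \cref{sec:alg-k-sq-h}) that at most three scales $j$ are active at any moment; the only $\Ot(k)$ operations in the original algorithm are precisely the on-request tuple retrieval, sketch recombination, and distance/decoding computations, all of which we have now gated behind a request. One should also double-check that deferring step (2) does not break the delay bookkeeping of \cref{lemma:variable-delay}: it does not, because that bookkeeping drives only which characters (real or $\bot$) are fed to the sub-instances, which we continue to do unconditionally; only the \emph{consumption} of the reported tuple is postponed, and \cref{lemma:variable-delay} already allows the tuple for a reported occurrence $i\in[1\dd d]$ to be requested (rather than pushed) in $\Oh(k\log n)$ time. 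The space bound $\Oh(k\log^2 n)$ bits is inherited unchanged. This completes the proof of \cref{rm:hd-sq-mi}.
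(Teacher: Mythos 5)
Your overall strategy is the same as the paper's: keep the filtering machinery of \cref{lemma:variable-delay} running unconditionally, and gate the tuple retrieval, sketch recombination, and decoding behind the request. However, there is one concrete gap, and it sits exactly where you wave it away. In step~(ii) you claim that maintaining $\skk{T[\dd 2i]}$ ``in a rolling fashion'' costs $\Oh(\polylog n)$ time per character. With the interface of \cref{fact:ham_sketch}, appending a single character to the sketch requires encoding $\skk{T[2i]}$ (which costs $\Oh(k\log^2 n)$ time) and then concatenating sketches (another $\Oh(k\log n)$ time), i.e.\ $\Ot(k)$ per character --- an always-on cost that is \emph{not} gated behind a request. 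This directly contradicts your later assertion that ``the only $\Ot(k)$ operations in the original algorithm are precisely the on-request tuple retrieval, sketch recombination, and distance/decoding computations,'' and it breaks the claimed $\Ot(\sqrt{k})$ bound for characters at which no request is issued.

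The fix is the one ``minor modification'' the paper actually makes beyond deferring the request-time work: replace the explicit rolling maintenance of $\skk{T[\dd 2i]}$ by the lazy sketch-maintenance data structure of Clifford, Kociumaka, and Porat (their Fact~4.4, already invoked in the proof of \cref{fact:encode-ham-occ}), which uses $\Oh(k\log n)$ bits, spends only $\Oh(\log^2 n)$ time per character, and reports $\skk{T[\dd 2i]}$ on demand in $\Oh(k\log^2 n)$ time --- a cost that can then be charged to the request. With that substitution, the rest of your argument (correctness of the filter via \cref{claim:streaming-k-sq-occ}, on-request retrieval of $\kstreamj$, reconstruction of $\skk{T[\dd i]}$ and $\skk{T(i\dd 2i]}$, and extraction of the mismatch information against the square $(T(i\dd 2i])^2$ via \cref{prop:k-sq-ham}) goes through as in the paper.
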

\begin{proof}
    We follow the algorithm above with minor modifications. First, instead of maintaining $\skk{T[\dd 2i]}$ explicitly,
    we apply \cite[Fact 4.4]{clifford2019streaming}, which uses $\Oh(k\log n)$ bits of space, takes $\Oh(\log^2 n)$ time per character,
    and reports $\skk{T[\dd 2i]}$ on demand in $\Oh(k\log^2 n)$ time.

    To process a request concerning position $2i$, we retrieve $\skk{T[\dd 2i]}$ and ask the pattern-matching algorithm of \cref{lemma:variable-delay} to output $\kstreamj[i]$ (normally, the algorithm only reports whether $i$ is a $k$-mismatch occurrence of $P_j$ in $T_j$). In this case, we build $\skk{T[\dd i]}$ and $\skk{T(i\dd 2i]}$ as in algorithm above. The decoding algorithm not only results in~$\hdk{T[\dd i]}{T(i\dd 2i]}{k}=\hdk{T[\dd 2i]}{\SQ}{k}$ but, if $\hd{T[\dd 2i]}{\SQ}\le k$, also the underlying mismatch information.

    The space complexity of the modified algorithm is still $\Oh(k\log^2 n)$ bits.
    The running time is $\Ot(\sqrt{k})$ if we do not ask the algorithm to test $\hd{T[\dd 2i]}{\SQ}\le k$
    and $\Ot(k)$ if we do.
\end{proof}

\section{Deterministic read-only algorithms for the LHD problems}\label{sec:ro-pal}
In this section, we present deterministic read-only algorithms for \kpalh and \ksqh.
We start by recalling structural results for $k$-mismatch occurrences used by the algorithms.

\subsection{Structure of \texorpdfstring{\boldmath $k$}{k}-mismatch occurrences}
\begin{definition}[{\cite{charalampopoulos2020faster}}]\label{def:k-mism-periodic}
    A string $U$ is \textit{$d$-mismatch periodic} if there exists a \textit{primitive} string~$Q$
    such that $|Q| \leq |U|/128d$ and $\hd{U}{Q^\infty}\le 2d$.
    Such a string $Q$ is called the \emph{$d$-mismatch period} of $U$.
\end{definition}
The condition $|Q| \leq |U|/128d$ implies that $Q$ is equal to some substring of $U$; hence, given the starting and ending positions of $Q$ in $U$ and random access to $U$, we can simulate random access to $Q$.

\begin{claim}[{From~\cite[Claim 7.1]{kociumaka2022small}}]\label{claim:same-period}
    Let $U$ and $V$ be strings such that $U$ is a prefix of $V$, and $|V|\le 2|U|$.
    If $U$ is $d$-mismatch periodic with $d$-mismatch period $Q$, then $V$ either is not $d$-mismatch periodic or has $d$-mismatch period $Q$.
\end{claim}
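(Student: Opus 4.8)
The plan is to show that $V$ cannot "accidentally" be $d$-mismatch periodic with some primitive period $Q' \ne Q$, by exploiting the fact that two short approximate periods of a long string must be compatible. First I would record what we have: $U$ is a prefix of $V$, $|V| \le 2|U|$, $U$ is $d$-mismatch periodic with primitive $d$-mismatch period $Q$ satisfying $|Q| \le |U|/128d$ and $\hd{U}{Q^\infty} \le 2d$. Suppose toward a contradiction that $V$ is $d$-mismatch periodic but its $d$-mismatch period is some primitive string $Q' \ne Q$ (up to the canonical choice; note a $d$-mismatch period, when it exists, is unique, so it suffices to show $Q$ itself witnesses $d$-mismatch periodicity of $V$). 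By definition $|Q'| \le |V|/128d$ and $\hd{V}{Q'^\infty} \le 2d$.

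The core step is to derive a contradiction from the two approximate periodicities restricted to the common prefix $U$. On $U$ we have $\hd{U}{Q^\infty} \le 2d$ and $\hd{U}{Q'^\infty} \le \hd{V}{Q'^\infty} \le 2d$, hence $\hd{Q^\infty[\dd|U|]}{Q'^\infty[\dd|U|]} \le 4d$. Now both $|Q|$ and $|Q'|$ are at most $|U|/128d$ (for $|Q'|$ this uses $|V| \le 2|U|$, giving $|Q'| \le |V|/128d \le |U|/64d \le |U|/(128d) \cdot 2$; I'd tighten the constants as needed, the point being $|Q|+|Q'| \ll |U|/d$). A periodicity-type lemma — the standard fact that if a string of length $m$ is simultaneously within Hamming distance $e$ of $Q^\infty$ and of $Q'^\infty$ with $|Q|+|Q'|$ much smaller than $m/e$, then $Q$ and $Q'$ are rotations of the same primitive string and in fact $Q^\infty[\dd m]$ and $Q'^\infty[\dd m]$ agree as aligned powers — forces $Q$ and $Q'$ to be cyclic rotations of each other; combined with both being chosen as the canonical $d$-mismatch period (say, the lexicographically least rotation, or whatever normalization the source uses) this gives $Q = Q'$. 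That contradicts $Q' \ne Q$, so $V$, if $d$-mismatch periodic at all, has $d$-mismatch period $Q$.

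The step I expect to be the main obstacle is pinning down the precise quantitative periodicity lemma and checking that the constant $128$ in \cref{def:k-mism-periodic} is generous enough to apply it with error budget $4d$ over a window of length only $|U|$ (not $|V|$). Concretely I would invoke (or reprove) a statement of the form: if $\hd{W[\dd m]}{Q^\infty} + \hd{W[\dd m]}{Q'^\infty} \le e$ and $m \ge 2e(|Q|+|Q'|)$ (or a similarly explicit bound), then $\gcd$-type interaction makes $Q,Q'$ conjugate. Here $m = |U|$, $e = 4d$, and $|Q|+|Q'| \le |U|/128d + |U|/64d < |U|/(8d)$, so $2e(|Q|+|Q'|) < 2\cdot 4d \cdot |U|/(8d) = |U| = m$, which clears the bound comfortably. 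The remaining routine part is verifying that "same $d$-mismatch period" is exactly the conclusion we want — i.e., that the canonical choice of primitive period is rotation-invariant-free — which follows from primitivity (a primitive string equals a unique one of its rotations under any fixed canonicalization, and two conjugate primitive strings that are both "the" $d$-mismatch period of overlapping strings must literally coincide because the period is read off from the same positions of $U$).
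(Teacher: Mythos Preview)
The paper does not prove this claim at all; it simply imports it verbatim from \cite[Claim~7.1]{kociumaka2022small}. So there is nothing in the present paper to compare your plan against.

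That said, your plan is the right shape and is essentially how the cited source establishes the result: restrict both approximate periodicities to the common prefix $U$, use an approximate Fine--Wilf type statement (together with the generous constant $128$ in \cref{def:k-mism-periodic}) to force $|Q|=|Q'|$ and $Q' = \rot^s(Q)$, and then finish by a counting argument showing $s\equiv 0 \pmod{|Q|}$. One point worth tightening in your write-up: the step ``conjugate $\Rightarrow$ equal'' should not be delegated to a vague canonicalization convention. Instead argue directly that if $Q' = \rot^s(Q)$ with $s\not\equiv 0\pmod{|Q|}$, then $Q'^\infty[\dd |U|]$ and $Q^\infty[\dd |U|]$ differ in at least one position of every length-$|Q|$ block (by primitivity of $Q$), giving at least $\lfloor |U|/|Q|\rfloor \ge 128d$ mismatches, which contradicts the bound $\hd{Q^\infty[\dd |U|]}{Q'^\infty[\dd |U|]}\le 4d$ you already derived. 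This replaces the hand-wave about ``the period is read off from the same positions of $U$'' with a clean quantitative contradiction. The approximate periodicity lemma you invoke (with an explicit inequality like $m \ge c\cdot e\cdot(|Q|+|Q'|)$) is indeed the delicate ingredient; the constants you checked do clear it, but in a full proof you should cite or state the precise version you use rather than the schematic form.
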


Charalampopoulos, Kociumaka, and Wellnitz~\cite{charalampopoulos2020faster} showed that the set of $k$-mismatch occurrences has a very regular structure:

\begin{fact}[{See~\cite[Section 3]{charalampopoulos2020faster}}]\label{fact:ham-k-occ}
Let $P$ and $T$ be two strings such that $|P|\le |T| \le 3/2|P|$.
\begin{enumerate}
\item If $P$ is not $k$-mismatch periodic, then there are $O(k)$ $k$-mismatch occurrences of $P$ in $T$.
\item If $P$ is $k$-mismatch periodic with period $Q$, then any two $k$-mismatch occurrences $i\le i'$ of $P$ in $T$ satisfy $i\equiv i' \pmod{|Q|}$ and $\hd{T(i-|P|\dd i']}{Q^\infty}\le 3k$.
\end{enumerate}
\end{fact}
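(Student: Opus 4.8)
The plan is to reduce the claim to the known structural machinery for approximate pattern matching. Fact~\ref{fact:ham-k-occ} is stated as being extracted from~\cite[Section 3]{charalampopoulos2020faster}, so I would first recall the relevant dichotomy from that work: for a pattern $P$ and a text $T$ with $|P| \le |T| \le \frac32|P|$, either $P$ has few $k$-mismatch occurrences in $T$, or $P$ is ``approximately periodic'' and the occurrences form an arithmetic progression governed by that period. The exact constants ($128d$ in the definition of $d$-mismatch periodicity, the bounds $O(k)$ and $3k$) are chosen so that the two regimes are mutually exclusive and exhaustive, and the argument is essentially a citation plus a sanity check that \cref{def:k-mism-periodic} is the same notion (up to constants) used in~\cite{charalampopoulos2020faster}.

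For item~1, I would invoke the counting bound from~\cite{charalampopoulos2020faster}: when $P$ is \emph{not} $k$-mismatch periodic, the aperiodicity forces any two $k$-mismatch occurrences at a small shift to witness many mismatches, and a counting/encoding argument (each occurrence ``uses up'' part of a mismatch budget that is globally $O(k)$) yields $O(k)$ occurrences in a window of length $\le \frac32|P|$. For item~2, assuming $P$ is $k$-mismatch periodic with primitive period $Q$, $|Q| \le |P|/128k$: take two $k$-mismatch occurrences $i \le i'$ of $P$ in $T$. Each occurrence places a copy of $P$ — hence an approximate copy of $Q^\infty$ (within $2k$ mismatches) — over the text, so the text fragment $T(i-|P| \dd i]$ is within $2k$ of $Q^\infty$ and likewise for $T(i'-|P|\dd i']$; the two aligned copies of $Q^\infty$ overlap in a long fragment (length $\ge |P| - (i'-i) \ge |P| - |T| + |P| \ge \frac12|P| \gg 128k|Q|$), and if $|Q| \nmid (i'-i)$ the overlap would exhibit $\Omega(|P|/|Q|) > 4k$ mismatches against itself, a contradiction; hence $i \equiv i' \pmod{|Q|}$. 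Once the shift is a multiple of $|Q|$, the two approximate-periodicity witnesses for $Q^\infty$ are \emph{aligned}, so their mismatch sets simply interleave on the common superstring $T(i-|P|\dd i']$, giving $\hd{T(i-|P|\dd i']}{Q^\infty} \le \hd{T(i-|P|\dd i]}{Q^\infty} + \hd{T(i'-|P|\dd i']}{Q^\infty} \le 2k + 2k$; tightening the double-counting on the overlap region (or appealing directly to the bound in~\cite{charalampopoulos2020faster}) yields the claimed $3k$.

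The main obstacle is not any single inequality but rather making sure the constants line up: \cref{def:k-mism-periodic} fixes $|Q| \le |U|/128d$ and the threshold $2d$, and one must check these are exactly the constants (or implied by the constants) under which~\cite[Section~3]{charalampopoulos2020faster} proves both the $O(k)$ count and the $3k$ aggregate bound, and that the length condition $|T| \le \frac32|P|$ is the regime in which those statements are proved there. Given that the paper explicitly attributes the fact to that section, I expect the ``proof'' to be a short paragraph citing~\cite[Section~3]{charalampopoulos2020faster} and noting that \cref{def:k-mism-periodic} coincides with their notion of approximate periodicity, rather than a self-contained argument; the sketch above is what one would write if an in-place justification were demanded.
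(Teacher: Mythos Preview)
Your expectation is correct: the paper does not prove this statement at all. It is stated as a ``Fact'' with attribution to \cite[Section~3]{charalampopoulos2020faster} and is used as a black box; there is no accompanying proof or even a proof sketch in the paper. So your meta-level assessment---that the intended ``proof'' is simply a citation together with the observation that \cref{def:k-mism-periodic} matches the notion used in the cited work---is exactly right.

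Your in-place sketch is also broadly sound, but note one slip in the constants: from a $k$-mismatch occurrence at $i$ you have $\hd{T(i-|P|\dd i]}{P}\le k$, and $k$-mismatch periodicity gives $\hd{P}{Q^\infty}\le 2k$, so the triangle inequality yields $\hd{T(i-|P|\dd i]}{Q^\infty}\le 3k$, not $2k$ as you wrote. This makes the na\"ive union bound for $T(i-|P|\dd i']$ even looser than the $4k$ you computed, so the $3k$ in the statement really does require the finer accounting done in~\cite{charalampopoulos2020faster} (exploiting that the overlap region is counted against both the $k$-mismatch budget of each occurrence and the $2k$ periodicity budget only once). Since you already defer to the citation for the exact constant, this does not undermine your proposal.
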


They also presented efficient offline algorithms for computing the $k$-mismatch period and the $k$-mismatch occurrences in the so-called PILLAR model. In this model, one is given a family of strings $\mathcal{X}$ for preprocessing. The elementary objects are fragments $X[i\dd j]$ of strings $X \in \mathcal{X}$. Given elementary objects $S, S_1, S_2$, the PILLAR operations are:
\begin{enumerate}
\item $\mathsf{Access}(S,i)$: Assuming $i \in [1\dd |S|]$, retrieve $S[i]$.
\item $\mathsf{Length}(S)$: Retrieve the length $|S|$ of $S$.
\item $\lcp(S_1,S_2)$: Compute the length of the longest common prefix of $S_1$ and $S_2$.
\item $\lcp^R(S_1,S_2)$: Compute the length of the longest common suffix of $S_1$ and $S_2$.
\item $\mathsf{IPM}(S_1,S_2)$: Assuming that $|S_2| \le 2|S_1|$, compute the set of the starting positions
of occurrences of $S_1$ in $S_2$, which by Fine and Wilf periodicity lemma~\cite{fine1965uniqueness} can be represented as one arithmetic progression.
\end{enumerate}
In the read-only model, operations $\mathsf{Access}$ and $\mathsf{Length}$ can be implemented in constant time and $O(\log m)$ bits. The operations $\lcp$ and $\lcp^R$ can be implemented naively via character-by-character comparison in $O(\min\{|S_1|,|S_2|\})$ total time and $O(\log m)$ bits. Finally, the $\mathsf{IPM}$ operation can be implemented in $O(|S_1|+|S_2|)$ total time and $O(\log m)$ bits (see e.g.~\cite{rytter2003maximal}).

As a corollary, we immediately obtain:

\begin{corollary}[{From~\cite[Lemma 4.4]{charalampopoulos2020faster}}]\label{lemma:compute-ham-period}
Given random access to a string $U$, testing whether it is $d$-mismatch periodic, and, if so, computing its $d$-mismatch period, can be done using $O(d|U|)$ time and $\Oh(d)$ space.
\end{corollary}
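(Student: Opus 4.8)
The plan is to run the PILLAR-model algorithm of Charalampopoulos, Kociumaka, and Wellnitz for the $d$-mismatch period \cite[Lemma 4.4]{charalampopoulos2020faster}, instantiated over the singleton family $\mathcal{X}=\{U\}$, and to substitute the read-only implementations of the PILLAR operations recalled in the text preceding the statement. That algorithm, given an elementary object $X$, decides whether $X$ is $d$-mismatch periodic and, if so, returns its $d$-mismatch period $Q$, itself represented as an elementary object; this representation is legitimate by \cref{def:k-mism-periodic}, since $|Q|\le |U|/128d$ forces $Q$ to occur as a fragment of $U$. We invoke it on $X=U$, so the output $\min\{k+1,\dots\}$-style answer of the corollary is exactly what is asked.

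First I would recall (or, if needed, re-verify) that the algorithm of \cite[Lemma 4.4]{charalampopoulos2020faster} performs only $\Oh(d)$ PILLAR operations together with $\Oh(d)$ additional arithmetic, and that every elementary object it manipulates is a fragment of $U$. Granting this, I plug in the read-only costs: with $\mathcal{X}=\{U\}$ every fragment has length at most $|U|$, so $\mathsf{Access}$ and $\mathsf{Length}$ cost $\Oh(1)$ time, while each $\lcp$, $\lcp^R$, and $\mathsf{IPM}$ costs $\Oh(|U|)$ time and $\Oh(\log|U|)$ bits of scratch space. Summing over the $\Oh(d)$ calls yields total time $\Oh(d|U|)$. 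For the space bound, the algorithm keeps $\Oh(d)$ elementary objects live (each described by a constant number of indices into $U$) and $\Oh(d)$ auxiliary integers in $[0\dd|U|]$ — in particular the at most $2d$ mismatch positions witnessing $\hd{U}{Q^\infty}\le 2d$ — each of $\Oh(\log|U|)$ bits; adding the $\Oh(\log|U|)$-bit scratch of the PILLAR operations gives $\Oh(d\log|U|)$ bits, i.e. $\Oh(d)$ machine words.

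The main — and essentially the only — obstacle is the bookkeeping in the first step: one must confirm, by inspecting the proof of \cite[Lemma 4.4]{charalampopoulos2020faster}, that the number of PILLAR calls is indeed $\Oh(d)$ and that all operands are fragments of $U$ rather than of some longer auxiliary string, so that the naive $\Oh(|U|)$ bounds for $\lcp$, $\lcp^R$, and $\mathsf{IPM}$ apply. Intuitively this holds: the period candidate is extracted from a $\Theta(|U|/d)$-length prefix of $U$ via a constant number of $\mathsf{IPM}$ calls, and checking a candidate amounts to a constant number of kangaroo-jump sweeps over substrings of $U$, each consisting of $\Oh(d)$ naive $\lcp$/$\lcp^R$ queries whose running times telescope to $\Oh(|U|)$ per sweep. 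Once this is verified, the corollary is immediate from the substitution above.
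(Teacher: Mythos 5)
Your proposal is correct and follows exactly the route the paper intends: the paper states this as an immediate corollary of \cite[Lemma 4.4]{charalampopoulos2020faster} obtained by substituting the read-only implementations of the PILLAR operations ($\mathsf{Access}$/$\mathsf{Length}$ in $O(1)$ time, $\lcp$/$\lcp^R$/$\mathsf{IPM}$ in $O(|U|)$ time and $O(\log |U|)$ bits each) into the $\Oh(d)$ PILLAR calls of that algorithm, giving $O(d|U|)$ time and $\Oh(d)$ words of space. Your additional bookkeeping about all operands being fragments of $U$ is exactly the verification the paper leaves implicit.
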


\subsection{Read-only algorithm for the pattern matching with \texorpdfstring{\boldmath $k$}{k} mismatches}

The above implementation of the PILLAR operations further implies an offline algorithm that finds all $k$-mismatch occurrences of $P$ in $T$ in $\Ot(k^2 \cdot |T|)$ time and $\Ot(k^2)$ space (see \cite[Main Theorem~8]{charalampopoulos2020faster}).
Nevertheless, we provide a more efficient online algorithm that additionally provides the mismatch information for every $k$-mismatch occurrence of $P$.

\begin{theorem}\label{thm:ro-km-algo}
    There is a deterministic online algorithm that finds all $k$-mismatch occurrences of a length-$m$ pattern $P$ within a text $T$ using $\Oh(k\log m)$ space and $\Oh(k\log m)$ worst-case time per character. The algorithm outputs the mismatch information along with every reported $k$-mismatch occurrence of $P$.
\end{theorem}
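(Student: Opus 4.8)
The plan is to design the algorithm separately for the two regimes identified by \cref{fact:ham-k-occ}: the ``non-periodic'' regime, where $k$-mismatch occurrences are sparse, and the ``periodic'' regime, where they form a single arithmetic progression. First I would handle the bookkeeping issue that the pattern $P$ has length $m$ which may be much larger than the available $\Oh(k\log m)$ space, so we cannot store $P$ explicitly; instead we keep only constantly many read-only pointers into $T$ (the already-scanned prefix contains $P$) and simulate random access to $P$ that way, exactly as the PILLAR model allows. The main preprocessing is: while reading $P$, run \cref{lemma:compute-ham-period} on prefixes of $P$ of geometrically increasing lengths, which by \cref{claim:same-period} lets us decide, for the relevant scale, whether $P$ is $k$-mismatch periodic and, if so, recover its primitive period $Q$ (stored as a fragment of $T$); this costs $\Oh(k\,m)$ total time, i.e.\ amortised $\Oh(k)$ per character, which we then deamortise in the usual way by spreading the work over the next $\Theta(m)$ text characters.

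The second ingredient is to reduce the general text to overlapping windows of length $\Theta(m)$: partition $T$ into blocks of length $m/2$ and, for each pair of consecutive blocks, treat the length-$m$ (more precisely length in $[m,\tfrac32 m]$) window as the ``$T$'' of \cref{fact:ham-k-occ}. Every $k$-mismatch occurrence of $P$ lies in exactly one such window, and windows overlap so each text position belongs to $\Oh(1)$ of them, giving an $\Oh(1)$-factor overhead. Within a non-periodic window we maintain, at each new character $T[i]$, a verification of the candidate occurrence ending at $i$ using the kangaroo-jump / $\lcp$-with-mismatches technique: we perform up to $k+1$ constant-factor-length $\lcp$ queries against $P$ (read-only, character-by-character, but each jump after the first skips a matched block, so the total work per window is $\Oh(k\cdot(\text{window length}))$ amortised — again deamortised to $\Oh(k)$ worst case per character by a standard lookahead buffer of $\Oh(k)$ pending comparisons). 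Each successful verification yields the mismatch information directly as the list of $\le k$ jump endpoints, which is exactly what the theorem must output; there are only $\Oh(k)$ such occurrences per window by \cref{fact:ham-k-occ}(1), so the output cost is absorbed.

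In the periodic window we use \cref{fact:ham-k-occ}(2): all $k$-mismatch occurrences are congruent modulo $|Q|$ and the whole stretch from the first to the last occurrence is within Hamming distance $3k$ of $Q^\infty$. So we maintain, as $i$ advances inside the window, the mismatch information between $T$ restricted to the current periodic stretch and $Q^\infty$ — this is an online task with $\Oh(k)$ mismatches, updatable in $\Oh(k)$ time per character using $\lcp$ queries against $Q$ (whose length is $\le m/(128k)$, so it fits in $\Oh(m/k)$ read-only pointers, but we only ever need $\Oh(k)$ active mismatch records). From the $\Oh(k)$ mismatches against $Q^\infty$ and the precomputed $\Oh(k)$ mismatches of $P$ against $Q^\infty$, the mismatch information of $P$ against any aligned candidate window is obtained by a symmetric-difference-style merge in $\Oh(k)$ time, and we report the occurrence (with its mismatch information) iff the resulting count is $\le k$. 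A light subtlety to handle: occurrences near the two ends of the window where periodicity ``breaks'' — but there are $\Oh(k)$ of those as well, so we just treat them with the kangaroo verification as in the non-periodic case. I expect the main obstacle to be the deamortisation: all three subroutines (period computation, kangaroo verification, periodic-stretch maintenance) are naturally amortised $\Oh(k)$ per character, and combining their lookahead buffers so that the simultaneous $\Oh(1)$ overlapping windows still yield $\Oh(k\log m)$ \emph{worst-case} time per character — the extra $\log m$ coming from the geometric hierarchy of scales for the period test — requires care, but it follows the same template as the streaming $k$-mismatch deamortisation and poses no fundamental difficulty.
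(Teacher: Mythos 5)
There is a genuine gap in the non-periodic case, and it is the central difficulty of the theorem. You propose to verify, for \emph{every} text position $i$ in a window, whether $i$ is a $k$-mismatch occurrence of $P$, using kangaroo jumps implemented by character-by-character \lcp{} scans. A single such verification costs up to $k+1$ \lcp{} queries whose lengths sum to $\Theta(m)$ in the worst case (a position can agree with $P$ on long stretches before the $(k+1)$-st mismatch is found), so without an $O(1)$-time \lcp{} structure --- which needs $\Omega(m)$ space and is therefore unavailable --- the cost is $\Theta(m)$ per position, i.e.\ $\Theta(m)$ per character, not the $\Oh(k)$ you claim. Your assertion that ``the total work per window is $\Oh(k\cdot(\text{window length}))$ amortised'' has no supporting argument: the fact that only $\Oh(k)$ positions \emph{are} occurrences (\cref{fact:ham-k-occ}) does not reduce the cost of ruling out the other $\Theta(m)$ candidates. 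Note that even the offline PILLAR-based algorithm cited in the paper runs in $\Ot(k^2\cdot|T|)$ time with the naive small-space \lcp{} implementation, so your bound would improve on it by a factor of $m$ essentially for free.

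What is missing is a filtering mechanism, and this is exactly what the paper's proof supplies: a family $\Pp=(P_j)_{j=1}^t$ of $\Oh(\log m)$ prefixes of $P$ with $\ell_{j+1}\le \tfrac32\ell_j$, augmented (between consecutive geometric scales) with the shortest non-$k$-mismatch-periodic extension of each periodic prefix, so that consecutive prefixes share their approximate period. The algorithm runs one layer per prefix; layer $j+1$ receives only the $k$-mismatch occurrences of $P_j$ (by \cref{obs:ro-km-filter} these are the only candidates for occurrences of $P_{j+1}$), and in the non-periodic case there are merely $\Oh(k)$ of them per block, each extended by $\le \ell_j/2$ characters while the mismatch information is maintained incrementally at $\Oh(1)$ cost per character per candidate. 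This is what brings the cost down to $\Oh(k)$ per character per level and $\Oh(k\log m)$ overall. Your periodic-case handling has a related anchoring problem: the $3k$-closeness to $Q^\infty$ guaranteed by \cref{fact:ham-k-occ} holds only on the stretch between the first and last occurrences, so you need to know where that stretch begins before you can afford to maintain mismatches against $Q^\infty$; the paper anchors it at the leftmost occurrence of the previous-level prefix $P_j$, which again presupposes the cascade. I would revise the proposal around this hierarchy of filtering prefixes rather than around length-$\Theta(m)$ windows verified from scratch.
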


Consistently with the streaming algorithm of~\cite{clifford2019streaming}, our algorithm uses a family of exponentially-growing
prefixes to filter out candidate positions. However, in order to use the structural properties of \cref{fact:ham-k-occ} efficiently, we construct a different family $\Pp$ to ensure that we are either working in an approximately
periodic region of the text or processing an aperiodic prefix.

We first add to $\Pp$ the prefixes $R_j = P[\dd \min\{m, \lfloor(3/2)^j\rfloor\}]$
for $j\in [0\dd \lceil \log_{3/2} m \rceil]$.
If $R_j$ is $k$-mismatch periodic but $R_{j+1}$ is not,
we also add to $\Pp$ the shortest extension of $R_j$ that is not $k$-mismatch periodic.
Hereafter, let $\Pp = (P_j)_{j=1}^t$ denote the resulting sequence of prefixes,
sorted in order of increasing lengths, and let $\ell_j = |P_j|$ for every $j\in [1\dd t]$.

\begin{restatable}{claim}{rosqpref}\label{claim:ro-sq-pref}
   The sequence $\Pp=(P_j)_{j=1}^t$ satisfies the following properties:
    \begin{enumerate}[(a)]
        \item\label{claim:enum-corner} $P_1 = P[1]$ and $P_t = P$,
        \item\label{claim:enum-size} $t=|\Pp| = O(\log m)$,
        \item\label{claim:enum-length} for every $j\in [1\dd t)$, we have $\ell_{j+1} \le 3\ell_j/2$,
        \item\label{claim:enum-close}  for every $j\in [1\dd t)$, if $P_j$ is $k$-mismatch periodic with period $Q_j$,
        then $\hd{P_{j+1}}{Q_j^\infty} \le 2k+1$.
    \end{enumerate}
\end{restatable}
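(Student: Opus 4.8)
The plan is to read all four properties off the construction of $\Pp$, the only real content being in property~\ref{claim:enum-close}. For~\ref{claim:enum-corner}, the shortest member of $\Pp$ is $R_0 = P[\dd\min\{m,1\}] = P[1]$, and the longest is $R_{\lceil\log_{3/2}m\rceil} = P$, since $(3/2)^{\lceil\log_{3/2}m\rceil}\ge m$ makes the truncation at $m$ take effect; every inserted prefix is a strict extension of some $R_j$ and has length at most $|R_{j+1}|\le m$, so it cannot beat either extreme. Hence $P_1 = P[1]$ and $P_t = P$. For~\ref{claim:enum-size}, there are $\lceil\log_{3/2}m\rceil+1 = O(\log m)$ prefixes $R_j$, and at most one extra prefix is inserted per index $j$, so $t = O(\log m)$.

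For~\ref{claim:enum-length}, an elementary estimate of $\lfloor(3/2)^{j+1}\rfloor$ gives $|R_{j+1}|\le\tfrac32|R_j|+1$ for consecutive $R_j$'s, and whenever an extra prefix $E_a$ is inserted between $R_a$ and $R_{a+1}$ it is, by construction, the shortest strict extension of $R_a$ that is not $k$-mismatch periodic, while $R_{a+1}$ is itself such an extension (that is exactly the configuration triggering the insertion); hence $|R_a| < |E_a| \le |R_{a+1}|$, so $E_a$ subdivides the interval $(|R_a|\dd|R_{a+1}|]$ and each of the two resulting gaps still respects the ratio bound.

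Property~\ref{claim:enum-close} is the heart of the statement. If $P_j$ is not $k$-mismatch periodic — which includes every inserted prefix $E_a$ (aperiodic by construction) and every aperiodic $R_a$ — the hypothesis is vacuous, so assume $P_j = R_a$ has $k$-mismatch period $Q_j = Q$ (in particular $|R_a|\ge 128k$). Its successor $P_{j+1}$ in $\Pp$ is a prefix of $P$ with $|R_a| < |P_{j+1}| \le |R_{a+1}| \le \tfrac32|R_a|+1 \le 2|R_a|$, because the element of $\Pp$ immediately after $R_a$ cannot be longer than $R_{a+1}\in\Pp$; thus $R_a$ is a prefix of $P_{j+1}$ and Claim~\ref{claim:same-period} applies with $d = k$, $U = R_a$, $V = P_{j+1}$. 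If $P_{j+1}$ is $k$-mismatch periodic, then Claim~\ref{claim:same-period} forces its $k$-mismatch period to equal $Q$, so $\hd{P_{j+1}}{Q^\infty}\le 2k\le 2k+1$. Otherwise $P_{j+1}$ is aperiodic, and then it must be the inserted prefix $E_a$: it is an aperiodic member of $\Pp$ of length at most $|R_{a+1}|$, and $E_a$ is the only such member. Put $E_a' = P[\dd|E_a|-1]$; by minimality of $E_a$, the prefix $E_a'$ is still $k$-mismatch periodic, it extends $R_a$, and $|E_a'| < |E_a| \le |R_{a+1}| \le 2|R_a|$, so Claim~\ref{claim:same-period} gives that the $k$-mismatch period of $E_a'$ is $Q$, i.e., $\hd{E_a'}{Q^\infty}\le 2k$. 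Since $E_a$ extends $E_a'$ by exactly one character, $\hd{E_a}{Q^\infty}\le\hd{E_a'}{Q^\infty}+1\le 2k+1$, which is precisely the claimed bound; this last step is where the extra $+1$ enters.

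The step I expect to be the main (if modest) obstacle is the bookkeeping inside~\ref{claim:enum-close}: one has to argue carefully that the element of $\Pp$ directly following a $k$-mismatch-periodic $R_a$ is either $R_{a+1}$ or the single inserted prefix $E_a$ (no earlier extra prefix $E_b$ with $b<a$ can overtake $R_a$, since $|E_b|\le|R_{b+1}|\le|R_a|$, and the only coincidence among the $R_j$'s is $R_0 = R_1$ at length $1$, where nothing is $k$-mismatch periodic), so that Claim~\ref{claim:same-period} can be invoked on the correct pair of strings. Everything else is a routine application of Claim~\ref{claim:same-period} together with the triangle inequality for one appended character.
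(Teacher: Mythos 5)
Your proof follows the same route as the paper's: for part~(d) it splits on whether $P_{j+1}$ is $k$-mismatch periodic, applies \cref{claim:same-period} either to $P_{j+1}$ itself or to $P_{j+1}$ with its last character removed (which is periodic by minimality of the inserted prefix), and picks up the extra $+1$ from the appended character; your additional bookkeeping identifying the successor of a periodic $R_a$ is correct and merely makes explicit what the paper leaves implicit. The one caveat is shared with the statement itself rather than being a defect of your argument: the floors actually give $\ell_{j+1}\le \tfrac32\ell_j+1$ (e.g.\ consecutive lengths $1$ and $2$), so property~(c) as literally written holds only up to this additive $1$, a point that neither your proof nor the paper's (which calls (a)--(c) straightforward) resolves.
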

\begin{claimproof} Properties~(\ref{claim:enum-corner}), (\ref{claim:enum-size}), and (\ref{claim:enum-length}) are straightforward.
    For Property~(\ref{claim:enum-close}), there are two possible cases:
    if $P_{j+1}$ is $k$-mismatch periodic, \cref{claim:same-period} implies that $P_{j+1}$ has the same $k$-mismatch period $Q_j$ as $P_j$, that is $\hd{P_{j+1}}{Q_j^\infty} \le 2k$.
    Otherwise, by construction,  $P_{j+1}$ is the shortest extension of $P_j$ that is not $k$-mismatch periodic.
    By minimality, removing its last character yields a $k$-mismatch periodic prefix,
    and by~\cref{claim:same-period}, it has the same $k$-mismatch period $Q_j$
    as $P_j$, i.e., we have $\hd{P[\dd \ell_{j+1})}{Q_j^\infty} \le 2k$ for $i < \ell_j$.
    Adding one more character to $P[\dd \ell_{j+1})$ can increase the Hamming distance by at most one.
\end{claimproof}

\subparagraph*{Processing the pattern.}
In the preprocessing phase, we build $\Pp$ and, for each $k$-mismatch periodic prefix $P_j\in \Pp\setminus\{P\}$,
we also retrieve the period $Q_j$ (represented as a fragment of $P_j$) and the mismatch information $\MI(P_{j+1},Q_j^\infty)$.
For subsequent indices $j\in [0\dd \lceil \log_{3/2} m\rceil]$, we add the prefix $R_j$ to $\Pp$.
If $R_j\ne P$, we apply \cref{lemma:compute-ham-period} to test whether
$R_j$ is $k$-mismatch periodic and, if so, retrieve the period~$Q$.
If~$R_j$ is $k$-mismatch periodic, we build $\MI(R_j, Q^\infty)$ and extend $R_j$ while maintaining the mismatch information with the appropriate prefix of~$Q^\infty$.
We proceed until we reach length $|R_{j+1}|$ or $2k+1$ mismatches, whichever comes first.
We add the obtained extension $R_j'$ to $\Pp$ and store the mismatch information $\MI(R_j',Q^\infty)$.
If $\hd{R_j'}{Q^\infty}\le 2k$, then $R_j'=R_{j+1}$ is $k$-mismatch periodic with the same period $Q$.
Otherwise, by \cref{claim:same-period}, neither $R_j'$ nor $R_{j+1}$ are $k$-mismatch periodic.
Processing each~$j$ takes $\Oh(|R_{j+1}|k)$ time and $\Oh(k)$ space, for a total of $\Oh(mk)$ time and $\Oh(k\log m)$ space across $j\in [0\dd \lceil \log_{3/2} m\rceil]$.

\subparagraph*{Processing the text.}
Our online algorithm processing the text $T$ consists of $t=|\Pp|$ layers, each of which reports the $k$-mismatch occurrences of $P_j\in \Pp$,
along with the underlying mismatch information.

The first layer, responsible for $P_1=P[1]$, is implemented naively in $\Oh(1)$ space and time per character.

Each of the subsequent layers receives the $k$-mismatch occurrences of $P_j$ and outputs the $k$-mismatch occurrences of $P_{j+1}$.
The processing is based on the following simple observation:

\begin{observation}\label{obs:ro-km-filter}
    If $P_{j+1}$ has a $k$-mismatch occurrence at position $i$ of $T$, then $P_j$ has a $k$-mismatch occurrence at position $i-\ell_{j+1}+\ell_j$ of $T$.
\end{observation}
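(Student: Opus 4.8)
The plan is to observe that, by construction, $P_j$ is a prefix of $P_{j+1}$: both are prefixes of $P$ and $\ell_j = |P_j| < |P_{j+1}| = \ell_{j+1}$, so $P_j = P_{j+1}[\dd \ell_j]$. The whole statement is then just the fact that restricting an alignment to a prefix cannot increase the number of mismatches.

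Concretely, suppose $P_{j+1}$ has a $k$-mismatch occurrence at position $i$ of $T$, i.e., $\hd{T(i-\ell_{j+1}\dd i]}{P_{j+1}}\le k$. Set $i' = i - \ell_{j+1} + \ell_j$. First I would note that $T(i'-\ell_j\dd i'] = T(i-\ell_{j+1}\dd i-\ell_{j+1}+\ell_j]$ is exactly the length-$\ell_j$ prefix of the fragment $T(i-\ell_{j+1}\dd i]$, while $P_j$ is the length-$\ell_j$ prefix of $P_{j+1}$. Since the Hamming distance between two equal-length strings is at least the Hamming distance between their length-$\ell_j$ prefixes (each mismatch in the prefixes is also a mismatch in the full strings), we get
\[
\hd{T(i'-\ell_j\dd i']}{P_j} \;\le\; \hd{T(i-\ell_{j+1}\dd i]}{P_{j+1}} \;\le\; k,
\]
so $i' = i-\ell_{j+1}+\ell_j$ is a $k$-mismatch occurrence of $P_j$ in $T$, as claimed.

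There is essentially no obstacle here; the only things to double-check are the index bookkeeping (that $T(i'-\ell_j\dd i']$ lines up with the first $\ell_j$ characters of $T(i-\ell_{j+1}\dd i]$, which follows from $i' - \ell_j = i - \ell_{j+1}$) and the trivial monotonicity of Hamming distance under taking prefixes. This observation will then be used to pass candidate positions from layer $j$ to layer $j+1$ of the online algorithm, with the structural results of \cref{fact:ham-k-occ} and \cref{claim:ro-sq-pref} bounding how many such candidates survive.
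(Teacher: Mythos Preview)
Your proof is correct and is exactly the intended argument: the paper states this as an observation without giving a proof, and the natural justification is precisely the one you wrote—$P_j$ is a prefix of $P_{j+1}$, and restricting to the first $\ell_j$ characters of both $P_{j+1}$ and $T(i-\ell_{j+1}\dd i]$ can only decrease the Hamming distance.
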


We partition $T$ into blocks of length $b:=\lceil \ell_{j}/2 \rceil$ and, for each block $T(rb\dd (r+1)b]$, use a separate subroutine to output $k$-mismatch occurrences of $P_{j+1}$ at positions $i\in (rb\dd (r+1)b]$. This subroutine receives the $k$-mismatch occurrences of $P_{j}$ at positions $i-\ell_{j+1}+\ell_j \in (rb-\ell_{j+1}+\ell_j\dd (r+1)b-\ell_{j+1}+\ell_j]$. It is considered \emph{active} as the algorithm reads $T(rb-\ell_{j+1}+\ell_j\dd (r+1)b]$; since $\ell_{j+1}\le \frac32\ell_j$, at most two subroutines are active at any given time. The implementation of the subroutine depends on whether $P_j$ is $k$-mismatch periodic or not.

\subparagraph*{\boldmath $P_j$ is not $k$-mismatch periodic.}
In this case, for every received $k$-mismatch occurrence~$i'$ of~$P_j$, the subroutine stores the mismatch information $\MI(T(i'-\ell_j\dd i'],P_j)$ and, as
the algorithm receives subsequent characters $T[i]$ for $i\in (i'\dd i'+\ell_{j+1}-\ell_j]$, we maintain $\MI(T(i'-\ell_j\allowbreak \dd i], P[\dd \ell_j+i-i'])$ as long as there are at most $k$ mismatches. If this is still the case for $i=i'+\ell_{j+1}-\ell_j$, we report a $k$-mismatch occurrence of $P_{j+1}$ and output $\MI(T(i'-\ell_j\allowbreak \dd i], P[\dd \ell_j+i-i'])=\MI(T(i-\ell_{j+1}\dd i], P_{j+1})$. By \cref{obs:ro-km-filter}, no $k$-mismatch occurrence of $P_{j+1}$ is missed.
Moreover, \cref{fact:ham-k-occ} guarantees that the subroutine receives $\Oh(k)$ $k$-mismatch occurrences of $P_j$, and thus it uses $\Oh(k)$ space and $\Oh(k)$ time per character.

\subparagraph*{\boldmath $P_j$ is $k$-mismatch periodic with period $Q_j$.}
In this case, we wait for the leftmost $k$-mismatch occurrence $p\in (rb-\ell_{j+1}+\ell_j\dd (r+1)b-\ell_{j+1}+\ell_j]$ of $P_j$ and ignore all the subsequent occurrences of $P_j$. We use the received mismatch information $\MI(T(p-\ell_j\dd p], P_j)$ and the preprocessed mismatch information $\MI(P_{j+1},Q_j^\infty)$ to construct $\MI(T(p-\ell_j\dd p], Q_j^\infty)$; by the triangle inequality, the size of this set is guaranteed to be at most $3k$.
As the algorithm receives subsequent characters of $T[i]$ for $i\in (p\dd (r+1)b]$, we maintain $\MI(T(p-\ell_j\dd i], Q_j^\infty)$ as long as the number of mismatches does not exceed $6k+1$. Whenever $i\ge p+\ell_{j+1}-\ell_j$ and $i\equiv p+\ell_{j+1}-\ell_j \pmod{|Q_j|}$, we extract $\MI(T(i-\ell_{j+1}\dd i], Q_j^\infty)$
from $\MI(T(p-\ell_j\dd i], Q_j^\infty)$ and use the precomputed mismatch information $\MI(P_{j+1},Q_j^\infty)$ to construct $\MI(T(i-\ell_{j+1}\dd i], P_j)$. If it is of size at most $k$, we report $i$ as a $k$-mismatch occurrence of $P_j$.

As for the correctness, we argue that we miss no $k$-mismatch occurrence $i\in (rb\dd (r+1)b]$ of $P_{j+1}$ in $T$.
Since $\hd{T(i-\ell_{j+1}\dd i]}{P_{j+1}}\le k$ and $\hd{P_{j+1}}{Q_j^\infty}\le 2k+1$, we have $\hd{T(i-\ell_{j+1}\dd i]}{Q_{j}^\infty}\le 3k+1$.
Moreover, by \cref{obs:ro-km-filter}, $i-\ell_{j+1}+\ell_j$ is a $k$-mismatch occurrence of $P_j$.
\Cref{fact:ham-k-occ} further implies that $i-\ell_{j+1}+\ell_j \equiv p \pmod{|Q_j|}$ and $\hd{T(p-\ell_{j}\dd i-\ell_{j+1}]}{Q_j^\infty}\le 3k$.
Consequently, $\hd{T(p-\ell_{j}\dd i]}{Q_j^\infty}\le 6k+1$, and thus we compute $\MI(T(i-\ell_{j+1}\dd i], Q_j^\infty)$
and report $i$ as a $k$-mismatch occurrence of $P_{j+1}$.

We conclude with the complexity analysis: the working space is $\Oh(k)$, dominated by the maintained mismatch information.
Moreover, whenever we compute $\MI(T(i-\ell_{j+1}\dd i], P_j)$, the size of this set is, by the triangle inequality, at most $6k+1+2k+1\le 8k+2$,
and it can be computed in $\Oh(k)$ time.

\subparagraph*{Summary.}
Overall, each subroutine of each level takes $\Oh(k)$ space and $\Oh(k)$ time per character. Since there are $t=\Oh(\log m)$ levels and each level contains at most two active subroutines, the algorithm takes $\Oh(k\log m)$ space and $\Oh(k\log m)$ time per text character.
Although our pattern preprocessing algorithm is an offline procedure, we can run it while the algorithm reads the first $m/2$ characters of the text.
Then, while the algorithm reads further $m/2$ characters, it can process two characters at a time to catch up with the input stream. This does not result in any delay on the output because the leftmost $k$-mismatch occurrence of~$P$ is at position $m$ or larger.

\subsection{Read-only algorithm for \texorpdfstring{\boldmath \kpalh}{k-LHD-PAL}}\label{sec:ro-pal-ham}
\begin{restatable}{theorem}{thmropalham}\label{thm:ro-pal-ham}
    There is a deterministic online algorithm that solves the \kpalh{} problem for a string of length $n$
    using $\Oh(k\log n)$ space and $\Oh(k\log n)$ worst-case time per character.
\end{restatable}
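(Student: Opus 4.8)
The plan is to reduce \kpalh{} to the online $k$-mismatch pattern-matching problem of \cref{thm:ro-km-algo}, using a hierarchy of geometrically growing prefixes of $T$ as filters and reading the exact distances off the mismatch information carried by the reported occurrences.

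By Property~\ref{prop:kpalham}, $\hd{T[1\dd i]}{\PAL}=\hd{T[1\dd\lfloor i/2\rfloor]}{(T(\lceil i/2\rceil\dd i])^R}$, which equals the number of indices $t\in[1\dd\lfloor i/2\rfloor]$ with $T[t]\ne T[i+1-t]$ --- the number of broken ``mirror pairs'' of the unique palindrome of length $i$ centred at $(i+1)/2$. I build a family $\Pp=(P_j)_j$ of prefixes of $T$ with $|P_1|=1$, $|P_j|\le|P_{j+1}|\le\tfrac32|P_j|$, and $|P_{|\Pp|}|\ge n/2$, and set $\ell_j=|P_j|$ (so $|\Pp|=O(\log n)$). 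For prefix length $i$ let $j(i)$ be the largest index with $\ell_{j(i)}\le\lfloor i/2\rfloor$. If $\hd{T[1\dd i]}{\PAL}\le k$ then, restricting the count to $t\le\ell_{j(i)}$, the window $T(i-\ell_{j(i)}\dd i]$ is a $k$-mismatch occurrence of $(P_{j(i)})^R$ ending at position $i$, and $\MI(T(i-\ell_{j(i)}\dd i],(P_{j(i)})^R)$ is precisely the set of broken mirror pairs whose smaller endpoint lies in $[1\dd\ell_{j(i)}]$. Hence it suffices, for each $i$, to detect whether $(P_{j(i)})^R$ has a $k$-mismatch occurrence ending at $i$ (if not, report $k+1$) and, if so, to additionally account for the broken mirror pairs with smaller endpoint in $(\ell_{j(i)}\dd\lfloor i/2\rfloor]$.

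For the detection I run, for each $j$, one instance of \cref{thm:ro-km-algo} with pattern $(P_j)^R$ --- available by reverse reads of the already-read prefix $T[1\dd\ell_j]$ --- fed only the window of $T$ (of length $\Theta(\ell_j)$) that can contain a $k$-mismatch occurrence ending while $j(i)=j$; the $O(\ell_j k)$ pattern preprocessing is spread over the $\Theta(\ell_j)$ characters preceding this window, exactly as at the end of the proof of \cref{thm:ro-km-algo}. Since $\ell_{j+1}\le\tfrac32\ell_j$, these windows overlap only $O(1)$-fold, so, as in the analysis of \cref{sec:ksqh}, at most $O(1)$ instances contribute to running time and space at any moment; with $O(\log n)$ instances this is $O(k\log n)$ space and $O(k\log n)$ time per character.

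It remains to compute the ``tail'' count of broken mirror pairs with smaller endpoint in $(\ell_{j(i)}\dd\lfloor i/2\rfloor]$. By Property~\ref{prop:kpalham} again, this equals $\hd{T(\ell_{j(i)}\dd i-\ell_{j(i)}]}{\PAL}$, i.e.\ the very same quantity for the internal substring $T(\ell_{j(i)}\dd i-\ell_{j(i)}]$, which --- as $i$ ranges over the positions with $j(i)=j$ --- is a prefix of the fixed suffix $T(\ell_j\dd]$, of length $i-2\ell_j<(i+2)/3$. So the natural move is to \emph{recurse}: run the same scheme anchored at each $\ell_j$, and within it anchored at each family position of $T(\ell_j\dd]$, and so on; since the remaining length shrinks by a constant factor per level, the recursion has depth $O(\log n)$ and, at any instant, its active sub-instances form a single chain. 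I expect the main obstacle to be keeping the total cost at $O(k\log n)$ rather than $O(k\log^2 n)$ per character, i.e.\ making each recursion level pay only $O(k)$ per character. Here I would combine three ingredients: the dichotomy of \cref{fact:ham-k-occ} (in the $k$-mismatch periodic case the tail continues the period, so its broken pairs are recovered from the $O(k)$-size mismatch information against the period maintained anyway; in the aperiodic case the filter lets only $O(k)$ candidate positions through each layer); sharing the internal filtering hierarchies of the sub-instances, since they all run on prefixes of suffixes of $T$, just as the patterns $P_j$ share preprocessing in \cref{sec:ksqh}; and the usual scheduling trick (already used for the pattern preprocessing above) of carrying out the work slightly ahead of the stream so that the per-character bound becomes worst-case rather than amortised.
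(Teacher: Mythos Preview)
Your reduction to \cref{thm:ro-km-algo} via geometrically growing prefixes is the right framework, but you make the problem harder than it is by choosing $j(i)$ to be the \emph{largest} index with $\ell_{j(i)}\le\lfloor i/2\rfloor$. With that choice the mismatch information of $(P_{j(i)})^R$ only covers mirror pairs with smaller endpoint in $[1\dd\ell_{j(i)}]$, and you are forced into a recursion on the tail $T(\ell_{j(i)}\dd i-\ell_{j(i)}]$. You do not carry this recursion out; the sketch you offer (periodic/aperiodic dichotomy, sharing sub-instances, scheduling) is too vague to establish the claimed $O(k\log n)$ per character, and a straightforward execution would give $O(k\log^2 n)$.

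The paper avoids the recursion entirely by shifting $j$ up by one: take $j$ with $i\in(2\ell_{j-1}\dd 2\ell_j]$, so that $\ell_j\ge\lfloor i/2\rfloor$ while still $\ell_j<i$ (since $2\ell_{j-1}\ge\ell_j$). Then $T[1\dd\lfloor i/2\rfloor]$ is a \emph{prefix} of $P_j$. Now run \cref{thm:ro-km-algo} with threshold $2k$ instead of $k$: if $\hd{T[\dd i]}{\PAL}\le k$, then $\hd{T[\dd i]}{T[\dd i]^R}\le 2k$, and restricting to the last $\ell_j$ positions shows that $i$ is a $2k$-mismatch occurrence of $P_j^R$. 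The returned mismatch set $\MI(T(i-\ell_j\dd i],P_j^R)$ already contains the full answer: discard the entries in the leftmost $\ell_j-\lfloor i/2\rfloor$ positions to obtain $\MI(T(i-\lfloor i/2\rfloor\dd i],T[\dd\lfloor i/2\rfloor]^R)$, whose size is exactly $\hd{T[\dd i]}{\PAL}$. No tail, no recursion, and the rest of your scheduling analysis (constantly many active levels, each costing $O(k\log n)$ per character) then gives the theorem directly.
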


The algorithm uses a filtering approach to select positions where a prefix close to \PAL can end.
Define a family $\Pp = \{P_j = T[\dd \lfloor (3/2)^j\rfloor] : j\in [1\dd \lfloor \log_{3/2} n \rfloor]\}$ of prefixes of the text, and let $\ell_j = |P_j|$,
setting $\ell_0=0$ for notational convenience.

\begin{restatable}{claim}{palroham}\label{claim:correct_pal_ro_ham}
Consider $j\in [1\dd \lfloor \log_{3/2} n \rfloor]$ and a position $i\in (2\ell_{j-1}\dd 2\ell_j]$.
If $\hd{T[\dd i]}{\PAL} \le k$, then $i$ is a $2k$-mismatch occurrence of $P_{j}^R$ in $T$.
Moreover, $\hd{T[\dd i]}{\PAL}=\hd{T(i-i'\dd i]}{P_{j}[1\dd i')^R}$ for $i'=\lfloor i/2 \rfloor$.
\end{restatable}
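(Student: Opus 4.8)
My plan is to derive everything from \cref{prop:kpalham} applied to the single string $U=T[\dd i]$. Write $i'=\lfloor i/2\rfloor$ and note the elementary identity $i-i'=\lceil i/2\rceil$, so that the ``second half'' fragment $U(\lceil i/2\rceil\dd]$ occurring in \cref{prop:kpalham} is precisely $T(i-i'\dd i]$. Feeding $U=T[\dd i]$ into \cref{prop:kpalham} then yields two facts in one shot: $\hd{T[\dd i]}{\PAL}=\tfrac12\hd{T[\dd i]}{T[\dd i]^R}$ and $\hd{T[\dd i]}{\PAL}=\hd{T[\dd i']}{T(i-i'\dd i]^R}$.

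For the ``moreover'' equality I would start from the second identity. Since $i\in(2\ell_{j-1}\dd 2\ell_j]$ gives $i\le 2\ell_j$, we get $i'\le\ell_j$, so $T[\dd i']$ is a prefix of $P_j=T[\dd \ell_j]$, i.e.\ $T[\dd i']=P_j[\dd i']$. Using that Hamming distance is invariant under reversing \emph{both} arguments, $\hd{T[\dd i']}{T(i-i'\dd i]^R}=\hd{T(i-i'\dd i]}{T[\dd i']^R}=\hd{T(i-i'\dd i]}{P_j[\dd i']^R}$, which is the asserted expression for $\hd{T[\dd i]}{\PAL}$ (up to the index notation for the length-$i'$ prefix of $P_j$).

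For the first part, I would combine the hypothesis $\hd{T[\dd i]}{\PAL}\le k$ with the first identity to get $\hd{T[\dd i]}{T[\dd i]^R}\le 2k$. A short numeric check with $\ell_j=\lfloor (3/2)^j\rfloor$ (and the convention $\ell_0=0$) shows $2\ell_{j-1}\ge \ell_j$, hence $i>2\ell_{j-1}\ge\ell_j$; in particular $T[\dd i]$ and $T[\dd i]^R$ both have length at least $\ell_j$, and their length-$\ell_j$ prefixes are $P_j$ and $(T(i-\ell_j\dd i])^R$, respectively. Restricting a Hamming-distance comparison to a common prefix cannot increase it, so $\hd{P_j}{(T(i-\ell_j\dd i])^R}\le 2k$, and reversing both arguments gives $\hd{P_j^R}{T(i-\ell_j\dd i]}\le 2k$ --- exactly the statement that $i$ is a $2k$-mismatch occurrence of $P_j^R$ in $T$.

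I expect the whole argument to be routine bookkeeping rather than an ``obstacle'': the only points that need care are (i)~the floor/ceiling identity $i-i'=\lceil i/2\rceil$ so that the fragment in the claim lines up with the half produced by \cref{prop:kpalham}; (ii)~the length bounds $i'\le\ell_j$ and $\ell_j\le i$, which guarantee that the relevant fragments of $P_j$ and of $T$ actually exist and that the prefix-restriction step is legitimate; and (iii)~the inequality $2\ell_{j-1}\ge\ell_j$ for the geometric sampling $\ell_j=\lfloor(3/2)^j\rfloor$, which is what upgrades $i>2\ell_{j-1}$ to $i\ge\ell_j$ so that ``$2k$-mismatch occurrence of $P_j^R$'' is even well-posed.
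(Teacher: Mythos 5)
Your proof is correct and follows essentially the same route as the paper's: both parts reduce to Property~\ref{prop:kpalham} applied to $U=T[\dd i]$, with the bound $i>2\ell_{j-1}\ge\ell_j$ justifying the restriction to the length-$\ell_j$ halves (you compare prefixes of $T[\dd i]$ and $T[\dd i]^R$ where the paper compares suffixes, which is the same step) and $i'\le\ell_j$ justifying $T[\dd i']=P_j[\dd i']$ for the ``moreover'' part. Your parenthetical about the index notation is right as well: the fragment should be the length-$i'$ prefix of $P_j$, matching what the paper's own proof actually derives.
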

\begin{claimproof}
    Note that $i>2\ell_{j-1}\ge \ell_j$ implies that $P_{j}$ is a prefix of $T[\dd i]$ and, equivalently, $P_{j}^R$ is a suffix of $T[\dd i]^R$. \Cref{prop:kpalham} implies $2\cdot \hd{T[\dd i]}{\PAL} = \hd{T[\dd i]}{T[\dd i]^R}\ge \hd{T(i-\ell_{j}\dd i]}{P_{j}}$.
    Thus, if $\hd{T[\dd i]}{\PAL} \le k$, then $i$ is a $2k$-mismatch occurrence of $P_{j}$ in $T$.
    Since $T[\dd i']$ is a prefix of~$P_{j}$, \cref{prop:kpalham} further implies $\hd{T[\dd i]}{\PAL}=\hd{T(i-i'\dd i]}{T[\dd i']^R}=\hd{T(i-i'\dd i]}{P_{j}[1\dd i')^R}$.  
\end{claimproof}

The algorithm constructs the family $\Pp$ as it reads the text. For each level $j$, we implement a subroutine responsible for positions $i\in (2\ell_{j-1}\dd 2\ell_j]$.
First, while reading $T[\ell_{j}\dd 2\ell_{j-1})$, we launch the pattern-matching algorithm of \cref{thm:ro-km-algo} in order to compute the $2k$-mismatch occurrences of $P_j^R$ in $T_j = T[\dd 2\ell_{j})$ and feed the pattern-matching algorithm with the pattern $P_j$ and a prefix $T[\dd 2\ell_{j-1})$ of $T_j$, ignoring any output produced.
The total number of characters provided is $\ell_j + 2\ell_{j-1} \le 7\cdot (2\ell_{j-1}-\ell_j)$, so we can feed the algorithm with $\Oh(1)$ characters for every scanned character of $T$.
Then, while reading $T[2\ell_{j-1}\dd 2\ell_{j})$, we feed the pattern-matching algorithm with subsequent characters of $T$.
For every reported $2k$-mismatch occurrence $i$ of $P_j^R$ in $T_j$, we retrieve the mismatch information $\MI(T(i-\ell_j\dd i], P_j^R)$ and obtain 
$\MI(T(i-i'\dd i], P_j[\dd i']^R)$ by removing the entries corresponding to the leftmost $\ell_j-i'$ positions.
We report the size of this set (or $\infty$ if the size exceeds $k$) as $\hdk{T[\dd i]}{\PAL}{k}$.

By \cref{claim:correct_pal_ro_ham}, all positions $i\in (2\ell_{j-1}\dd 2\ell_j]$ such that $\hd{T[\dd i]}{\PAL} \le k$ pass the test and the distance $\hd{T[\dd i]}{\PAL}$ is equal to the size of the set $\MI(T(i-i'\dd i], P_j[\dd i']^R)$.
As for the complexity analysis, observe that, for each level $j$, the pattern-matching algorithm uses~$\Oh(k\cdot j)$ space and takes $\Oh(k\cdot j)$ time per character. Since, at any time, there is a constant number of active levels, the main algorithm uses $\Oh(k\log n)$ space and takes $\Oh(k\log n)$ time per character.

\subsection{Read-only algorithm for \texorpdfstring{\boldmath \ksqh}{k-LHD-SQ}}\label{sec:ro-sq-ham}
\begin{restatable}{theorem}{thmrosqham}\label{thm:ro-sq-ham}
    There is a deterministic online algorithm that solves the \ksqh{} problem for a string $T \in \Sigma^n$
    using $\Oh(k \log n)$ space and $\Oh(k\log n)$ worst-case time per character.
\end{restatable}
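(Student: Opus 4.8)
The plan is to transfer the filtering strategy behind the streaming algorithm (\cref{th:square_ham}) and the read-only palindrome algorithm (\cref{thm:ro-pal-ham}) to the deterministic setting, replacing the Hamming sketches by the online $k$-mismatch matcher of \cref{thm:ro-km-algo}. By \cref{prop:k-sq-ham}, odd positions receive the answer $\infty$, while at an even position $2i$ we must output $\hdk{T[\dd i]}{T(i\dd 2i]}{k}$. As the text is read I build the same augmented prefix family as in \cref{thm:ro-km-algo}, but applied to $T$ in place of the pattern: the geometric prefixes $T[\dd\lfloor(3/2)^j\rfloor]$ together with the shortest non-$k$-mismatch-periodic extension whenever periodicity is about to break. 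Writing $\Pp=(P_j)_j$ and $\ell_j=|P_j|$, the argument of \cref{claim:ro-sq-pref} gives $\ell_j\le\ell_{j+1}\le 3\ell_j/2$, and I make level $j$ responsible for the even positions $2i$ with $i\in[\ell_j\dd\ell_{j+1})$ (positions with $i<\ell_1$ are handled naively in $\Oh(k)$ time each). By \cref{obs:ksqh} with $\ell=\ell_j\le i$, if $T[\dd 2i]$ is a $k$-mismatch square then $i+\ell_j$ is a $k$-mismatch occurrence of $P_j$ in $T$; since $i+\ell_j\in[2\ell_j\dd\ell_j+\ell_{j+1})$, the relevant window of $T$ has length at most $3|P_j|/2$, which is exactly what makes \cref{fact:ham-k-occ} and \cref{thm:ro-km-algo} applicable. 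As in the proofs of \cref{thm:ro-km-algo} and \cref{thm:ro-pal-ham}, all $P_j$ are prefixes of $T$, the pattern preprocessing is amortised a constant number of characters at a time, only $\Oh(1)$ levels are simultaneously active, and no output delay results; thus this scaffolding costs $\Oh(k\log n)$ space and time per character.

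In contrast to the palindrome case --- where the target distance is a sub-comparison of the reported occurrence, obtained by trimming the mismatch information --- here $\hd{T[\dd i]}{T(i\dd 2i]}$ is a comparison of length $i\ge\ell_j$, while a $k$-mismatch occurrence of $P_j$ at $i+\ell_j$ only certifies its first $\ell_j$ aligned comparisons (it yields $\MI(T(i\dd i+\ell_j],P_j)$). The occurrence must therefore be \emph{completed} by comparing $T(\ell_j\dd i]$ with $T(i+\ell_j\dd 2i]$, whose indices are all $\le 2i$ and hence already read. If $P_j$ is not $k$-mismatch periodic, \cref{fact:ham-k-occ} bounds the number of $k$-mismatch occurrences of $P_j$ in the window by $\Oh(k)$; I run \cref{thm:ro-km-algo} for $P_j$, and for each of the $\Oh(k)$ resulting candidates I extend its running mismatch multiset one character at a time, discarding it once it exceeds $k$ and otherwise reporting its size at position $2i$; by \cref{obs:ksqh} no $k$-mismatch square is missed. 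At any moment only $\Oh(k)$ completions are active, each $\Oh(1)$ per character, so this branch costs $\Oh(k)$ per character on top of \cref{thm:ro-km-algo}.

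The main obstacle is the case where $P_j$ is $k$-mismatch periodic with primitive period $Q_j$: by \cref{def:k-mism-periodic} this forces $\ell_j\ge 128k$ and $|Q_j|\le\ell_j/128k$, yet there can be $\Omega(\ell_j/|Q_j|)$ occurrences of $P_j$ in the window, far too many to complete individually. Here I would work relative to $Q_j^\infty$. By the construction of $\Pp$ and \cref{claim:same-period}, $T[\dd i]$ is $k$-mismatch periodic with period $Q_j$ for every $i$ owned by level $j$ (the periodicity either persists past $\ell_{j+1}$ or breaks exactly at $\ell_{j+1}$), hence $T[\dd i]$ is $2k$-close to $Q_j^\infty$; by \cref{fact:ham-k-occ}, all $k$-mismatch occurrences of $P_j$ in the window share a residue $\rho$ modulo $|Q_j|$, so every index $i$ for which $T[\dd 2i]$ can be a $k$-mismatch square satisfies $i\equiv\rho-\ell_j=:r\pmod{|Q_j|}$. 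I still run \cref{thm:ro-km-algo} for $P_j$ (to identify the occurrences, and thereby $\rho$), and I maintain online the mismatch informations of $T[\dd m]$ against $Q_j^\infty$ in phases $0$ and $r$, each as long as its size stays $\Oh(k)$ --- once a set exceeds its threshold the periodicity is violated and the relevant positions cannot be $k$-mismatch squares. For an index $i\equiv r$: combining the square identity $\hd{T[\dd i]}{T(i\dd 2i]}\le k$ with the periodicity of $T[\dd i]$, with $\hd{P_j}{Q_j^\infty}\le 2k$, and with the occurrence information $\MI(T(i\dd i+\ell_j],P_j)$ shows that $T[\dd i]$ is $\Oh(k)$-close to $Q_j^\infty$ in phase $0$ and $T(i\dd 2i]$ is $\Oh(k)$-close to $Q_j^\infty$ in phase $r$; since $i\equiv r$ these two alignments coincide position-by-position, so the two halves agree at every index outside the $\Oh(k)$ recorded mismatches, and $\hdk{T[\dd i]}{T(i\dd 2i]}{k}$ is read off from those $\Oh(k)$ entries in $\Oh(k)$ time (and if a maintained set has already grown beyond its threshold we output $k+1$). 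Indices $i\not\equiv r$ are rejected outright with output $k+1$. Since there are $\Oh(\ell_j)$ positions in the range and each costs $\Oh(k)$, this branch also costs $\Oh(k)$ per character on top of \cref{thm:ro-km-algo}.

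Summing over the $\Oh(\log n)$ levels with $\Oh(1)$ active at a time --- each contributing $\Oh(k\log n)$ for its \cref{thm:ro-km-algo} instance plus $\Oh(k)$ for filtering, completion, or periodic bookkeeping --- yields the claimed $\Oh(k\log n)$ space and $\Oh(k\log n)$ worst-case time per character. The step that will need the most care is the periodic branch: making rigorous the claim that, when $T[\dd 2i]$ is a $k$-mismatch square, the second half $T(i\dd 2i]$ inherits the approximate periodicity of $T[\dd i]$ over the whole range required by the completion (this is precisely where the augmented family $\Pp$ and \cref{claim:same-period} are needed), correctly handling the phase bookkeeping of $Q_j$ --- including the fact, via \cref{fact:ham-k-occ}, that all occurrences fall into a single residue class, and the positions $i$ near where the periodicity of $T[\dd i]$ ends --- and fixing the implied constants so that the mismatch thresholds and the residue test genuinely separate the $k$-mismatch squares from the rest.
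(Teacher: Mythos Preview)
Your proposal is correct and follows essentially the same approach as the paper: the augmented prefix family from \cref{thm:ro-km-algo} applied with $P=T$, filtering via \cref{obs:ksqh}, character-by-character completion of the $\Oh(k)$ candidates in the non-periodic case, and residue-aligned combination of $\Oh(k)$-size mismatch sets against $Q_j^\infty$ in the periodic case. The paper's periodic bookkeeping differs only cosmetically from your phase-$0$/phase-$r$ plan: it anchors the running mismatch set at the leftmost occurrence $p$, maintaining $\MI(T(p-\ell_j\dd i],Q_j^\infty)$ up to size $6k+1$, and reads the first-half mismatches off the precomputed $\MI(P_{j+1},Q_j^\infty)$ --- this sidesteps the one pitfall in your phrasing, namely that a phase-$r$ mismatch set maintained over \emph{all} of $T[\dd m]$ would be unbounded on the first half whenever $r\not\equiv 0\pmod{|Q_j|}$.
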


Our algorithm is very similar to the pattern-matching algorithm of \cref{thm:ro-km-algo}.
We use the same sequence $\Pp = (P_j)_{j=1}^t$ of prefixes, now defined for $P=T$.
Again, we set $\ell_j=|P_j|$ for $j\in [1\dd t]$.
Instead of \cref{obs:ro-km-filter}, we use \cref{obs:ksqh} to argue that our filtering procedure is correct.

\subparagraph*{\boldmath Processing $\Pp$.}
We build $\Pp$ in an online fashion so that the prefix $P_j$ is constructed while scanning $T(\ell_j\dd \lceil 3\ell_{j}/2\rceil ]$.
If $P_j$ is $k$-mismatch periodic, then we also identify $P_{j+1}$ and build $\MI(P_{j+1},Q_j^\infty)$.

For subsequent indices $j\in [0\dd \lfloor{\log_{3/2} n}\rfloor]$, we add the prefix $R_j$ to $\Pp$ as soon as it has been read.
Then, we launch an offline procedure that applies \cref{lemma:compute-ham-period} to test whether~$R_j$ is $k$-mismatch periodic and, if so, retrieves the period $Q$.
If~$R_j$ is $k$-mismatch periodic, we build $\MI(R_j, Q^\infty)$ and extend $R_j$ while maintaining the mismatch information with the appropriate prefix of~$Q^\infty$.
We proceed until we reach length $|R_{j+1}|$ or $2k+1$ mismatches, whichever comes first.
We add the obtained extension $R_j'$ to $\Pp$ and store the mismatch information $\MI(R_j',Q^\infty)$.
If $\hd{R_j'}{Q^\infty}\le 2k$, then $R_j'=R_{j+1}$ is $k$-mismatch periodic with the same period $Q$.
Otherwise, by \cref{claim:same-period}, neither $R_j'$ nor $R_{j+1}$ are $k$-mismatch periodic.
Processing each $j$ takes $\Oh(|R_{j+1}|k)$ time and $\Oh(k)$ space, and this computation needs to be completed while the algorithm reads $T(|R_j|\dd |R_{j+1}|]$. This gives $\Oh(k)$ time per position since $\lfloor \frac32 |R_{j}|\rfloor \le |R_{j+1}|\le \lceil \frac32 |R_{j}|\rceil$.

Across all indices $j\in [0\dd \lfloor{\log_{3/2} n}\rfloor]$, the preprocessing algorithm takes $\Oh(k)$ space and time per character (since no two indices are processed simultaneously).

\subparagraph*{Computing the distances.}
For each level $j\in [1\dd t]$, we implement a subroutine responsible for even positions $i\in [2\ell_j\dd 2\ell_{j+1})$;
this procedure is active as we read $T[\ell_j\dd 2\ell_{j+1})$.
As described above, the pattern $P_j$ is identified while the algorithm reads $T(\ell_j\dd \lceil 3\ell_{j}/2\rceil ]$
and, if~$P_j$ is $k$-mismatch periodic, the period $Q_j$ and the mismatch information $\MI(P_{j+1},Q_j^\infty)$ are also computed at that time.
While reading $T[\lceil 3\ell_{j}/2\rceil\dd 2\ell_j)$, we launch the pattern-matching algorithm of \cref{thm:ro-km-algo} to report the $k$-mismatch occurrences of $P_j$ in $T_j=T[\dd \ell_{j}+\ell_{j+1})$ and feed this algorithm with the pattern $P_j$ and the prefix $T[\dd 2\ell_j)$ of the text~$T_j$.
The total number of characters provided is $3\ell_j \le 6\cdot \frac12 \ell_j$, so can feed the pattern-matching algorithm with $\Oh(1)$ character for every scanned character of $T$.
Then, while reading $T[2\ell_j\dd \ell_j+\ell_{j+1})$, we feed the pattern-matching algorithm subsequent text characters. For every $i'\in [2\ell_j\dd \ell_j+\ell_{j+1})$, we learn whether $i'$ is a $k$-mismatch occurrence of $P_j$ and, if so, we obtain the mismatch information $\MI(P_j,T(i'-\ell_j\dd i'])$.
How we utilise this output depends on whether $P_j$ is $k$-mismatch periodic or not:
if $P_j$ is not $k$-mismatch periodic, then~$T_j$ contains $O(k)$ $k$-mismatch occurrences of $P_j$
and storing them explicitly requires little space.
When $P_j$ is $k$-mismatch periodic, $T_j$ must exhibit similar periodicity, which we can use to
avoid storing all occurrences explicitly.

\subparagraph*{\boldmath $P_j$ is not $k$-mismatch periodic.}
In this case, for every received $k$-mismatch occurrence $i'$ of~$P_j$, we store the mismatch information $\MI(T(i'-\ell_j\dd i'],P_j)$ and, as
the algorithm receives subsequent characters $T[i]$ for $i\in (i'\dd 2(i'-\ell_j)]$, we maintain $\MI(T(i'-\ell_j\dd i], T[\dd \ell_j+i-i'])$ as long as there are at most $k$ mismatches. If this is still the case for $i=2(i'-\ell_j)$, we report that $T[\dd i]$ is a $k$-mismatch square,
with $\hd{T[\dd i]}{\SQ}=\hd{T(i'-\ell_j\dd i]}{T[\dd \ell_j+i-i']}=\hd{T(i/2\dd i]}{T[\dd i/2]}$.
By \cref{obs:ksqh}, no $k$-mismatch square $T[\dd i]$ is missed.
Moreover, \cref{fact:ham-k-occ} guarantees that there are $\Oh(k)$ $k$-mismatch occurrences of $P_j$, and thus we use $\Oh(k)$ space and $\Oh(k)$ time per character to process all of them.

\subparagraph*{\boldmath $P_j$ is $k$-mismatch periodic with period $Q_j$.}
In this case, we wait for the leftmost $k$-mismatch occurrence $p\in[2\ell_j\dd \ell_j+\ell_{j+1})$ of $P_j$ and ignore all the subsequent occurrences of~$P_j$. We use the received mismatch information $\MI(T(p-\ell_j\dd p], P_j)$ and the preprocessed mismatch information $\MI(P_{j+1},Q_j^\infty)$ to construct $\MI(T(p-\ell_j\dd p], Q_j^\infty)$; by the triangle inequality, the size of this set is guaranteed to be at most~$3k$.
As the algorithm receives subsequent characters of $T[i]$ for $i\in (p\dd 2\ell_{j+1})$, we maintain $\MI(T(p-\ell_j\dd i], Q_j^\infty)$ as long as the number of mismatches does not exceed $6k+1$. Whenever $i/2\ge p-\ell_j$ and $i/2 \equiv p-\ell_j \pmod{|Q_j|}$, we extract $\MI(T(i/2\dd i], Q_j^\infty)$
from $\MI(T(p-\ell_j\dd i], Q_j^\infty)$ and use the precomputed mismatch information $\MI(P_{j+1},Q_j^\infty)$ to construct $\MI(T[\dd i/2], Q_j^\infty)$ first, and then derive $\MI(T[\dd i/2],T(i/2\dd i])$. If the latter is of size at most $k$, we report $T[\dd i]$ as a $k$-mismatch square.

As for the correctness, we argue that we miss no $k$-mismatch square $T[\dd i]$ with $i\in (2\ell_j\dd 2\ell_{j+1}]$.
Since $\hd{T(i/2\dd i]}{T[\dd i/2]}\le k$ and $\hd{P_{j+1}}{Q_j^\infty}\le 2k+1$, as a corollary we obtain $\hd{T(i/2\dd i]}{Q_{j}^\infty}\le 3k+1$.
Moreover, by \cref{obs:ksqh}, $i/2+\ell_j$ is a $k$-mismatch occurrence of $P_j$.
\Cref{fact:ham-k-occ} further implies that $i/2+\ell_j \equiv p \pmod{|Q_j|}$ and $\hd{T(p-\ell_{j}\dd i/2]}{Q_j^\infty}\le 3k$.
Consequently, $\hd{T(p-\ell_{j}\dd i]}{Q_j^\infty}\le 6k+1$, and thus we compute  $\MI(T[\dd i/2],T(i/2\dd i])$
and report $T[1\dd i]$ as a $k$-mismatch square.

We conclude with the complexity analysis: the working space is $\Oh(k)$, dominated by the maintained mismatch information.
Moreover, whenever we compute $\MI(T[\dd i/2],T(i/2\dd i])$, the size of this set is, by the triangle inequality, at most $6k+1+2k+1\le 8k+2$,
and it can be computed in $\Oh(k)$ time.

\subparagraph*{Summary.}
Overall, each level takes $\Oh(k\log n)$ space and $\Oh(k\log n)$ time per character, dominated by the pattern-matching algorithm of \cref{thm:ro-km-algo}.
However, since constantly many levels are processed at any given time, the entire algorithm still uses $\Oh(k\log n)$ space and $\Oh(k\log n)$ time per character.

\section{Language Edit Distance problems}
The \emph{edit distance} between two strings $U$ and $V$, denoted by $\edd{U}{V}$, is the minimum number of
character insertions, deletions, and substitutions required to transform $U$ into~$V$. Similar to the Hamming distance, the edit distance from a string $U$ to \PAL and \SQ can be expressed in terms of self-similarity of $U$. This allows us to use similar approaches as for the Language Hamming distance problems, with tools for the Hamming distance replaced with appropriate tools for the edit distance.

By replacing the Hamming distance sketch~\cite{clifford2019streaming} with the edit distance sketch of Bhattacharya and \koucky~\cite{bhattacharya2023locally}.

\begin{restatable}{theorem}{thmpaledit}\label{th:pal_edit}
    There is a randomised streaming algorithm that solves the \kpale{} problem for a string
    of length $n$ using $\Ot(k^2)$ bits of space and $\Ot(k^2)$ time per character.
\end{restatable}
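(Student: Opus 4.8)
The plan is to follow the proof of \cref{th:pal_ham} at the structural level, substituting the Hamming-distance sketch of \cref{fact:ham_sketch} with the edit-distance sketch of Bhattacharya and \koucky~\cite{bhattacharya2023locally}. Two ingredients are needed. The first is an edit-distance analogue of \cref{prop:kpalham}: I would prove that every $U\in\Sigma^m$ satisfies $\edd{U}{\PAL}=\tfrac12\edd{U}{U^R}$ and, in particular, that $\edd{U}{U^R}$ is always even. The bound $\edd{U}{U^R}\le 2\edd{U}{\PAL}$ is immediate from the triangle inequality and the identity $\edd{A}{B}=\edd{A^R}{B^R}$: a closest palindrome $V$ gives $\edd{U}{U^R}\le\edd{U}{V}+\edd{V}{U^R}=2\edd{U}{V}$. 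For the converse I would take an optimal alignment of $U$ with $U^R$, note that its edit graph is invariant under the central reflection $(i,j)\mapsto(m-j,m-i)$ (which fixes the anti-diagonal $i+j=m$ and turns any diagonal edge crossing it into a zero-cost match, since $U^R[m-i]=U[i+1]$), and \emph{symmetrise} the alignment by replacing its more expensive half with the mirror image of its cheaper half. The result is a reflection-symmetric optimal alignment of even cost $2c$ whose first half aligns a prefix $U[\dd i]$ with the reverse of the complementary suffix at cost $c$; editing that suffix into $U[\dd i]^R$ turns $U$ into a palindrome (with a free middle character in the odd case) at cost $c=\tfrac12\edd{U}{U^R}$. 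Thus $\edk{U}{\PAL}{k}$ is recoverable from $\edk{U}{U^R}{2k}$.

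The second ingredient is an edit-distance sketch $\ske_k$ mirroring $\sk_k$ of \cref{fact:ham_sketch}: it assigns $\Ot(k^2)$-bit sketches to strings in $\Sigma^{\le n}$, builds the sketch of a single character in $\Ot(k^2)$ time, composes (any one of $\ske_k(U)$, $\ske_k(V)$, $\ske_k(UV)$ from the other two) in $\Ot(k^2)$ time, and computes $\edk{U}{V}{k}$ together with a witnessing alignment from $\ske_k(U),\ske_k(V)$ in $\Ot(k^2)$ time with inverse-polynomial error probability. The locally consistent decomposition of~\cite{bhattacharya2023locally} supplies exactly this, and its composition operation is what lets us handle a \emph{reversed} prefix. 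The algorithm then mimics that of \cref{th:pal_ham}: it maintains $\ske_{2k}(T[\dd i])$ and $\ske_{2k}(T[\dd i]^R)$; upon reading $T[i]$ it builds $\ske_{2k}(T[i])$, updates $\ske_{2k}(T[\dd i])$ by appending and $\ske_{2k}(T[\dd i]^R)$ by prepending (via $T[\dd i]^R=T[i]\cdot T[\dd i-1]^R$ and one composition), decodes the two sketches against each other to get $d=\edk{T[\dd i]}{T[\dd i]^R}{2k}$, and reports $\min\{k+1,d/2\}=\edk{T[\dd i]}{\PAL}{k}$.

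The complexity bounds follow at once: two $\Ot(k^2)$-bit sketches, and $O(1)$ encodings, compositions, and one decoding per character, so $\Ot(k^2)$ space and time per character. Each decoding errs with a tunable inverse-polynomial probability, so a union bound over the $n$ prefixes keeps the overall error inverse-polynomial in $n$, while encoding and composition are deterministic and contribute no error. The main obstacle is the structural identity $\edd{U}{\PAL}=\tfrac12\edd{U}{U^R}$: unlike in the Hamming case, where the two halves of $U$ decouple and the optimal fold sits at position $\lfloor m/2\rfloor$, here one must manipulate an optimal alignment and argue that symmetrising it about the anti-diagonal preserves optimality, being careful about parity and the crossing diagonal edge. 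A secondary point is verifying that the edit-distance sketch genuinely supports concatenation, so that $\ske_{2k}(T[\dd i]^R)$ can be maintained under a left-to-right scan -- this is precisely where we rely on the local consistency of the Bhattacharya--\koucky construction rather than on a black-box streaming sketch.
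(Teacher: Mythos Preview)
Your proposal is correct and follows essentially the same approach as the paper: prove the structural identity $\edd{U}{\PAL}=\tfrac12\edd{U}{U^R}$ (the paper's \cref{prop:ed-to-pal}), maintain edit-distance sketches of $T[\dd i]$ and $T[\dd i]^R$ via append and prepend operations (the paper's \cref{lemma:edit-sketches}, built on Bhattacharya--\koucky), and decode to recover the distance. Your symmetrisation argument for the identity is a slightly different presentation than the paper's case analysis on where an optimal alignment crosses the anti-diagonal, but the two are equivalent in content, and your sketch interface (append, prepend, decode) matches exactly what \cref{lemma:edit-sketches} supplies.
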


Furthermore, the results of Bhattacharya and \koucky~\cite{bhattacharya2023locally} show a reduction from the edit distance to the Hamming distance via locally consistent string decompositions, which allows reducing the \ksqe{} problem to \ksqh{}, solved via~\cref{rm:hd-sq-mi}:

\begin{restatable}{theorem}{thmsquareedit}\label{th:streaming_ksqe}
    There is a randomised streaming algorithm that solves the \ksqe{} problem
    for a string of length $n$ using $\Ot(k^2)$
    bits of space and $\Ot(k^2)$ time per character.
\end{restatable}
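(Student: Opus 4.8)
The plan is to reduce \ksqe{} to \ksqh{} via the locally consistent decomposition of Bhattacharya and \koucky~\cite{bhattacharya2023locally}, mirroring the way \cref{th:pal_ham} and \cref{rm:hd-sq-mi} handle the Hamming case. The key structural fact I would invoke is the edit-distance analogue of \cref{prop:k-sq-ham}: for a string $U$ of even length $m=2\ell$, one has $\edd{U}{\SQ}=\min_{0\le s\le \edd{U}{\SQ}}\edd{U[\dd \ell+s]}{U(\ell+s\dd]}$ up to an additive constant, or more precisely $\edd{U}{\SQ}\le \edd{U_1}{U_2}$ where $U=U_1U_2$ with an appropriate balanced split, and conversely any split point of a nearest square lies within $k$ of the midpoint. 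So the first step is to establish this self-similarity characterisation, reducing the problem to: for every $i$ and every shift $s$ with $|s|\le k$, decide whether $\edk{T[\dd i-s]}{T(i-s\dd 2i-2s]}{k}$ (equivalently, testing an edit-distance occurrence of a prefix near the midpoint), and take the minimum. This is the edit-distance version of \cref{obs:ksqh}, and the filtering argument carries over: if $\edd{T[\dd 2i']}{\SQ}\le k$ then some balanced prefix has a $k$-edit occurrence starting near position $i'$, so we only need to store $O(\poly(k,\log n))$ candidate positions per scale.

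The second step is to run the machinery behind \cref{rm:hd-sq-mi}, but with the Hamming sketch replaced by the edit-distance sketch of~\cite{bhattacharya2023locally}. That work builds a locally consistent parsing of the stream into $O(\log n)$ levels; at each level a string is rewritten as a sequence of block-identifiers, and edit distance at most $k$ between two substrings translates (up to $\Ot(1)$ distortion per level and hence $\Ot(1)$ overall, after rescaling $k$) into Hamming distance at most $\Ot(k)$ between the corresponding block-sequences. The sketches of the parsed strings have size $\Ot(k^2)$ bits, support composition (so they can be maintained in a rolling manner across the stream), and decode to an alignment/edit-information witness when the distance is at most $k$, all in $\Ot(k^2)$ time per character — this is exactly the interface \cref{fact:ham_sketch} provides in the Hamming setting. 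Using this, the algorithm maintains the level-by-level sketch of $T[\dd 2i]$ incrementally, and at each requested position recovers the sketches of the two halves $T[\dd i]$ and $T(i\dd 2i]$ by the composition/subtraction operations, then decodes to get $\edk{T[\dd i]}{T(i\dd 2i]}{k}$, which by the step-one characterisation equals $\edk{T[\dd 2i]}{\SQ}{k}$ (after checking the $O(k)$ nearby split points, which only costs an extra $\poly(k)$ factor absorbed into $\Ot(k^2)$).

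The third step is the filtering: naively one would store a half-sketch at every even position, which is $\Theta(n)$ sketches. As in \cref{sec:alg-k-sq-h}, I would maintain, for a geometric family of prefixes $P_j=T[\dd \ell_j]$, the $k$-edit occurrences of $P_j$ in a window $T_j$ of length $\Theta(\ell_j)$, using a streaming $k$-edit pattern-matching primitive derived from~\cite{bhattacharya2023locally} (the edit analogue of \cref{lemma:variable-delay}, with the same variable-delay trick so that the occurrence near position $i+\ell_j$ is delivered exactly when $T[2i]$ arrives). The edit analogue of \cref{obs:ksqh} guarantees no $k$-edit square is missed; the structure of $k$-edit occurrences (an $O(\poly k)$-size description, periodic-region compression when the pattern is approximately periodic) keeps the number of stored candidates, and hence the space, at $\Ot(k^2)$ bits; and since any position lies in $O(1)$ of the windows $T_j$, only $O(1)$ instances are active at once.

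I expect the main obstacle to be the edit-distance pattern-matching / occurrence-structure component: unlike the clean $O(k)$-occurrence-or-periodic dichotomy of \cref{fact:ham-k-occ}, the $k$-edit occurrence structure is more delicate, and one must verify that the sketch of~\cite{bhattacharya2023locally} (i) composes well enough to be maintained over a sliding window under the locally consistent parsing, (ii) decodes to enough alignment information to reconstruct a witnessing edit sequence between the two halves (needed so that \cref{rm:hd-sq-mi}'s role — reporting the edit information to a nearest square — transfers), and (iii) survives the reduction with only a $\polylog$ blow-up in the effective threshold, so that a $k$-edit square is never discarded as a $\Ot(k)$-Hamming non-match. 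Getting the $O(k)$ shifts around the midpoint to interact correctly with the per-level distortion of the parsing is the technical crux; everything else is a routine transcription of the Hamming-case argument with $\ham$ replaced by $\ed$ and the relevant space/time bounds squared.
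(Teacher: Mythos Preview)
Your instinct to reduce to the Hamming case via the Bhattacharya--\koucky{} decomposition is right, but the reduction you sketch is not the one the paper uses, and the one you sketch has a real gap. The paper does \emph{not} run $k$-edit pattern matching on the text, nor does it split edit-distance sketches into halves. Instead it converts the text stream into a stream of \emph{encoded committed grammars} and feeds that stream directly into the \ksqh{} algorithm of \cref{rm:hd-sq-mi} with threshold $K=(k+\tau)\mu=\Ot(k^2)$. The justification is \cref{fact:grammar-decomp}: if $U=VW$ with $\edd{V}{W}\le k$, then $\GG(V)$ and $\GG(W)$ have equal length and differ in at most $k$ entries, so $\enc(\GG(V))\enc(\GG(W))$ is within Hamming distance $k\mu$ of a square. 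Because $\GG(U)$ agrees with $\GG(V)\GG(W)$ except on $O(\tau)$ grammars near the $V/W$ boundary (\cref{cor:small-decomp-change}), the paper runs $O(\tau)=\Ot(1)$ shifted copies of \ksqh{} to absorb that boundary slack (\cref{claim:sq-stream-reduction}); when any copy signals a Hamming near-square, the mismatching grammars are recovered from the mismatch information and the true edit distance is computed via~\cite{ganesh2022compression}. All of the filtering, the variable-delay machinery, and the half-splitting are thus inherited verbatim from \cref{rm:hd-sq-mi} at the Hamming level on the grammar stream; nothing new is built for edits.

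Your route, by contrast, relies on two primitives that are not available in the form you assume. First, the edit-distance sketches of \cref{lemma:edit-sketches} support only \textsf{Append}, \textsf{Prepend}, and \textsf{Distance}; there is no operation that, given $\skek(T[\dd 2i])$ and $\skek(T[\dd i])$, returns $\skek(T(i\dd 2i])$. This nonlinearity is precisely what the grammar layer is there to absorb: local consistency makes the grammar sequence of the first half (approximately) a prefix of the grammar sequence of the whole, so the split is free \emph{at the grammar/Hamming level}, not at the edit-sketch level. Second, you posit a streaming $k$-edit pattern-matching algorithm with $\Ot(k^2)$ space, per-character time, and a variable-delay interface analogous to \cref{lemma:variable-delay}; no such black box is established here, and producing one would amount to redoing the grammar-level reduction inside the filter. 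You correctly flag the edit occurrence-structure as the crux; the paper's resolution is to sidestep it entirely by pushing the whole square test down to \ksqh{} on the grammar stream rather than solving it for edits on the text.
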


Finally, by replacing the online read-only algorithm for finding the $k$-mismatch occurrences of a pattern in a text with an online read-only algorithm for finding $k$-error occurrences and the structural results for the Hamming distance with the structural results for the edit distance, we obtain algorithms for \kpale{} and \ksqe{}:

\begin{restatable}{theorem}{thmropaledit}\label{thm:ro-pal-e}
    There is a deterministic online read-only algorithm that solves the \kpale{} problem for a string of length $n$
    using $\Ot(k^4)$ bits of space and $\Ot(k^4)$ time per character.
\end{restatable}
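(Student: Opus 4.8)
\subparagraph*{Proof plan for \cref{thm:ro-pal-e}.}
The plan is to follow the template of \cref{thm:ro-pal-ham} and replace every Hamming-distance ingredient by its edit-distance counterpart. The first ingredient is a folding identity in the spirit of \cref{prop:kpalham}: for every string $U$, the quantity $\edd{U}{\PAL}$ equals the minimum, over all ways of splitting $U$ into a left part $U_1$, an optional single middle character, and a right part $U_2$, of $\edd{U_1}{U_2^R}$; moreover $\edd{U}{U^R}\le 2\,\edd{U}{\PAL}$. The upper bound is obtained by editing $U_2$ into $U_1^R$, which turns $U$ into the palindrome $U_1U_1^R$ (or $U_1cU_1^R$); the lower bound follows by cutting an optimal alignment of $U$ with a closest palindrome at the centre of that palindrome and invoking the triangle inequality, exactly as in the proof of \cref{prop:kpalham}; the bound $\edd{U}{U^R}\le 2\,\edd{U}{\PAL}$ uses a closest palindrome $V$ and $\edd{V}{U^R}=\edd{V^R}{U^R}=\edd{V}{U}$. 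When $\edd{U}{\PAL}\le k$ only $O(k)$ splits can be optimal (the two parts must have near-equal lengths), so it suffices to inspect $O(k)$ candidate centres. Combined with $\edd{T[\dd i]}{T[\dd i]^R}\le 2\,\edd{T[\dd i]}{\PAL}$, this yields the analogue of \cref{claim:correct_pal_ro_ham}: if $\edd{T[\dd i]}{\PAL}\le k$, then for the exponential-length prefix $P_j$ of $T$ responsible for position $i$, the suffix $P_j^R$ of $T[\dd i]^R$ aligns with a suffix of $T[\dd i]$ using at most $2k$ edits, i.e.\ $i$ is a $2k$-error occurrence of $P_j^R$ in $T$.

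The second ingredient is a deterministic online read-only algorithm that reports all $k$-error (edit-distance) occurrences of a pattern in a text, each with a witnessing alignment; this is the edit-distance counterpart of \cref{thm:ro-km-algo} and the read-only $k$-error pattern matching result advertised in the introduction. I would build it by the same strategy as \cref{thm:ro-km-algo}: maintain a family of exponentially growing prefixes, refined so that each is either not $k$-error periodic (the edit-distance analogue of \cref{def:k-mism-periodic}) or a maximal $k$-error periodic extension of the previous one, and process the text in $O(\log n)$ layers, each turning $k$-error occurrences of one prefix into $k$-error occurrences of the next by extending $O(k)$-wide bands of the $k$-bounded edit-distance dynamic program. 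Correctness and efficiency rely on the edit-distance version of the structural dichotomy of Charalampopoulos, Kociumaka, and Wellnitz~\cite{charalampopoulos2020faster}: in a length-$O(|P|)$ window of the text, either $P$ is approximately periodic and all $k$-error occurrences belong to a single approximate run (hence admit an $O(k)$-size description anchored at the period), or there are only $O(k^2)$ of them; the underlying PILLAR operations admit exactly the naive read-only implementations described in \cref{sec:ro-pal}.

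With these ingredients, the algorithm for \kpale{} mirrors that of \cref{thm:ro-pal-ham}: for each level $j$ it runs the $2k$-error pattern-matching algorithm with pattern $P_j^R$ on $T$ to filter candidate end positions, and for every reported position $i$ it recovers $\edk{T[\dd i]}{\PAL}{k}$ by minimising, over the $O(k)$ admissible fold centres $j'$, the bounded edit distance between the length-$\approx i/2$ prefix of $P_j$ and the reverse of the complementary fragment $T(j'\dd i]$ of $T[\dd i]$ (possibly skipping one middle character). Since these fold distances involve \emph{suffixes} rather than prefixes of the strings handled by the forward pattern-matching pass, the pattern-matching layer must be set up so that the reported alignment against $P_j^R$ can be continued ``in reverse'' to expose all $O(k)$ fold distances at once; carrying this out without re-scanning the text — and, in particular, handling the approximately periodic case, where one argues as in \cref{fact:ham-k-occ} that a single anchor occurrence together with precomputed alignment-to-the-period data determines the whole run — is where I expect the real work to lie.

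The complexity bookkeeping then accounts for the extra factors relative to the Hamming case: each of the $O(\log n)$ layers may carry $O(k^2)$ active $k$-error occurrences (versus $O(k)$ mismatch occurrences in \cref{thm:ro-pal-ham}), each augmented with an $O(k)$-wide alignment band and tracked together with its $O(k)$ fold centres, so a single layer uses $\Ot(k^4)$ space and $\Ot(k^4)$ time per character. Only constantly many layers are active simultaneously, and the pattern-preprocessing work can be interleaved with the first half of the text exactly as in \cref{thm:ro-km-algo}, which yields the claimed $\Ot(k^4)$ bounds (amortised, because of the periodic-case subroutines).
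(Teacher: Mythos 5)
Your high-level reduction coincides with the paper's: the folding identity for $\edd{U}{\PAL}$ (\cref{prop:ed-to-pal}), the restriction to $O(k)$ candidate centres because the edit distance is at least the length difference (\cref{cor:ro-pal-edit}), and the filtering claim that $\edd{T[\dd i]}{\PAL}\le k$ forces $i$ to be a $2k$-error occurrence of $P_j^R$ (\cref{claim:correct_pal_ro_edit}). However, the two components that actually carry the theorem are left as sketches with genuine gaps. The first is the online read-only $k$-error pattern matcher. You propose to port the layered prefix-filtering scheme of \cref{thm:ro-km-algo} wholesale, invoking a structural dichotomy in which the periodic case yields ``a single approximate run''. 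That is not what the edit-distance structure theorem provides: by \cref{fact:ed-k-occ}, in the periodic case the occurrences decompose into $O(k^3)$ chains, not one, and each $k$-error occurrence has a whole interval of admissible starting positions, so the Hamming-style step ``store the mismatch information of each received occurrence of $P_j$ and extend it character by character to $P_{j+1}$'' does not transfer directly. The paper builds the matcher (\cref{lemma:ro-ke-algo}) by a different mechanism: a recursion on the reporting delay that splits $P$ into a prefix $L$ and a suffix $R$, runs the offline PILLAR algorithm of Charalampopoulos et al.\ on blocks of the text to obtain the occurrences of $R$ as $O(k^3)$ chains, and stitches $L$- and $R$-occurrences together via the Landau--Vishkin data structure (\cref{fact:LV}); this is also where the amortisation and the $k^4$ factors actually come from. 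Your layered construction may be salvageable, but as stated neither its correctness nor its $\Ot(k^4)$ bounds are established.

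Second, you explicitly defer the step of recovering the $O(k)$ fold distances at each surviving position (``this is where I expect the real work to lie''). This is indeed the crux, and the paper resolves it concretely: in the aperiodic case it runs $O(k)$ instances of Ukkonen's online algorithm, affordable because only $O(k^2)$ positions survive the filter; in the periodic case it maintains $O(k)$-length edit sequences from the relevant prefix of $T$ and suffix of $T[\dd i]$ to $Q_j^\infty$ (via additional Ukkonen instances anchored at the leftmost occurrence) and answers the \lcp queries required by \cref{fact:LV} inside approximately periodic strings in $O(k|Q_j|)$ time each (\cref{lm:lcp_ed}), which yields $O(k^3|Q_j|)$ time per tested position and $O(k^4)$ amortised overall since only $O(k)$ positions out of every $|Q_j|$ are tested. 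Without some such mechanism, the per-position cost of the fold-distance computation is unbounded, so the proposal does not yet yield the claimed complexity.
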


\begin{restatable}{theorem}{thmsquaresedit}\label{thm:ro-sq-e}
    There is a deterministic online read-only algorithm that solves the \ksqe{} problem for a string of length $n$
    using $\Ot(k^4)$ bits of space and $\Ot(k^4)$ amortised time per character.
\end{restatable}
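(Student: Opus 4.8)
The plan is to mirror the read-only Hamming-distance algorithm for \ksqh{} from \cref{sec:ro-sq-ham}, replacing every Hamming-distance tool by its edit-distance counterpart. The first step is a self-similarity characterisation playing the role of \cref{prop:k-sq-ham}: every string $U$ satisfies $\edd{U}{\SQ}=\min_{0\le j\le|U|}\edd{U[\dd j]}{U(j\dd]}$. The bound $\le$ is witnessed by the square $(U[\dd j])^2$; for $\ge$, one takes an optimal alignment of $U$ onto a closest square $V^2$, lets $j$ be the index of $U$ where it crosses the boundary between the two copies of $V$, and obtains $\edd{U}{V^2}=\edd{U[\dd j]}{V}+\edd{V}{U(j\dd]}\ge\edd{U[\dd j]}{U(j\dd]}$ by the triangle inequality. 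Moreover, a minimising $j$ with value at most $k$ satisfies $|\,|U|-2j\,|\le k$, so for each prefix $T[\dd i]$ it suffices to compute $\edk{T[\dd j]}{T(j\dd i]}{k}$ for the $O(k)$ indices $j$ with $|i-2j|\le k$ and return the minimum.

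The core ingredient is an online read-only algorithm for $k$-\emph{edit} pattern matching that replaces \cref{thm:ro-km-algo}: given a length-$m$ pattern $P$ followed by an online text $T$, it reports every $k$-edit occurrence of $P$ together with an optimal alignment (encoded by an $O(k)$-size edit script), using $\poly(k,\log n)$ space and amortised time per character. It is built exactly as in the proof of \cref{thm:ro-km-algo}: one forms a family $\Pp=(P_j)$ of geometrically growing prefixes of $P$, refined — using the edit-distance analogues of \cref{claim:same-period} and \cref{lemma:compute-ham-period} — so that each $P_j$ is either $k$-edit periodic or $k$-edit aperiodic, and then $T$ is processed layer by layer, each layer turning the $k$-edit occurrences of $P_j$ into those of $P_{j+1}$ through the edit-distance version of \cref{obs:ro-km-filter}. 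In an aperiodic layer, the edit-distance structural results of Charalampopoulos, Kociumaka and Wellnitz~\cite{charalampopoulos2020faster} bound the number of $k$-edit occurrences of $P_j$ in the relevant $\Theta(\ell_j)$-length window by $\poly(k)$, and one maintains the alignment of each occurrence by growing its Landau--Vishkin frontier (of $O(k)$ diagonals) as the text arrives; in a periodic layer the occurrences form an arithmetic progression governed by the period $Q_j$, so — mirroring the two cases in \cref{sec:ro-sq-ham} — one tracks the alignment information relative to $Q_j^\infty$ (whose size stays $O(k)$ by the triangle inequality) and decodes individual occurrences on demand. The PILLAR operations needed by the structural algorithms are implemented naively in the read-only model as in the discussion preceding \cref{lemma:compute-ham-period}, and, as in \cref{thm:ro-sq-ham}, the pattern-preprocessing cost is absorbed by doing two characters' worth of work per scanned character while reading $T[\dd m/2]$, which suffices because the first $k$-edit occurrence of $P$ ends at position at least $m-k$.

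With this building block, the \ksqe{} algorithm is assembled as in \cref{thm:ro-sq-ham}: take $\Pp=(P_j)$ for $P=T$, and for each level $j$ run the online read-only $k$-edit pattern-matching algorithm on a window $T_j$ of length $\Theta(\ell_j)$ containing all positions $i\in[2\ell_j\dd 2\ell_{j+1})$, feeding it $P_j$ (identified, together with $Q_j$ and the alignment information relating $P_{j+1}$ to $Q_j^\infty$ when $P_j$ is $k$-edit periodic, while reading $T(\ell_j\dd\Theta(\ell_j)]$) and then the relevant prefix of $T_j$. The filtering step is the edit-distance analogue of \cref{obs:ksqh}: if $\edd{T[\dd i]}{\SQ}\le k$ is witnessed by a split point $j_0$ with $\ell_j\le j_0<\frac32\ell_j$, then restricting an optimal alignment of $T[\dd j_0]$ onto $T(j_0\dd i]$ to the prefix $P_j=T[\dd\ell_j]$ of $T[\dd j_0]$ produces a $k$-edit occurrence of $P_j$ starting at position $j_0+1$ and ending inside $T_j$, so no candidate $i$ is missed. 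For each reported $k$-edit occurrence starting at some position $s+1$, I take $s$ as a candidate split point, grow its Landau--Vishkin frontier first by the $O(\ell_j)$ already-read characters $T(\ell_j\dd s]$ (turning the alignment of $P_j$ into that of $T[\dd s]$) and then by the arriving text characters, reading off $\edk{T[\dd s]}{T(s\dd i]}{k}$ from the frontier at every position $i$ with $|i-2s|\le k$; the output for $T[\dd i]$ is the minimum of these values across all levels and split points (and $k+1$ if none is at most $k$). Since only $O(1)$ levels are active at a time, summing the per-level cost — dominated by the $k$-edit pattern-matching algorithm together with $\poly(k)$ maintained frontiers and $O(k)$ split points per position — over the $O(\log n)$ levels gives the claimed $\Ot(k^4)$ space and $\Ot(k^4)$ amortised time per character.

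The main obstacle is the $k$-edit-periodic case of the pattern-matching building block. For mismatches, "mismatch information" is just a set of $O(k)$ triples that extends trivially one character at a time, whereas an optimal edit alignment has no such memoryless structure: when $P_j$ (or its preimage in the text) is approximately periodic there can be many $k$-edit occurrences, and one must represent them compactly by the period $Q_j$ and an arithmetic progression, maintain the alignment information relative to $Q_j^\infty$ as the text grows, and prove — via the triangle inequality and the edit-distance analogue of \cref{fact:ham-k-occ} — that the number of tracked edit operations never exceeds $O(k)$ while still decoding every genuine occurrence. A secondary difficulty is that, because edit distance lacks the fixed-split-point property of \cref{prop:k-sq-ham}, the window sizes and the filtering observation must tolerate $\pm O(k)$ slack in positions and lengths; this is what forces the extra factors of $k$ over the Hamming case and the use of Landau--Vishkin-style dynamic programming, and it is also why the time bound is only amortised, as bursts of occurrences in periodic regions are paid for over the characters preceding them.
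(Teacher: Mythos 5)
Your overall architecture matches the paper's: the self-similarity property $\edd{U}{\SQ}=\min_j\edd{U[\dd j]}{U(j\dd]}$ with the $|i-2j|\le k$ restriction is exactly \cref{prop:sq-ed} and \cref{cor:k-sq-ed}, the prefix family with periodic/aperiodic refinement is the paper's $\Pp$, and the filtering claim you state is essentially \cref{claim:ksqe-filtering} (note that the paper must pass to \emph{$3k$}-error occurrences ending at the fixed position $i'+\ell_{j^*}$, since the restriction of the optimal alignment only yields a $k$-error occurrence ending somewhere within $\pm k$ of it; your phrasing elides this but it is a minor point). Where you genuinely diverge is in the central building block, the online read-only $k$-error pattern matcher. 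You propose to construct it by mirroring the layer-by-layer prefix filtering of \cref{thm:ro-km-algo}; the paper instead proves \cref{lemma:ro-ke-algo} by a recursive delay-based scheme: for every delay $d$ it splits $P$ into a prefix $L$ and suffix $R$, recursively reports occurrences of $L$ with a larger delay, finds occurrences of $R$ by running the \emph{offline} PILLAR algorithm of Charalampopoulos, Kociumaka and Wellnitz on completed text blocks, and stitches the two via Landau--Vishkin (\cref{fact:LV}). The delay mechanism is what lets the paper use an offline structural algorithm inside an online procedure; your construction forgoes delays and must therefore maintain everything truly online.

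That is where the gap is. In the $k$-edit-periodic case you assert that "the occurrences form an arithmetic progression governed by the period $Q_j$", but \cref{fact:ed-k-occ} only guarantees a decomposition into $O(k^3)$ \emph{chains}, each with difference $|Q|$ and first element within $10k$ of a multiple of $|Q|$ --- not a single progression as in the Hamming case (\cref{fact:ham-k-occ}). More importantly, "decode individual occurrences on demand" from alignment information stored relative to $Q_j^\infty$ is precisely the step the paper has to work for: it requires computing a threshold edit distance between two strings each given only by an $O(k)$-length edit sequence to $Q^\infty$, which the paper does via \cref{fact:LV} driven by the $O(k|Q|)$-time LCP procedure of \cref{lm:lcp_ed}, and the resulting $O(k^3|Q|)$ cost per decoded occurrence is only affordable because there are $O(k)$ occurrences among every $|Q|$ positions. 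Your proposal names the obstacle but supplies no mechanism for it --- no analogue of \cref{lm:lcp_ed} and no amortisation argument --- so the claimed $\Ot(k^4)$ bounds for the periodic layers are not established. The aperiodic layers of your scheme (with $O(k^2)$ occurrences per window, each tracked by an Ukkonen-style $O(k)$-per-character instance) do appear to work and match the paper's accounting.
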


\bibliographystyle{plainurl}
\bibliography{main.bib}

\appendix


\section{Edit distance, palindromes, and squares}
The \emph{edit distance} between two strings $U$ and $V$, denoted by $\edd{U}{V}$, is the minimum number of
character insertions, deletions, and substitutions required to transform $U$ into~$V$.
For a formal definition, we first rely on the notion of an \emph{alignment} between fragments of strings.
\begin{definition}[\cite{kociumaka2022small}]
    A sequence $\caA=(u_t,v_t)_{i=0}^{m}$ is an \emph{alignment}
    of $U$ onto $V$ if $(u_0,v_0)=(0,0)$, $(u_{t},v_{t})\in \{(u_{t-1}+1,v_{t-1}+1),(u_{t-1}+1,v_{t-1}),(u_{t-1},v_{t-1}+1)\}$ for $i\in [1\dd m]$, and $(u_m,v_m) =(|U|,|V|)$.
    \begin{itemize}
        \item If $(u_{t},v_{t})=(u_{t-1}+1,v_{t-1})$, we say that $\caA$ \emph{deletes}
            $U[u_t]$,
        \item If $(u_{t},v_{t})=(u_{t-1},v_{t-1}+1)$, we say that $\caA$ \emph{inserts} $V[v_t]$,
        \item If $(u_{t},v_{t})=(u_{t-1}+1,v_{t-1}+1)$, we say that $\caA$ \emph{aligns}
            $U[u_t]$ and $V[v_t]$. If~additionally $U[u_t]=
            V[v_t]$, we say that $\caA$ \emph{matches} $U[u_t]$ and
            $V[v_t]$; otherwise, $\caA$ \emph{substitutes} $V[v_t]$ for
            $U[u_t]$.
    \end{itemize}
\end{definition}
The \emph{cost} of an alignment $\caA$ of $U$ onto $V$, is the total number of characters that $\caA$ inserts, deletes, or substitutes.
Now, we define the edit distance $\edd{U}{V}$ as the minimum cost of an alignment
of $U$ onto~$V$.
An alignment of $U$ onto $V$ is \emph{optimal} if its cost is equal to $\edd{U}{V}$.

A sequence of edits that an alignment $\caA$ uses to transform $U$ into $V$ (specifying the involved positions and characters) is called an \emph{edit sequence} (of the alignment).

\begin{example}
A string $U = \mathtt{ababc}$ can be transformed onto $V = \mathtt{bbac}$ by substituting $U[1]=\mathtt{a}$ for $V[1]=\mathtt{b}$
and deleting $U[4]=\mathtt{b}$. The corresponding alignment is $(0,0)$, $(1,1)$, $(2,2)$, $(3,3)$, $(4,3)$, $(5,4)$.
\end{example}

For an integer $k$, we denote \[\ed_{\le k}(X,Y)=\begin{cases} \edd{X}{Y} & \text{if }\edd{X}{Y}\le k,\\
    \infty & \text{otherwise.}\end{cases}\]
For strings $P, T \in \Sigma^\ast$, we say that a position $i$ is a \textit{$k$-error occurrence} of $P$ in $T$ if $\edd{T(j\dd i]}{P}\le k$ for some $j\in [1\dd i]$.

Similarly to the Hamming distance, we can measure the edit distance from a string $U$ to \PAL and \SQ in terms of the self-similarity of $U$.

\begin{restatable}{property}{propedtopal}\label{prop:ed-to-pal}
    For any string $U\in\Sigma^*$, we have
    \[\edd{U}{\PAL} = \min_i\{\min\{\edd{U[\dd i]}{U[i+1\dd ]^R}, \edd{U[\dd i]}{U[i+2\dd ]^R}\}\}=\tfrac12 \edd{U}{U^R}.\]
\end{restatable}
\begin{proof}
    First, consider an optimum alignment $(u_t,u'_t)_{t=0}^m$ of $U$ onto $U^R$.
    For every $t\in [0\dd m]$, the alignment maps $U[\dd u_t]$ onto $U^R[\dd u'_t]$
    and $U(u_t \dd ]$ onto $U^R(u'_t\dd ]$. In particular, \[\edd{U}{U^R}=\edd{U[\dd u_t]}{U^R[\dd u'_t]}+ \edd{U(u_t \dd ]}{U^R(u'_t\dd ]}.\]
   
    Suppose that there exists an index $t\in [0\dd m]$ such that $u_t+u'_t=|U|$.
    In this case, $U^R[\dd u'_t] = U(u_t\dd ]^R$ and $U^R(u'_t\dd ]=U[\dd u_t]^R$.
    Consequently,
    \begin{align*}
        \edd{U}{U^R} &=\edd{U[\dd u_t]}{U^R[ dd u'_t]}+\edd{U(u_t\dd ]}{U^R(u'_t\dd ]} \\
        &=\edd{U[\dd u_t]}{U(u_t \dd]^R}+\edd{U(u_t\dd ]}{U[\dd u_t]^R} \\
        &= 2\cdot \edd{U[\dd u_t]}{U(u_t \dd]^R},
    \end{align*}
    that is, $\edd{U}{U^R}=\edd{U[\dd i]}{U[i+1\dd ]^R}$ holds for $i=u_t$.

    Otherwise, as the sequence $(u_t+u'_t)_{t=0}^m$ increases from $0$ to $2|U|$,
    there exists $t\in [1\dd m]$ such that $u_{t-1}+u'_{t-1} < |U| < u_{t}+u'_{t}$.
    In particular,  $(u_{t},u'_{t})=(u_{t-1}+1,u'_{t-1}+1)$, so $U^R[\dd u'_{t-1}] = U(u_t\dd ]^R$, $U^R(u'_{t-1} \dd u'_{t}]=U(u_{t-1}\dd u_t]$, and $U^R(u'_t\dd ]=U[\dd u_{t-1}]$.
    Consequently,
        \begin{align*}
        &\edd{U}{U^R} \\ &=\edd{U[\dd u_{t-1}]}{U^R[ \dd u'_{t-1}]}+\edd{U(u_{t-1}\dd u_t]}{U^R(u'_{t-1}\dd u'_t]}+\edd{U(u_t\dd ]}{U^R(u'_t\dd ]} \\
        &=\edd{U[\dd u_{t-1}]}{U(u_t \dd]^R}+\edd{U(u_{t-1}\dd u_t]}{U(u_{t-1}\dd u_t]^R}+\edd{U(u_t\dd ]}{U[\dd u_{t-1}]^R} \\
        &= 2\cdot \edd{U[\dd u_{t-1}]}{U(u_t \dd]^R},
    \end{align*}
    that is, $\edd{U}{U^R}=\edd{U[\dd i]}{U[i+2\dd ]^R}$ holds for $i=u_{t-1}=u_t-1$.

    This completes the proof that
    \begin{equation}\label{eq:prop-pal-ed-1}
        2\min_i\{\min\{\edd{U[\dd i]}{U[i+1\dd ]^R}, \edd{U[\dd i]}{U[i+2\dd ]^R}\}\}\le \edd{U}{U^R}.
    \end{equation}

    Next, consider a palindrome $V$ such that $\edd{U}{\PAL}=\edd{U}{V}$.
    The triangle inequality implies
    \begin{equation}\label{eq:prop-pal-ed-2}
        \edd{U}{U^R} \le \edd{U}{V}+\edd{V}{U^R}=\edd{U}{V}+\edd{V^R}{U^R}=2\edd{U}{V}=2\edd{U}{\PAL}.
    \end{equation}
   
    Finally, consider an index $i\in [0\dd |U|]$
    and note that \[\edd{U}{\PAL}\le \edd{U}{U[i+1\dd ]^R \cdot U[i+1\dd ]} \le \edd{U[i+1\dd ]^R}{U[\dd i]}.\]
    Similarly, if $i\in [0\dd |U|)$, then
    \[\edd{U}{\PAL}\le \edd{U}{U[i+2\dd ]^R\cdot  U[i+1] \cdot U[i+2\dd ]} \le \edd{U[i+2\dd ]^R}{U[\dd i]}.\]
    Consequently,
    \begin{equation}\label{eq:prop-pal-ed-3}
        \edd{U}{\PAL} \le \min_i\{\min\{\edd{U[\dd i]}{U[i+1\dd ]^R}, \edd{U[\dd i]}{U[i+2\dd ]^R}\}\}.
    \end{equation}

    Combining \eqref{eq:prop-pal-ed-1}, \eqref{eq:prop-pal-ed-2}, and~\eqref{eq:prop-pal-ed-3}
    yields the result.
\end{proof}

\begin{corollary}\label{cor:ro-pal-edit}
    Let $U$ be a string of length $n$ and $m = \lfloor{n/2\rfloor}$. We have
    $$\ed_{\le k} (U, \PAL) = \min_{i\in [m-k\dd m+k]}\{\min\{\edk{U[\dd i]}{U[i+1\dd ]^R}{k}, \edk{U[\dd i]}{U[i+2\dd ]^R}{k}\}$$
\end{corollary}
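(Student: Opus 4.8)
\section*{Proof plan for \texorpdfstring{\cref{cor:ro-pal-edit}}{Corollary}}

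The plan is to deduce the corollary from \cref{prop:ed-to-pal} by arguing that, once the optimum is at most~$k$, the minimizing index must lie close to $m=\lfloor n/2\rfloor$. First I would recall that \cref{prop:ed-to-pal} already identifies $\edd{U}{\PAL}$ with the \emph{unrestricted} minimum $\mu:=\min_i\{\min\{\edd{U[\dd i]}{U[i+1\dd]^R},\edd{U[\dd i]}{U[i+2\dd]^R}\}\}$, and that for any two strings $X,Y$ one has the trivial bound $\lvert\,|X|-|Y|\,\rvert\le\edd{X}{Y}$, since an optimal alignment must perform at least that many insertions or deletions.

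Next I would use this bound to constrain the minimizer. Since $|U[\dd i]|=i$, $|U[i+1\dd]^R|=n-i$, and $|U[i+2\dd]^R|=n-i-1$, the inequality $\edd{U[\dd i]}{U[i+1\dd]^R}\le k$ forces $|2i-n|\le k$, while $\edd{U[\dd i]}{U[i+2\dd]^R}\le k$ forces $|2i-n+1|\le k$; using $m\le n/2\le m+\tfrac12$ together with $k\ge 1$, both conditions place $i$ inside $[m-k\dd m+k]$ (intersected with the admissible range of indices in \cref{prop:ed-to-pal}). Hence, for every index $i\notin[m-k\dd m+k]$, both associated edit distances exceed $k$, so both values $\edk{\cdot}{\cdot}{k}$ are $\infty$; consequently the right-hand side of the claimed identity is unchanged if the range of $i$ is extended to all admissible indices. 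Denote this quantity by $\nu$.

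It then remains to check $\nu=\ed_{\le k}(U,\PAL)$. Because $\edk{X}{Y}{k}\ge\edd{X}{Y}$ always holds, each term in the minimum defining $\nu$ is at least $\mu=\edd{U}{\PAL}$, and since $\nu$ is either $\infty$ or at most $k$, this yields $\nu\ge\ed_{\le k}(U,\PAL)$. For the converse, if $\edd{U}{\PAL}\le k$, pick an index $i^\ast$ attaining $\mu$ in \cref{prop:ed-to-pal}; one of its two associated edit distances equals $\mu\le k$, so the corresponding thresholded value is exactly $\mu$, whence $\nu\le\mu=\ed_{\le k}(U,\PAL)$. If instead $\edd{U}{\PAL}>k$, then $\ed_{\le k}(U,\PAL)=\infty$ and $\nu\le\ed_{\le k}(U,\PAL)$ is vacuous. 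Combining the two bounds gives the corollary.

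The only mildly delicate point is the floor bookkeeping in the length constraint, namely checking that the $\pm 1$ shift in the second term, together with the rounding $m=\lfloor n/2\rfloor$ and the hypothesis $k\ge 1$, still keeps $i$ inside $[m-k\dd m+k]$. Everything else is an immediate consequence of \cref{prop:ed-to-pal} and the definition of $\edk{\cdot}{\cdot}{k}$, so I do not anticipate any real obstacle.
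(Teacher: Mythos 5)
Your proposal is correct and follows the same route as the paper: the paper's proof is the single observation that the edit distance is at least the length difference, so only indices with $|2i-n|\le k$ (hence $i\in[m-k\dd m+k]$) can contribute, and you simply spell out the rounding details and the routine verification that restricting and thresholding the minimum from \cref{prop:ed-to-pal} preserves its value.
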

\begin{proof}
    The edit distance between two strings is at least the difference of their lengths,
    hence we only need to consider strings that differ by at most $k$ in length.
\end{proof}

\begin{restatable}{property}{propsqed}\label{prop:sq-ed}
Let $U \in \Sigma^n$. We have $\edd{U}{\SQ} = \min_{i}\{\edd{U[\dd i]}{U[i+1\dd ]}\}$.
\end{restatable}
\begin{proof}
    First, for any $i \in [1\dd n]$, we have $\edd{U}{\SQ} \le \edd{U[\dd i]}{U[i+1\dd ]}$,
    as editing the substring of $U$ equal to $U[\dd i]$ into $U[i+1\dd ]$ yields
    the string $U[i+1\dd ]U[i+1\dd ]$, which is a square.
    Now, let $V^2$ be a square such that $\edd{U}{\SQ} = \edd{U}{V^2}$.
    Let $r$ be the position of the rightmost character of $U$ that was not deleted,
    and whose position in $V^2$ (after applying the edits) is at most~$|V|$.
    If it is not defined, we put $r = 0$.
    We then have $\edd{U}{V^2} = \edd{U[\dd r]}{V} + \edd{U[r+1\dd ]}{V}$.
    Applying the triangle inequality,
    we obtain $\edd{U}{\SQ} = \edd{U}{V^2} \ge \edd{U[\dd r]}{U[r+1\dd ]}$,
    hence $\edd{U}{\SQ} \ge \min_{i \in [1\dd n]}\{\edd{U[\dd i]}{U[i+1\dd ]}\}$.
\end{proof}

\begin{corollary}\label{cor:k-sq-ed}
    If $U \in \Sigma^n$, then \[\ed_{\le k}(U,\SQ)
    = \min_{i \in [\lfloor n/2 \rfloor-k\dd \lfloor n/2 \rfloor+k]} \min \ed_{\le k}(U[\dd i], U(i\dd]).\]
\end{corollary}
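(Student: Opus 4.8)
The plan is to obtain this purely as a thresholded specialisation of Property~\ref{prop:sq-ed}, which already yields $\edd{U}{\SQ}=\min_i \edd{U[\dd i]}{U(i\dd]}$ over all split positions $i$. The only two things to add are the restriction of the minimisation window to $[\lfloor n/2 \rfloor-k\dd \lfloor n/2 \rfloor+k]$ and the passage from exact edit distances to the capped values $\ed_{\le k}$.

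For the window restriction I would invoke the trivial bound $\edd{X}{Y}\ge \bigl||X|-|Y|\bigr|$. Since $U[\dd i]$ has length $i$ and $U(i\dd]$ has length $n-i$, this gives $\edd{U[\dd i]}{U(i\dd]}\ge |2i-n|$. A short parity check on the extreme cases $i=\lfloor n/2\rfloor\pm(k+1)$ (treating $n$ even and $n$ odd separately) shows $|2i-n|\ge 2k+1>k$ for every $i\notin[\lfloor n/2\rfloor-k\dd\lfloor n/2\rfloor+k]$, so all such split positions contribute $\ed_{\le k}(U[\dd i],U(i\dd])=\infty$ and may be dropped from the minimisation without affecting its value.

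It then remains to reconcile the exact and capped quantities, which I would do by case analysis on whether $\edd{U}{\SQ}\le k$. If it is, then by Property~\ref{prop:sq-ed} the unrestricted minimum is attained at some $i^\star$ with $\edd{U[\dd i^\star]}{U(i^\star\dd]}\le k$; by the length bound this $i^\star$ lies in the window, and for it $\ed_{\le k}(U[\dd i^\star],U(i^\star\dd])=\edd{U}{\SQ}$, while for every other window position $i$ Property~\ref{prop:sq-ed} forces $\edd{U[\dd i]}{U(i\dd]}\ge\edd{U}{\SQ}$, hence also $\ed_{\le k}(U[\dd i],U(i\dd])\ge\edd{U}{\SQ}$; thus the right-hand side equals $\edd{U}{\SQ}=\ed_{\le k}(U,\SQ)$. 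If instead $\edd{U}{\SQ}>k$, then Property~\ref{prop:sq-ed} gives $\edd{U[\dd i]}{U(i\dd]}\ge\edd{U}{\SQ}>k$ for every $i$, so every term on the right-hand side is $\infty$, matching $\ed_{\le k}(U,\SQ)=\infty$.

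I do not anticipate any genuine obstacle: the statement is essentially Property~\ref{prop:sq-ed} read off with a threshold, and the only point needing a moment of care is the parity bookkeeping confirming that the advertised window is wide enough (indeed it is slightly wider than the tight interval $[\lceil(n-k)/2\rceil\dd\lfloor(n+k)/2\rfloor]$), so that no split within edit distance $k$ is accidentally excluded.
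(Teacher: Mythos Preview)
Your proposal is correct and follows essentially the same approach as the paper: the paper states this corollary without its own proof, relying implicitly on Property~\ref{prop:sq-ed} together with the length lower bound $\edd{X}{Y}\ge\bigl||X|-|Y|\bigr|$ (exactly as in the one-line proof of the analogous Corollary~\ref{cor:ro-pal-edit}). Your write-up simply makes the capping and the parity bookkeeping explicit.
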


\section{Streaming algorithms for the LED problems}\label{sec:streaming-LED}
In this section, we present streaming algorithms for \kpale and \ksqe.

\subsection{Locally consistent decompositions and sketches}
\newcommand{\eval}{\mathsf{eval}}
\newcommand{\GG}{\mathbf{G}}
Our algorithms use locally consistent string decompositions and edit distance sketches of Bhattacharya and \koucky~\cite{bhattacharya2023locally,editpm}.
\subparagraph*{Locally consistent string decompositions.} The randomised
decomposition algorithm of Bhattacharya and  \koucky~\cite{bhattacharya2023locally} receives two integers $k, n$ as input and starts by selecting two families of hash functions of size $\lambda = O(\log n)$.\footnote{\label{notation}In~\cite{bhattacharya2023locally}, $\lambda$ is denoted by $L$, $\tau$ by $T$, and $\mu$ by $M$. We changed the notation to avoid collisions.} Then, when it receives a string $U$ of length at most $n$ as input, the algorithm outputs a sequence $\GG(U)$ of context-free grammars $\GG(U) = G_1^U\cdots G_s^U$ with certain properties,
called run-length straight-line programs or RLSLP for short (the exact definition of an RLSLP is not important for this work; an interested reader can refer to~\cite{DBLP:conf/mfcs/NishimotoIIBT16}). We denote the length of the sequence $s$ by $|\GG(U)|$ and extend the notations for indexing and concatenating strings (e.g. sequences of characters) to sequences of grammars in a natural way.
Each grammar $\GG(U)[i]$ output by the algorithm represents a \emph{unique} string, denoted by $\eval(\GG(U)[i])$.\
We extend $\eval$ into a morphism over grammar sequences: for grammars $G_1,\ldots,G_s$
let $\eval(G_1\cdots G_s) = \eval(G_1)\cdots \eval(G_s)$.

The decomposition satisfies the following properties:
\begin{fact}[{\cite[Theorem 3.1]{bhattacharya2023locally}}]\label{fact:grammar-decomp}
    Let $U, V$ be a pair of strings of length at most $n$ such that $\edd{U}{V} \le k$.
    Let $\GG(U)$ and $\GG(V)$ be the sequences of
    grammars output by the decomposition algorithm on inputs $U$ and $V$ respectively using the same choice of random hash functions. The following is true for $n$ large enough:
    \begin{enumerate}[(a)]
        \item With probability at least $1-2/n$, $U = \eval(\GG(U))$ and $V = \eval(\GG(V))$ ;
        \item With probability at least $1-2/\sqrt{n}$, for all $i,j$ the grammars $\GG(U)[i]$ and $\GG(V)[j]$ have size $\Ot(k)$;
        \item \label{item:k-different} With probability at least $0.9$, $|\GG(U)| = |\GG(V)|$, $\GG(U)[i] = \GG(V)[i]$ for every $1 \le i\le \GG(U)$ except at most $k$ of them,
        and $\edd{U}{V} = \sum_{i=1}^s \edd{\eval(\GG(V)[i])}{\eval(\GG(U)[i])}$.
    \end{enumerate}
\end{fact}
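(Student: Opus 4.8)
This statement is quoted verbatim from~\cite[Theorem~3.1]{bhattacharya2023locally}, so strictly speaking it can be invoked as a black box; the plan below sketches why it holds, following the construction of Bhattacharya and~\koucky. The starting point is their \emph{locally consistent parsing}: on input a string of length at most $n$, the algorithm runs $O(\log n)$ rounds, and in each round it partitions the current sequence of symbols into blocks by a purely local rule -- maximal runs of a repeated symbol are collapsed into a single run-length nonterminal, and the remaining symbols are grouped using a hash-based colouring whose value at a position depends only on a constant-radius neighbourhood. Each block becomes a nonterminal of the next round, so the whole computation is a balanced RLSLP of height $O(\log n)$. The sequence $\GG(U) = G_1^U\cdots G_s^U$ is obtained by cutting this parse at a set of ``boundary'' positions, themselves determined locally by the hash functions, with $G_i^U$ the sub-RLSLP generating the substring between two consecutive boundaries. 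The crucial feature is \emph{local consistency}: the set of boundaries inside a window of the string is a function of a slightly larger window only, so a single edit moves only $O(1)$ boundaries per round and hence disturbs only $O(1)$ of the pieces $G_i$.

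For item~(a) I would note that each round deterministically rewrites the sequence it is handed, so $\eval(\GG(U)) = U$ holds \emph{unless} two distinct substrings receive the same fingerprint somewhere among the $O(n\log n)$ comparisons performed; drawing the fingerprint modulus from a large enough range and union bounding yields the $1-2/n$ guarantee, and likewise for $V$. For item~(b), the boundaries are placed densely enough that every piece between consecutive ones is either a power of a short primitive word -- represented by a run-length rule of size $\Ot(1)$ -- or has length $\Ot(k)$; in either case, balancedness makes its RLSLP have size $\Ot(k)$. The residual bad event, that some piece is simultaneously long and aperiodic, fails with probability $O(1/\sqrt n)$ by the concentration analysis of~\cite{bhattacharya2023locally}, uniformly over the pieces of $\GG(U)$ and $\GG(V)$.

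Item~(c) is the heart of the matter, and here I would use local consistency together with $\edd{U}{V}\le k$. Fix an optimal alignment $\caA$ of $U$ onto $V$ of cost $\edd{U}{V}$. Outside the $O(k\log n)$ windows touched by the edits of $\caA$, local consistency forces the boundary structures of $U$ and $V$ to coincide, giving $|\GG(U)| = |\GG(V)|$ and $\GG(U)[i] = \GG(V)[i]$ for all but $O(k)$ indices -- and, with the precise choice of boundaries in~\cite{bhattacharya2023locally}, for all but at most $k$ of them. The additive identity then follows in two halves: ``$\le$'' is immediate, since editing each piece $\eval(\GG(U)[i])$ into $\eval(\GG(V)[i])$ independently transforms $U$ into $V$; ``$\ge$'' requires that $\caA$ may be taken to map the $i$-th boundary of $U$ to the $i$-th boundary of $V$, so that it decomposes into independent sub-alignments of the pieces without extra cost. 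The latter holds because a boundary of $U$ not hit by an edit is pinned down by local content that $\caA$ matches exactly to the content around the corresponding boundary of $V$. Making this ``boundary synchronisation'' argument precise -- in particular, controlling, with probability $0.9$ over the hash functions, the corner cases in which a boundary lies just inside an edited window -- is the technical core of~\cite{bhattacharya2023locally}, and I expect it to be the main obstacle; a fully self-contained proof would essentially reproduce their locally consistent decomposition theorem, so I would instead cite it directly.
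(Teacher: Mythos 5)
The paper offers no proof of this statement: it is imported verbatim as a Fact from \cite[Theorem 3.1]{bhattacharya2023locally}, and you correctly treat it the same way, invoking the cited theorem as a black box. Your accompanying sketch of the locally consistent decomposition and the boundary-synchronisation argument is a reasonable outline of the cited construction, but since the paper itself supplies no proof, citing the source directly is exactly what is expected here.
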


The decomposition can be updated efficiently upon appending or prepending a character to a string. Let $\tau = O(\log n \log^* n)$ be a parameter defined as in~\cite{bhattacharya2023locally}. (See Footnote~\ref{notation}).
While it may happen that $\GG(UV)$ is shorter than $\GG(U)$,
these decompositions are related as follows:
\begin{corollary}[{Of~\cite[Lemmas 4.1, 4.2]{bhattacharya2023locally}}]\label{cor:small-decomp-change}
     Consider $U, V\in\Sigma^*$ such that $|U|+|V| \le n$.
     Let $G = \GG(U)$, $G' = \GG(UV)$, $G'' = \GG(VU)$, and $s = |G|$, $s' = |G'|$, $s'' = |G''|$.
     We have:
     \begin{enumerate}
         \item $G[\dd s-\tau] = G'[\dd s-\tau ]$ and $G(\tau \dd] = G''(s''-s+\tau \dd]$,
         \item $|U| \le |\eval(G'[\dd \min\{s+\tau, s'\}])|$ and
         $|U| \le |\eval(G''(\max\{s''-s+\tau, 0\} \dd])|$.
     \end{enumerate}
\end{corollary}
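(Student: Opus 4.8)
The statement to prove is \Cref{cor:small-decomp-change}, which relates the grammar decompositions $\GG(U)$, $\GG(UV)$, and $\GG(VU)$.

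\textbf{Overall approach.} The plan is to derive both parts directly from the local consistency of the decomposition algorithm, as captured by \cite[Lemmas 4.1, 4.2]{bhattacharya2023locally}. The key intuition is that the decomposition algorithm processes the input character by character, and each newly produced grammar symbol depends only on a bounded window (of width governed by $\tau = O(\log n \log^* n)$) of the surrounding text. Consequently, appending $V$ to $U$ can only affect the last $\tau$ grammars of the decomposition of $U$, and prepending $V$ to $U$ can only affect the first $\tau$ grammars; the remaining grammars are reused verbatim in the new decomposition.

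\textbf{Part 1 (structural stability).} First I would invoke \cite[Lemma 4.1]{bhattacharya2023locally} for the append case: it states that when we extend $U$ to $UV$, the prefix of the grammar sequence is preserved up to the last $\tau$ symbols, giving $G[\dd s - \tau] = G'[\dd s - \tau]$. For the prepend case $G'' = \GG(VU)$, I would apply the symmetric statement (\cite[Lemma 4.2]{bhattacharya2023locally}, or the reversed version of Lemma 4.1): the suffix of $\GG(U)$ of all but the first $\tau$ symbols reappears as a suffix of $\GG(VU)$. Since $|\GG(VU)| = s''$ and $|\GG(U)| = s$, the suffix $G(\tau \dd]$ of length $s - \tau$ must align with $G''(s'' - (s-\tau) \dd] = G''(s'' - s + \tau \dd]$, which is exactly the claimed identity $G(\tau \dd] = G'(s' - s + \tau \dd)$ — here I would double-check the index bookkeeping, since the statement as written mixes $s'$ and the prepend case, so I would verify against the precise indexing conventions of the cited lemmas. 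This index-matching is the only place where care is needed; the substance is entirely in the cited results.

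\textbf{Part 2 (length bounds) and the main obstacle.} For the second part, the goal is to show that $U$ is ``contained'' in a suitable prefix (resp.\ suffix) of the extended decomposition. From Part 1, $G'[\dd s-\tau]$ equals $G[\dd s - \tau]$, so $\eval(G'[\dd s-\tau]) = \eval(G[\dd s-\tau])$ is a prefix of $U$; I then need to account for the remaining at most $\tau$ grammars $G[s-\tau \dd s]$ of $\GG(U)$ that encode the suffix of $U$ of length $|U| - |\eval(G[\dd s-\tau])|$. The claim $|U| \le |\eval(G'[\dd \min\{s+\tau, s'\}])|$ says that within $\tau$ extra grammars past position $s-\tau$ — i.e.\ by position $s+\tau$, or the end of $G'$ if sooner — the decomposition of $UV$ has already generated at least all of $U$. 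This should follow because the grammars of $\GG(UV)$ at positions $(s-\tau \dd s+\tau]$ together cover the boundary region, which includes the entire suffix of $U$ not covered by the shared prefix; one extracts this from the quantitative content of \cite[Lemmas 4.1, 4.2]{bhattacharya2023locally} about how the $\tau$-window interacts with string lengths. The symmetric bound $|U| \le |\eval(G''(\max\{s'' - s - \tau, 0\} \dd])|$ follows the same way from the prepend version. The main obstacle I anticipate is purely bookkeeping: reconciling the exact window widths and off-by-$\tau$ indexing used in the cited lemmas with the clean statement here (in particular confirming that it is $\tau$ and not $2\tau$ or $\tau+O(1)$ that appears, and that the $\min$/$\max$ truncations at the sequence endpoints are handled correctly). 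There is no genuinely hard mathematical step — everything reduces to the locality guarantees already proved in \cite{bhattacharya2023locally}; the work is in carefully transcribing those guarantees into the form stated.
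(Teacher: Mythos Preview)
Your proposal is correct and matches the paper's approach: the paper provides no explicit proof for this corollary, simply citing \cite[Lemmas~4.1, 4.2]{bhattacharya2023locally} as the source, which is precisely what you propose to unpack. Your observation about the indexing oddity in the second identity of Part~1 (the mix of $G'$, $s'$ with what looks like the prepend case) is apt---this appears to be a typo in the statement and should likely involve $G''$, $s''$; your instinct to verify against the cited lemmas' precise conventions is the right one.
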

In particular, this result implies that if the index of a grammar in $\GG(U)$
is less than $|\GG(U)|-\tau$ (resp. more than $\tau$),
then appending (resp. prepending) characters to $U$ will not change that grammar.
Extending the terminology~\cite{bhattacharya2023locally},
we call a grammar that remains unchanged when appending (resp. prepending)
any string a \textit{right-committed grammar} (resp. \textit{left-committed grammar}),
while the others are \textit{right-active} (resp. \textit{left-active}) \textit{grammars}.
When a grammar is both left- and right-committed, we simply say that it is \emph{committed}.
In what follows, we use $\alpha_r(\GG(U))$ (resp. $\alpha_l(\GG(U))$) to denote
the number of right-active (resp. left-active) grammars in $\GG(U)$.
\cref{cor:small-decomp-change} implies that $\alpha_l(\GG(U)), \alpha_r(\GG(U)) \le \tau$.

Theorem 5.1 in~\cite{bhattacharya2023locally} presents an algorithm only for the case of appending a character,
essentially the same  can be used to prepend one:
\begin{corollary}[{Of \cite[Theorem 5.1]{bhattacharya2023locally}}]\label{cor:update-decomp}
    Let $U\in\Sigma^{\le n-1}$, $G=\GG(U)$, and $s=|G|$.
    There are two algorithms \textsf{AppendCharacter} and \textsf{PrependCharacter}
    that run in $\Ot(k)$ time and:
    \begin{enumerate}
        \item Given a character $a\in\Sigma$ and $G[\max(1, s-\tau)\dd s]$, the algorithm \textsf{AppendCharacter}
        outputs $G'$ such that $\GG(Ua) = G[1 \dd \max(1, s-\tau)) G'$ and $|G'|\le 4\tau\lambda$.
        \item Given a character $a\in\Sigma$ and $G[1 \dd \min\{\tau+1, s\}]$, the algorithm \textsf{PrependCharacter}
        outputs $G'$ such that $\GG(aU) = G' \cdot G(\min\{\tau+1, s\}\dd s]$ and $|G'| \le 4\tau \lambda$.
    \end{enumerate}  
\end{corollary}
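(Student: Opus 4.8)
The plan is to derive both algorithms from \cite[Theorem 5.1]{bhattacharya2023locally}, which supplies \textsf{AppendCharacter}, using that the locally consistent decomposition $\GG$ is symmetric under reversal of the input. For item~1 there is essentially nothing new to prove: \textsf{AppendCharacter} is literally the algorithm of \cite[Theorem 5.1]{bhattacharya2023locally}. The two things to check are that it reads only the suffix $G[\max(1,s-\tau)\dd s]$ of the current decomposition — which is the statement that every grammar of index below $s-\tau$ is right-committed, i.e.\ the append case of \cref{cor:small-decomp-change} — and that the fresh block $G'$ satisfies $|G'|\le 4\tau\lambda$. The latter is the bound proved in \cite{bhattacharya2023locally}: each of the $\lambda=O(\log n)$ parsing levels contributes at most $O(\tau)$ right-active grammars (the bound $\alpha_r(\GG(\cdot))\le\tau$, itself a consequence of \cref{cor:small-decomp-change}), so a single append rewrites only $O(\tau\lambda)$ grammars, each of size $\Ot(k)$; hence the $\Ot(k)$ running time.

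For item~2, I would first record a \emph{reversal-equivariance} property of $\GG$: for every $W\in\Sigma^{\le n}$, the sequence $\GG(W^R)$ is the position-reversed, per-grammar-reversed copy of $\GG(W)$, so in particular $|\GG(W^R)|=|\GG(W)|$ and $\eval(\GG(W^R)[i])=\bigl(\eval(\GG(W)[|\GG(W)|+1-i])\bigr)^R$ for every $i$. Granting this, \textsf{PrependCharacter} is realised as a mirrored call to \textsf{AppendCharacter}: since $(aU)^R=U^Ra$, the sequence $\GG(aU)$ is the reversal of $\GG(U^Ra)$, and $\GG(U^Ra)$ is obtained from $\GG(U^R)$ by one application of \textsf{AppendCharacter} with character $a$. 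That application inspects only the last $\tau+1$ grammars of $\GG(U^R)$, which by reversal-equivariance are the per-grammar reversals of the prefix $G[1\dd\min\{\tau+1,s\}]$ we are handed; it returns a new tail $\widehat G$ with $|\widehat G|\le 4\tau\lambda$ in $\Ot(k)$ time. Reversing $\widehat G$, both as a sequence and grammar-by-grammar, yields the required $G'$, and the prepend case of \cref{cor:small-decomp-change} then gives $\GG(aU)=G'\cdot G(\min\{\tau+1,s\}\dd s]$. Since $\GG$ is built from RLSLPs, reversing $O(\tau\lambda)$ grammars of size $\Ot(k)$ costs $\Ot(k)$, so the overall time stays $\Ot(k)$.

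The main obstacle is the reversal-equivariance claim itself: one must look inside the construction of \cite{bhattacharya2023locally} and confirm that the block boundaries chosen at each of the $\lambda$ levels are determined by a rule invariant under reading the input backwards (e.g.\ a local-extremum-of-hash rule), so that the two fixed hash-function families can be reused verbatim on the reversed input and the grammar \emph{structure} — not merely the evaluated string — carries over. If the decomposition of \cite{bhattacharya2023locally} turns out to carry a genuine left-to-right bias not already absorbed by \cref{cor:small-decomp-change} (which does handle prepending), the fallback — to which the remark preceding the statement alludes — is to re-prove \textsf{PrependCharacter} directly by transposing the proof of \cite[Theorem 5.1]{bhattacharya2023locally}: swap ``append''/``right-committed'' for ``prepend''/``left-committed'' throughout, and invoke the prepend case of \cref{cor:small-decomp-change} wherever the original uses its append-only structural lemma. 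The algorithm and its $\Ot(k)$-time analysis are unchanged, since the decomposition is local in both directions.
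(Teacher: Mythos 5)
Your treatment of item~1 matches the paper exactly: the paper offers no proof here beyond the remark immediately preceding the statement, namely that \cite[Theorem 5.1]{bhattacharya2023locally} \emph{is} \textsf{AppendCharacter} and that ``essentially the same [algorithm] can be used to prepend one.'' For item~2, your \emph{fallback} route --- transposing the proof of the cited theorem, swapping ``right-committed'' for ``left-committed'' and invoking the prepend case of \cref{cor:small-decomp-change} --- is precisely the paper's intended justification, so in that form your proposal is faithful to the source. Your \emph{primary} route, however, rests on a reversal-equivariance claim ($\GG(W^R)$ equals the position- and per-grammar-reversed copy of $\GG(W)$) that you correctly identify as the main obstacle, and I would not lean on it: the decomposition of \cite{bhattacharya2023locally} is built level by level with block boundaries determined by a directional rule, and even when the boundary \emph{positions} are invariant under reversal, the assignment of boundary symbols to blocks generally is not, so the grammar structure need not carry over verbatim. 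Since the paper's own locality statement (\cref{cor:small-decomp-change}) is already stated symmetrically for both appending and prepending, the direct transposition is both the safer and the shorter argument; the reversal detour buys nothing. One further small imprecision: you attribute the bound $|G'|\le 4\tau\lambda$ to ``at most $O(\tau)$ right-active grammars per level,'' whereas $\alpha_r(\GG(U))\le\tau$ is a bound on the \emph{total} number of right-active grammars; the $4\tau\lambda$ figure is the size of the rewritten tail output by a single update and is simply imported from the cited theorem, so it should be cited rather than re-derived from the $\alpha_r$ bound.
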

It follows from the above two results that,
by storing the last $\alpha_r(\GG(U))+\tau$ (resp. first $\alpha_l(\GG(U))+\tau$) grammars of $\GG(U)$,
we can compute the last $\alpha_r(\GG(UV))+\tau$ grammars
of $\GG(UV)$ (resp. first $\alpha_l(\GG(VU))+\tau$ grammars of $\GG(VU)$)
by applying the algorithm of \cref{cor:update-decomp} $|V|$ times.
These grammars include the right-active grammars of $\GG(UV)$
(resp. left-active grammars of $\GG(VU)$).

\subparagraph*{Edit distance sketches. }
\newcommand{\genc}[1]{\enc( #1 )}
Our algorithms exploit an extension of the \textit{edit distance sketches} of Bhattacharya and \koucky~\cite{bhattacharya2023locally}.

\begin{fact}[{\cite[Lemma 3.13]{bhattacharya2023locally}}]\label{prop-of-genc}
Let $\mu = \Ot(k)$ be a parameter. (See Footnote~\ref{notation}). There is a mapping $\enc$ from the set of grammars output by the decomposition algorithm to the set of strings of length $\mu$ on a polynomial size that guarantees that the following is satisfied:
\begin{enumerate}
\item A grammar can be encoded and decoded in $O(\mu)$ time;
\item Encodings of two equal grammars are equal;
\item Encodings of two distinct grammars output by the decomposition algorithm differ in all $\mu$ characters with probability at least $1-2\mu/n$.
\end{enumerate}
\end{fact}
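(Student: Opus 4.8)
The plan is to split $\enc$ into a \emph{deterministic, canonical} part that handles items~(1) and~(2) together with the weaker statement that distinct grammars differ in \emph{at least one} encoding position, and a \emph{randomized} part that amplifies this to \emph{all $\mu$} positions. The structural input I would rely on is that the decomposition algorithm of~\cite{bhattacharya2023locally} maintains, for every grammar $G$ it ever outputs, a canonical name $\mathsf{name}(G)$ from a universe of size $\poly(n)$ (so a single machine word) that is a function of $G$ and that conversely determines $G$; concretely, two grammars are equal if and only if their names are equal. Such a dictionary of right-hand sides is inherent to locally consistent parsing, and its entries can be read in $O(1)$ time. Everything else is a routine union bound.

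The construction: fix a field $\FF_q$ with $q=\poly(n)$, say $q\ge n/2$, and identify both the name universe and the encoding alphabet with $\FF_q$. Let $h_2,\dots,h_\mu\colon \FF_q\to\FF_q$ be drawn from a pairwise-independent hash family using fresh random bits (independent of the decomposition's randomness), and set
\[
\enc(G)=\bigl(\mathsf{name}(G),\,h_2(\mathsf{name}(G)),\,\dots,\,h_\mu(\mathsf{name}(G))\bigr)\in\FF_q^{\mu}.
\]
For item~(1): computing $\enc(G)$ is one lookup plus $\mu-1$ hash evaluations, each $O(1)$ time, hence $O(\mu)$ overall; decoding is a single lookup of position~$1$, which stores $\mathsf{name}(G)$, from which $G$ is recovered. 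For item~(2): $\mathsf{name}$ is a deterministic function of the grammar, so equal grammars get equal encodings.

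For item~(3), let $G\ne G'$ be distinct grammars in the image of the decomposition. Since $\mathsf{name}$ is injective there, $\mathsf{name}(G)\ne\mathsf{name}(G')$, so the encodings already differ in position~$1$. For each $t\in[2\dd\mu]$ the collision bound for the hash family gives $\Prob{h_t(\mathsf{name}(G))=h_t(\mathsf{name}(G'))}\le 1/q$, and a union bound over the at most $\mu$ positions shows that $\enc(G)$ and $\enc(G')$ agree somewhere with probability at most $\mu/q\le 2\mu/n$; equivalently, they differ in all $\mu$ characters with probability at least $1-2\mu/n$. Since this holds for every fixed distinct pair in the polynomial-size image, it holds for whichever pair actually arises from the (fixed) input strings.

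The load-bearing step — and the one I would import most carefully from~\cite{bhattacharya2023locally} — is the existence of a one-word canonical name that both is determined by and determines the grammar. If instead one only has a canonical \emph{serialization} $\sigma(G)$ of length $\ell=\Ot(k)=\Theta(\mu)$ over $\FF_q$, the natural fallback is a \emph{systematic} code (output $\sigma(G)$ in the first $\ell$ positions and $\mu-\ell$ random linear check symbols $\langle M_i,\sigma(G)\rangle$ in the rest), which preserves items~(1')--(2) and decodability and handles a \emph{weak} form of~(3) on the check symbols by the same union bound; but this computes the check symbols in $\Theta(\mu\ell)=\Theta(\mu^2)$ time, and it only forces distinct grammars to differ in \emph{some} position, not all $\mu$. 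So both the $O(\mu)$ encoding time and the full "differ in all $\mu$ characters" guarantee genuinely hinge on the compact naming; granted that, the rest is exactly the union-bound argument above.
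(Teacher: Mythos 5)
The paper does not prove this statement at all: it is imported verbatim as \cite[Lemma~3.13]{bhattacharya2023locally}, so there is no internal proof to compare against. Judged on its own merits, your reconstruction has a genuine gap, and it sits exactly where you flagged it: the assumed one-machine-word canonical name $\mathsf{name}(G)$ that is both determined by and determines the grammar. A grammar output by the decomposition has size $\Ot(k)$, i.e.\ $\Theta(\mu)$ symbols of $O(\log n)$ bits each, so a single $O(\log n)$-bit name cannot determine it information-theoretically unless the decoder shares a dictionary with the encoder. But in the setting where $\enc$ is used (Lemma~\ref{lemma:edit-sketches}), the encodings of $\GG(U)$ and $\GG(V)$ are produced by two \emph{independent} runs on different strings, fed into Hamming sketches, and decoded on the other side without access to either run's dictionary. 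A dictionary-based name assigned by order of first appearance would not even satisfy your item~(2) across the two runs (the same grammar can receive different names), and decoding from the name alone is impossible. If instead $\mathsf{name}$ is a content hash, injectivity on the image is only probabilistic and decodability is lost entirely. In short, the canonical-name assumption is not a technical import from \cite{bhattacharya2023locally}; it is essentially the statement to be proven, and the reason $\mu=\Ot(k)$ rather than $O(1)$ is precisely that the encoding must be self-contained.

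Your fallback (systematic serialization plus random linear check symbols) is the honest version of the argument, and as you concede it yields neither the $O(\mu)$ encoding time nor property~(3): distinct grammars may agree on almost all of the systematic positions, whereas the fact requires them to differ in \emph{all} $\mu$ positions with probability $1-2\mu/n$. That ``differ everywhere'' guarantee is what makes the reduction to the Hamming sketch with parameter $k\cdot\mu$ work (each of the at most $k$ differing grammar pairs must contribute exactly $\mu$ mismatches, no more and no fewer), so it cannot be weakened. The missing idea is a code-style construction applied to the full $\Theta(\mu)$-symbol serialization in which \emph{every} output position depends on the whole serialization through independent (or pairwise-independent) randomness --- e.g.\ position $j$ carries an evaluation of the serialization polynomial at a fresh random point --- so that a per-position collision bound of $\poly(n)^{-1}$ plus a union bound over the $\mu$ positions and the polynomially many grammar pairs gives the claim, together with a decoding routine that recovers the serialization from the $\mu$ symbols in $O(\mu)$ time. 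Your union-bound skeleton is the right final step, but it is applied to an object you have not constructed.
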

We extend this encoding to a sequence of grammars: $\genc{G_1\cdots G_s} = \genc{G_1} \cdots \genc{G_s}$.

\begin{lemma}\label{lemma:edit-sketches}
    There is an edit distance sketch $\skek$ that uses $\Ot(k^2)$ bits
    of space, and supports the following operations assuming that
    all involved strings have length at most $n$:
    \begin{enumerate}[(a)]
        \item \textbf{Append:} given $\skek(U)$ and a character $a\in\Sigma$,
        compute $\skek(Ua)$,
        \item \textbf{Prepend:} given $\skek(U)$ and a character $a\in\Sigma$,
        compute $\skek(aU)$,
        \item \textbf{Distance:} given $\skek(U)$ and $\skek(V)$,
        compute $\ed_{\le k} (U,V)$.
    \end{enumerate}
    All three operations take $\Ot(k^2)$ time and space and
    errs with probability inverse polynomial in $n$.
\end{lemma}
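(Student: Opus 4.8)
The plan is to build the sketch $\skek(U)$ essentially as the pair consisting of (i) a Hamming-distance sketch (from \cref{fact:ham_sketch}, invoked with the mismatch-threshold parameter set to $\Oh(k)$ and alphabet $\Sigma'$ being the set of possible grammar encodings under $\enc$) of the encoded grammar sequence $\genc{\GG(U)}$, and (ii) a verbatim copy of the $\Oh(\tau) = \Ot(1)$ left-active and right-active grammars of $\GG(U)$, each stored together with its $\enc$-encoding and with enough auxiliary data to run \textsf{AppendCharacter}/\textsf{PrependCharacter} from \cref{cor:update-decomp}. Since each grammar has size $\Ot(k)$ (\cref{fact:grammar-decomp}(b)) and each encoding has length $\mu = \Ot(k)$, storing $\Ot(1)$ active grammars costs $\Ot(k^2)$ bits, and the Hamming sketch over an alphabet of encodings of total length $\Ot(k)$ words costs $\Ot(k^2)$ bits as well; this gives the claimed space bound. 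I will treat a single encoded grammar as one "super-character" of $\Sigma'$, so that $\genc{\GG(U)}$ is a string of $s \le n$ super-characters and $\hd{\genc{\GG(U)}}{\genc{\GG(V)}}$ counts exactly the positions where the two grammar sequences differ.

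For \textbf{Append}, given $\skek(U)$ and $a$, I would: retrieve the stored suffix $G[\max(1,s-\tau)\dd s]$ of $\GG(U)$, run \textsf{AppendCharacter} to obtain the modified tail $G'$ with $\GG(Ua) = G[1\dd \max(1,s-\tau)) \cdot G'$, recompute the encodings of the $\Oh(\tau\lambda) = \Ot(1)$ new tail grammars, and update the Hamming sketch of the encoded sequence by removing the $\Oh(\tau)$ old tail super-characters and appending the new ones (using the composition operation of \cref{fact:ham_sketch}, which lets us go between $\skk{X}$, $\skk{Y}$, $\skk{XY}$ — here the "strings" are encoded-grammar sequences). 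The newly stored active grammars for $\GG(Ua)$ are read off the tail of $G'$ together with what \cref{cor:small-decomp-change} guarantees remains; one checks $\alpha_r(\GG(Ua)) \le \tau$, so storing the last $\alpha_r + \tau = \Oh(\tau)$ grammars suffices. \textbf{Prepend} is symmetric, using \textsf{PrependCharacter} and the prefix-side bookkeeping. Each such update touches $\Ot(1)$ grammars of size $\Ot(k)$ and performs $\Ot(1)$ Hamming-sketch operations, each costing $\Ot(k)$ time by \cref{fact:ham_sketch}, for $\Ot(k)$ time per Append/Prepend — comfortably within the stated $\Ot(k^2)$.

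For \textbf{Distance}, given $\skek(U)$ and $\skek(V)$ with $\edd{U}{V}\le k$: first decode the Hamming sketches to get $\MI$ between the two encoded grammar sequences, i.e. the $\Oh(k)$ positions $i$ (and the values $\genc{\GG(U)[i]}$, $\genc{\GG(V)[i]}$) at which the sequences differ — this works because \cref{fact:grammar-decomp}(c) ensures $\GG(U)$ and $\GG(V)$ agree on all but $\le k$ indices and have equal length, so the encoded sequences are within Hamming distance $\Oh(k)$. But the decoded $\MI$ gives only the \emph{encodings} of the differing grammars, not the grammars themselves, so I cannot directly evaluate $\sum_i \edd{\eval(\GG(U)[i])}{\eval(\GG(V)[i])}$ from $\MI$ alone; this is the main obstacle. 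To overcome it, the sketch must additionally carry, for the differing indices, enough information to recover each grammar — but of course at encoding time we do not know which indices will differ. The resolution I would pursue is the standard one for this kind of sketch: store not a single Hamming sketch but a small family of them at geometrically varying granularities together with a logarithmic-depth hierarchical decomposition of the grammar sequence (grouping consecutive grammars into blocks, blocks into super-blocks, etc.), so that from $O(k)$ differing leaf positions one can, with $O(\log n)$ rounds, zoom in and recover the $O(k)$ differing grammars verbatim; this is exactly the mechanism behind \cite[Lemma 3.13 and the sketch construction]{bhattacharya2023locally}, and I would cite it rather than reprove it. Once the $\Oh(k)$ pairs of differing grammars are in hand, each of size $\Ot(k)$, I compute $\edd{\eval(\GG(U)[i])}{\eval(\GG(V)[i])}$ for each by a direct $\Ot(k^2)$-time edit-distance computation on the evaluated strings (or via the bounded-edit-distance routine of \cite{bhattacharya2023locally}), sum these up, and return the result capped at $k+1$; correctness is immediate from \cref{fact:grammar-decomp}(c), and the running time is $\Oh(k)\cdot\Ot(k) = \Ot(k^2)$ for recovery plus $\Oh(k)\cdot\Ot(k) = \Ot(k^2)$ for the per-grammar edit distances. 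The failure probability is a union bound over the $\Oh(n)$ events of \cref{fact:grammar-decomp}, \cref{fact:ham_sketch}'s decoding, and \cref{prop-of-genc}, each inverse-polynomial in $n$ after the usual amplification (repeating the construction $\Oh(1)$ times and taking a consensus, or equivalently setting the constant $c$ in \cref{fact:ham_sketch} large enough), which affects only the hidden polylog factors.
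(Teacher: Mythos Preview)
Your high-level plan matches the paper's: sketch $\GG(U)$ via a Hamming sketch of the encoded grammar sequence, keep $\Oh(\tau)$ boundary grammars explicitly so that \textsf{AppendCharacter}/\textsf{PrependCharacter} can be run on updates, and for Distance recover the $\le k$ differing grammars and sum their pairwise edit distances via \cref{fact:grammar-decomp}(c). However, the ``main obstacle'' you identify is illusory: by item~1 of \cref{prop-of-genc}, $\enc$ is \emph{decodable} in $O(\mu)$ time, so holding the encoding of a grammar is the same as holding the grammar itself; no hierarchical multi-granularity scheme is needed. The paper simply works at the character level (each grammar contributing $\mu$ characters) and uses the Hamming sketch with threshold $k\mu$, namely $\sk_{k\cdot\mu}(\genc{\GG(U)})$. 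Note that your super-character alphabet $\Sigma'$ would have symbols occupying $\Ot(k)$ machine words each, so \cref{fact:ham_sketch} does not apply to it verbatim. When $\GG(U)$ and $\GG(V)$ differ at $\le k$ indices, item~3 of \cref{prop-of-genc} guarantees (w.h.p.) that the encoded sequences differ in all $\mu$ characters at each such index and nowhere else; hence the mismatch information returned by \cref{fact:ham_sketch} already contains the complete encodings of every differing grammar, which are then decoded directly.

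A second, smaller but genuine gap: you propose to compute each $\edd{\eval(\GG(U)[i])}{\eval(\GG(V)[i])}$ by a ``direct $\Ot(k^2)$-time edit-distance computation on the evaluated strings''. The grammars have \emph{description size} $\Ot(k)$, but their evaluations can have length up to $n$, so decompressing and running a standard routine is far too slow. The paper instead invokes the compressed edit-distance algorithm of~\cite{ganesh2022compression} on the RLSLPs themselves, in $\Ot(m + d^2)$ time per pair (after a binary search on the threshold), yielding $\Ot(k^2)$ total over all differing pairs.
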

\begin{proof}
\newcommand{\skkmu}{\sk_{k \cdot \mu}}
Let $\GG(U)$ be the locally consistent decomposition of a string $U$.
In what follows, we denote $\alpha_r = \alpha_r(\GG(U))$ and $\alpha_l = \alpha_l(\GG(U))$.
If $s = |\GG(U)| \le 4\tau$, $\skek(U) = \genc{\GG(U)}$ and takes $\Ot(k)$ space.
Otherwise, $\skek(U)$ is defined as a tuple
\[(\GG_l, \Ss, \GG_r) = \left(\GG(U)[1\dd s_l],
\sk_{k \cdot \mu}(\genc{\GG(U)(s_l \dd s-s_r]}), \GG(U)(s-s_r \dd s]\right),\]
where $s_l$ and $s_r$ are integers satisfying the invariant
\begin{equation}\label{eq:edsk-invariant}
 2\tau \ge s_l \ge \alpha_l + \tau
\text{ and }2\tau \ge s_r \ge \alpha_r + \tau.
 \end{equation}
Note that this definition does not define $\skek(U)$ uniquely
(as there are multiple possible choices of $s_l$ and $s_r$),
but the invariant of \cref{eq:edsk-invariant} ensures that
we can always perform updates and distance operations using the sketch. Initially (i.e., the first time $s$ becomes larger than $4\tau$), we set $s_l = s_r = 2\tau$, which satisfy the invariant of \cref{eq:edsk-invariant} by \cref{cor:small-decomp-change}.

\textbf{Update operations.}
Let $(\GG_l, \Ss, \GG_r)$ be the sketch of $U$, $s_l = |\GG_l|$ and $s_r = |\GG_r|$.
Given a character $a\in\Sigma$, we obtain the sketch of $Ua$
feeding $(\GG_r[s_r-\tau \dd s_r], a)$
into the algorithm \textsf{AppendCharacter} (Corollary~\ref{cor:update-decomp}).
Let $G_{out}$ be the sequence of grammars output by this algorithm
and let $G' = \GG_r[\dd s_r-\tau)G_{out}$
be the sequence of grammars obtained by replacing
$\GG_r[s_r-\tau \dd s_r]$ with $G_{out}$
in $\GG_r$.
If $|G'| \le 2\tau$, we output $(\GG_l, \Ss, G')$
for the sketch of $Ua$.
Otherwise, we only store explicitly the last $2\tau$ grammars of $G'$,
and add the others at the end of the Hamming sketch $\Ss$.
More formally, we construct the Hamming distance sketch $\Ss'$
by concatenating $\Ss$ and $\skkmu(\genc{G'[\dd |G'|-2\tau]})$.
We then output $\skek(Ua) = (\GG_l, \Ss', G'(|G'|-2\tau\dd])$.
By \cref{cor:update-decomp}, $|G'| \le 4\tau\lambda + \tau = \Ot(1)$,
hence we can build $\skkmu(\genc{})$ in time $\Ot(k^2)$ using \cref{fact:ham_sketch},
and the subsequent sketch concatenation also takes $\Ot(k^2)$ time.
In both cases, \cref{cor:small-decomp-change} ensures that
the invariant of \cref{eq:edsk-invariant} is satisfied.

Prepending a character works similarly using $\GG_l$ and the \textsf{PrependCharacter}
algorithm of \cref{cor:update-decomp}.

\textbf{Distance operation.}
Let $U, V $ be strings satisfying the conditions of \cref{fact:grammar-decomp},
and consider the sketches $\skek(U) = (\GG_l, \Ss, \GG_r)$
and $\skek(V) = (\GG_l', \Ss', \GG_r')$ obtained from decompositions
such that the properties of \cref{fact:grammar-decomp} hold.
By \cref{fact:grammar-decomp}(\ref{item:k-different}),
at most $k$ grammars of the decompositions of $U$ and $V$ differ:
we can recover them from the Hamming sketches of $\GG(U)$ and $\GG(V)$.
We can obtain $\skkmu(\genc{\GG(U)})$ sketches by
concatenating $\skkmu(\genc{\GG_l})$, $\Ss$ and $\skkmu(\genc{\GG_r})$, which takes $\Ot(k^2)$ time
(and similarly to obtain $\skkmu(\genc{\GG(V))}$).
Notice that the above operation results in $\skkmu(\genc{\GG(U)})$
regardless of the length of $|\GG_l|$ and $|\GG_r|$:
this normalization helps us avoid issues related to the definition of the sketches.
Using these sketches, we can recover $\genc{G_i}$
for every $i$ such that $\GG(U)[i] \neq \GG(V)[i]$.
We can then compute the edit distance between $U$ and $V$ using \cref{fact:grammar-decomp}(\ref{item:k-different}):
\begin{align*}\label{eq:grammar-ed-sum}
\edd{U}{V} &= \\
&=\sum_{i=1}^s \edd{\eval(\GG(U)[i])}{\eval(\GG(V)[i])}\\
& = \sum_{i: \GG(U)[i] \neq \GG(V)[i]} \edd{\eval(\GG(U)[i])}{\eval(\GG(V)[i])}   
\end{align*}
Given two RLSLPs $G, G'$ of size at most $m$ and $d = \edd{\eval(G)}{\eval(G')}$,
one can compute $\min\{d, k+1\}$ in time $\Ot(m + k^2)$
using the algorithm of Ganesh et al.~\cite{ganesh2022compression}.
Using binary search over $k$, we can reduce this running time to $\Ot(m + d^2)$
when $d \le k$.
Therefore, we can compute the sum above in time $\Ot(k^2)$:
there are $k$ RLSLPs of size $\Ot(k)$, and we can stop the computation as
soon as the sum of $d_i = \edd{\eval(\GG(U)[i])}{\eval(\GG(V)[i])}$ exceeds $k$. In total, this costs
$\Ot(\mu + k^2 + \sum_i \mu + d_i^2 \le k\mu + \left(\sum_i d_i\right)^2) = \Ot(k^2)$ time.
The space complexity can be upper-bounded (up to polylog factors) by the time complexity.

\textbf{Probability of success.}
The results of \cref{fact:grammar-decomp} hold with constant probability, which can be boosted in a standard way by repeating the scheme a logarithmic number of times.
\end{proof}

\subsection{A streaming algorithm for \texorpdfstring{\kpale}{k-LED-PAL}}\label{sec:kpale}
\thmpaledit*

Our algorithm maintains two sketches $x, x'$ initialised to sketches of the
empty string $\ske_{2k}(\eps)$.
Upon receiving $T[i]$, we append it to $x$ and prepend it to $x'$.
After performing this update, we have $x = \ske_{2k}(T[\dd i])$,
and $x' = \ske_{2k}(T[\dd i]^R)$.
We then use them to compute $\ed_{\le 2k+1} (T[\dd i],T[\dd i]^R)$,
which gives us $\ed_{\le k}(T[\dd i],\PAL)$ from Property~\ref{prop:ed-to-pal}.
By \cref{lemma:edit-sketches}, updating the sketches and using them to compute
the edit distance takes $\Ot(k^2)$ time per character and $\Ot(k^2)$ bits of space.

\subsection{A streaming algorithm for \texorpdfstring{\ksqe}{k-LED-SQ}}\label{sec:ksqe}

\thmsquareedit*

Let $U = T[\dd i]$ be such that $\edd{U}{\SQ} \le k$.
By Property~\ref{prop:sq-ed}, there exist strings $V,W$ such that
$U = VW$ and $\edd{V}{W} \leq k$.
Let $G = \GG(V)$ and $G' = \GG(W)$.
Under the assumptions of Fact~\ref{fact:grammar-decomp},
we have $|G| = |G'| = s$ and $G$ and $G'$ differ at at most $k$ indices,
hence the Hamming distance between $\enc(G)\enc(G')$ and \SQ
is at most $k\mu = \Ot(k^2)$.
Therefore, our idea is to use the algorithm of \cref{th:square_ham} with parameter $K = k\mu$
to detect approximate squares in the stream of encoded grammars,
and use this information to find positions in $i$ in the text
where $\edd{T[\dd i]}{\SQ} \le k$.
We first discuss the reduction to \ksqh,
and then describe how to use it to solve \ksqe.

\subparagraph*{Reduction to \ksqh.}
Two issues arise for the reduction to \ksqh{}: the grammar decomposition
of $U$ is not necessarily equal to the concatenation of $G$ and $G'$,
and adding a character to the text may reduce the size of the  decomposition,
hence the conversion of the stream of characters to a stream of grammars is not trivial.

Let $G'' = \GG(U)$ and $q = |G''|$.
We assume w.l.o.g. that $q \ge 8\tau$ (which implies $s \ge 2\tau$);
otherwise, we store all the grammars in $G''$ explicitly (taking $\Oh(\tau\mu) = \Ot(k)$ bits of space), and do not use the following reduction to \ksqh to find all the differing grammars, but use exhaustive pairwise comparison.

While $G''$ is not necessarily equal to $GG'$, it follows from \cref{cor:small-decomp-change}
that $G''[1\dd s-\tau] = G[1\dd {s-\tau}]$, $G''(q-s-\tau\dd q] = G'({\tau}\dd s]$ and $2s \le q+2\tau$.
In other words, $G''$ is equal to $GG'$ with at most $4\tau = \Ot(1)$ modified, inserted or deleted grammars.
Furthermore, while appending a character to the text may reduce the number of grammars in its decomposition,
right-committed grammars will not change when appending a letter.
Therefore, we can build a monotone stream of grammars by including only right-committed grammars, i.e., grammars whose index has been at $\tau$ positions away from the end of the decomposition at some point.

We decompose $G''$ as follows (see Figure~\ref{fig:ksqe} for an illustration):
\begin{align*}
	 A &= G''[1\dd\tau]
	 &D &= G''(q-s+\tau\dd q-\tau]\\
	 B &= G''(\tau\dd s-\tau]
	 &E &= G''(q-\tau \dd q]\\
	 C &= G''(s-\tau \dd q-s+\tau]&&
\end{align*}
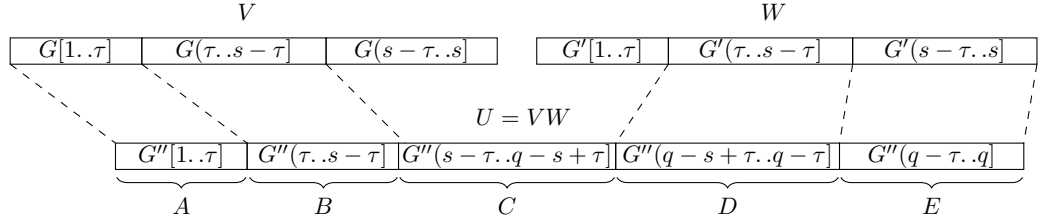
\begin{figure}[htbp]
    \tikzmath{
    \wvar = 5; 
    \wxpos= 20;
    \wypos= 0;
    \uxpos= 4;
    \uypos= 4;
    }
    \centering
    \begin{tikzpicture}[scale=0.35, every node/.style={scale=.9}]
    	\node (v) at (9, 2) {$V$};
        \draw (0,     0) rectangle ++(\wvar, 1) node[midway] { $G[1\dd\tau]$ };
        \draw (\wvar, 0) rectangle ++(\wvar+2, 1) node[midway] { $G(\tau\dd s-\tau]$ };
        \draw (2*\wvar+2, 0) rectangle ++(\wvar+1.5, 1) node[midway] { $G(s-\tau \dd s]$ };

    	\node (w) at (9+\wxpos, 2) {$W$};
        \draw (0+\wxpos,     0-\wypos) rectangle ++(\wvar, 1) node[midway] { $G'[1\dd \tau]$ };
        \draw (\wvar+\wxpos, 0-\wypos) rectangle ++(\wvar+2, 1) node[midway] { $G'(\tau\dd s-\tau]$ };
        \draw (2*\wvar+2+\wxpos, 0-\wypos) rectangle ++(\wvar+2, 1) node[midway] { $G'(s-\tau \dd s]$ };

    	\node (u) at (15.5+\uxpos, 2-\uypos) {$U = VW$};
        \draw (\uxpos,     0-\uypos) rectangle ++(\wvar, 1) node[midway] { $G''[1\dd\tau]$ };
        \draw (\wvar+\uxpos, 0-\uypos) rectangle ++(\wvar+0.75, 1) node[midway] { $G''(\tau \dd s-\tau]$ };
        \draw (2*\wvar+0.75+\uxpos, 0-\uypos) rectangle ++(\wvar+3.25, 1) node[midway] { $G''(s-\tau\dd q-s+\tau]$ };

        \draw (3*\wvar+4+\uxpos, 0-\uypos) rectangle ++(\wvar+3.5, 1) node[midway] { $G''(q-s+\tau\dd q-\tau]$ };
        \draw (4*\wvar+7.5+\uxpos, 0-\uypos) rectangle ++(\wvar+2, 1) node[midway] { $G''(q-\tau \dd q]$ };
       
        \draw [decorate,decoration={brace,amplitude=4pt}] (\uxpos-0.05+\wvar,-\uypos-0.3) -- ++(-\wvar+0.1, 0) node [below,midway,yshift=-5pt] {$A$};
        \draw [decorate,decoration={brace,amplitude=4pt}] (\uxpos-1.3+2*\wvar+2,-\uypos-0.3) -- ++(-\wvar-0.6, 0) node [below,midway,yshift=-5pt] {$B$};
        \draw [decorate,decoration={brace,amplitude=4pt}] (\uxpos-0.05+3*\wvar+4,-\uypos-0.3) -- ++(-\wvar-3.15, 0) node [below,midway,yshift=-5pt] {$C$};
        \draw [decorate,decoration={brace,amplitude=4pt}] (\uxpos-0.05+4*\wvar+7.5,-\uypos-0.3) -- ++(-\wvar-3.4, 0) node [below,midway,yshift=-5pt] {$D$};
        \draw [decorate,decoration={brace,amplitude=4pt}] (\uxpos-0.05+5*\wvar+9.5,-\uypos-0.3) -- ++(-\wvar-1.9, 0) node [below,midway,yshift=-5pt] {$E$};
        \draw[dashed] (0,0)  -- (\uxpos, -\uypos+1);
        \draw[dashed] (\wvar, 0)  -- ++(\uxpos, -\uypos+1);
        \draw[dashed] (2*\wvar+2, 0)  -- ++(\uxpos-1, -\uypos+1);
        \draw[dashed] (\wxpos+\wvar,0)  -- ++(\uxpos-\wvar-1, -\uypos+1);
        \draw[dashed] (\wxpos+2*\wvar+2,0)  -- ++(\uxpos-\wvar+0.5, -\uypos+1);
        \draw[dashed] (\wxpos+3*\wvar+4,0)  -- ++(\uxpos-\wvar+0.5, -\uypos+1);
    \end{tikzpicture}
    \caption{Decomposition of $U = VW$.}
    \label{fig:ksqe}
\end{figure}
By~\cref{cor:small-decomp-change}, $A = G[1\dd\tau]$, $B = G(\tau\dd s-\tau]$, $D=G'(\tau\dd s-\tau]$, and $E = G'(s-\tau\dd s]$. Furthermore, by~\cref{fact:grammar-decomp}, $\hd{B}{D} \le k$, where the Hamming distance between two equal-length sequences of grammars is defined as the number of positions where they differ. $A$ and~$E$ are stored explicitly, which takes $O(\tau \cdot \mu) = \Ot(k)$ space.

Further, for $c \in [1\dd 4\tau]$, define $\hat{A_c}$ as follows:
\[\hat{A_c} =\begin{cases}
	A(\tau-c\dd ], &\text{ if } c\le \tau,\\
	\#^{c-\tau}A, &\text{ otherwise,}	
\end{cases}\]
where $\#$ denotes the dummy (empty) grammar.

\begin{claim}\label{claim:sq-stream-reduction}
	If $\edd{U}{\SQ} \le k$,
	then there exists $c\le 4\tau$ such that
	$\hd{\hat{A_c}B}{CD} \le k+\tau$.
\end{claim}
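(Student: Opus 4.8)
The plan is to unpack the definitions of $A$, $B$, $C$, $D$ in terms of $G = \GG(V)$ and $G' = \GG(W)$ (using \cref{cor:small-decomp-change}) and to align the concatenations $\hat{A_c}B$ and $CD$ as two windows of the \emph{same} underlying sequence of grammars, shifted by a bounded amount. Recall from the setup that $G''[1\dd s-\tau] = G[1\dd s-\tau]$ and $G''(q-s+\tau \dd q] = G'(\tau \dd s]$, and that $2s \le q+2\tau$, so $C = G''(s-\tau\dd q-s+\tau]$ has length $q-2s+2\tau \in [0\dd 4\tau]$. The idea is that $\hat{A_c}B$ is (a prefix-padded version of) $G[1\dd s-\tau]$, while $CD$ is $G''(s-\tau\dd q-s+\tau] \cdot G'(\tau\dd s-\tau]$, which after accounting for the $C$-block is a shift of $G'[1\dd s-\tau]$. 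Since $\hd{B}{D}\le k$ by \cref{fact:grammar-decomp}\eqref{item:k-different}, once the two strings are correctly aligned, the Hamming distance is controlled by $k$ plus the contribution of the at most $\tau$ boundary grammars where the window alignment touches $A$, $C$, or the $\tau$-length seams around $B$ and $D$.

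Concretely, I would proceed as follows. First, fix $c := q - 2s + 2\tau = |C| \in [0\dd 4\tau]$; this is the amount by which the decomposition $G''$ of $U=VW$ is "longer" than the naive concatenation bound would predict, localized in the middle block $C$. With this choice, $\hat{A_c}B$ and $CD$ both have length exactly $s-\tau + c$ (checking lengths: $|\hat{A_c}B| = c + (s-2\tau) + \tau$ when $c\le\tau$, and $(c-\tau)+\tau+(s-2\tau)+\tau$ otherwise, and $|CD| = c + (s-2\tau)$... so I will need to recompute the precise padding to make the two lengths match, possibly adjusting the definition of $\hat{A_c}$'s role — this length-bookkeeping is the one genuinely fiddly point). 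Second, having matched lengths, I would write both strings as length-$(s-\tau+c)$ windows: $\hat{A_c}B = \#^{\max(0,c-\tau)} \cdot G[\max(1,\tau-c+1)\dd s-\tau]$ and $CD = G''(s-\tau\dd q-s+\tau]\cdot G'(\tau\dd s-\tau]$, and then use $G''(s-\tau\dd q-s+\tau]$ being a fragment of the common decomposition together with $D = G'(\tau\dd s-\tau]$ and $B = G(\tau\dd s-\tau]$.

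Third, I would bound the Hamming distance by splitting the coordinate range into: (i) the region corresponding to $B$ vs.\ $D$, contributing at most $\hd{B}{D}\le k$; and (ii) the remaining $O(\tau)$ coordinates — the prefix of $\hat{A_c}B$ coming from $\hat{A_c}$ (length $\le\tau$), the block $C$ (length $\le 4\tau$ but lined up against the $\tau$-seam of $D$ and the tail of $B$), and the two length-$\tau$ seams separating committed from active grammars in $G$ and $G'$. Each such coordinate contributes at most $1$, and a careful count of which of them can simultaneously be "outside the $B$–$D$ match" gives a total surplus of at most $\tau$, yielding $\hd{\hat{A_c}B}{CD}\le k+\tau$.

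The main obstacle will be the length bookkeeping: getting $|\hat{A_c}B| = |CD|$ to hold for the specified $c$, which forces me to verify the exact boundary positions $\tau$, $s-\tau$, $q-s+\tau$, $q-\tau$ against the identities of \cref{cor:small-decomp-change}, and to confirm that the definition of $\hat{A_c}$ (prefix-truncation for $c\le\tau$, dummy-padding for $c>\tau$) exactly compensates the $c=|C|$ extra grammars in the middle. Once the alignment is pinned down, the distance bound is a routine triangle-inequality-style count over $O(\tau)$ boundary coordinates plus the $\hd{B}{D}\le k$ input.
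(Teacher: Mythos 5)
Your approach is the same as the paper's: take $c=|C|$ and split the positional comparison of $\hat{A_c}B$ against $CD$ into the block $\hat{A_c}$ vs.\ $C$ and the block $B$ vs.\ $D$, with the latter contributing $\hd{B}{D}\le k$ by \cref{fact:grammar-decomp}\eqref{item:k-different}. The length bookkeeping you leave open is immediate: by construction $|\hat{A_c}|=c$ in \emph{both} cases of the definition (a length-$c$ suffix of $A$ when $c\le\tau$, and $c-\tau$ dummies plus the length-$\tau$ block $A$ when $c>\tau$), so your extra $+\tau$ in $|\hat{A_c}B|$ is spurious, and since $|B|=|D|=s-2\tau$ the two strings align exactly, giving $\hd{\hat{A_c}B}{CD}=\hd{\hat{A_c}}{C}+\hd{B}{D}\le |C|+k$. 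Note that the first term is bounded by $|C|\le 4\tau$ rather than by $\tau$ in general (your ``surplus of at most $\tau$'' holds only when $c\le\tau$), but this constant is immaterial for the application, which runs the \ksqh{} subroutine with threshold $\Ot(k^2)$.
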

\begin{proof}
	For $c = |C| \le 4\tau$,
	we have $|\hat{A_c}| = c$,
	and $\hd{\hat{A_c}B}{CD} = \hd{\hat{A_c}}{C} + \hd{B}{D} \le \tau+k$.
\end{proof}

For each $c\in [1\dd 4\tau]$, we create a stream $T_c = \hat{A_c}BCD$ from a stream of $ABCD$ (by either dropping the first $\tau-c+1$ values if $c \le \tau$, or by inserting $c-\tau$ dummy grammars at the start otherwise) and apply the algorithm for \ksqh (\cref{th:square_ham}) with
parameter $K = (k+\tau)\mu = \Ot(k^2)$ to it, which returns every position where $\hd{\hat{A_c}B}{CD} \le k+\tau$. We use this information as a filter before testing whether $\edd{U}{\SQ} \le k$.

Finally, notice that $E$ contains $\tau$ grammars, which may be more than
just the right-active grammars, as the size of the decomposition may shrink after appending
a character.
We only want grammars up to index $q-\tau$ to be in the stream of right-committed RLSLPs,
i.e., we need to \textit{remove} grammars from the stream when the size of the decomposition
is reduced.
To circumvent this issue, we simply store the grammars output by the algorithm for each of the $\tau$ latest updates.
When the grammar decomposition shrinks, we simply go back to the corresponding grammars output and proceed from there.
This takes $\tau\cdot \Ot(k^2) = \Ot(k^2)$ bits of space.

\subparagraph*{Computing the edit distance between $V$ and $W$.}
Using \cref{rm:hd-sq-mi} and \cref{prop-of-genc}, we can also retrieve from
the algorithm for \ksqh the list of mismatches between $\genc{\hat{A_c}BCD}$
and the closest square, from which we can extract the mismatching grammars in ${\hat{A_c}B}$ and ${CD}$.
As we store $A$ explicitly, we can assume to know every grammar of $C$.

We would like to compute $d = \ed_{\le k} (V,W)$.
Under the assumption of \cref{fact:grammar-decomp},
we have
\begin{align*}
\edd{V}{W} &= \\
&=\sum_{i=1}^s \edd{\eval(G[i])}{\eval(G'[i])}\\
&=\edd{\eval(G[\dd \tau])}{\eval(G'[\dd \tau])}\\
&=\sum_{j: B[j] \neq D[j]} \edd{\eval(B[j])}{\eval(D[j])} + \edd{ \eval(G(s-\tau \dd ])}{\eval(G'(s-\tau \dd ])}.
\end{align*}
As for any strings $X,X',Y,Y'$ we have $\edd{XY}{X'Y'} \le \edd{X}{X'}+\edd{Y}{Y'}$,
we can combine any two terms of the above sum by concatenating
the corresponding grammars in the $\eval$ function. Therefore,
$$\edd{V}{W} = \edd{\eval(EA)}{\eval(G(s-\tau \dd s]G'[1 \dd \tau])}
		+ \sum_{j: B[j] \neq D[j]} \edd{\eval(B[j])}{\eval(D[j])}.$$ Finally, note that $\eval(G[s-\tau+1\dd s]G'[1\dd \tau]) = \eval(C)$ (see~\cref{fig:ksqe}), and therefore $\edd{V}{W} = \edd{\eval(EA)}{\eval(C)}
		+ \sum_{j: B[j] \neq D[j]} \edd{\eval(B[j])}{\eval(D[j])}$.

Both $\eval(EA)$ and $\eval(C)$ can be represented by an RLSLP of size $\Ot(k)$ obtained by concatenating the RLSLPs $G[s-\tau+1],\ldots, G[s], G'[1],\ldots, G[\tau]$ and the RLSLPs in $C$ respectively. We can now apply the algorithm of Ganesh et al.~\cite{ganesh2022compression} to compute each of the terms in the sum above in $\Ot(k^2)$ time and space.

\subparagraph*{Summary. }
Our algorithm runs $\Ot(1)$ copies of the algorithm of Theorem~\ref{th:square_ham}
in parallel, with distance parameter $K = \Ot(k^2)$: this uses $\Ot(k^2)$ bits of space.
Upon receiving the $i$th character of the text, we use the algorithm of \cref{cor:update-decomp}
to update the decomposition of the text, which takes time $\Ot(k)$.
We add all but the last $\tau$ grammars to the stream of right-committed grammars.
The algorithm of Corollary~\ref{cor:update-decomp} may add
up to $4\tau\lambda = \Ot(1)$ grammars to the decomposition of the text,
hence we commit at most $\Ot(1)$ grammars per character of the text.
Note that we only need to know whether the stream
of right-committed grammars is close to a square after pushing all symbols
in the encoding of the grammars, and do not need the intermediate results.
Therefore, we can use the algorithm of \cref{rm:hd-sq-mi}
to push the at most $4\tau\lambda\mu = \Ot(k)$ symbols
in the encoding of the grammars into the \ksqh subroutines
using $\Ot(k\sqrt{K} + K) =\Ot(k^2)$ time.
After adding all the new right-committed grammars, if any copy of the \ksqh algorithm signals that the current stream of grammars is within distance $K$ of \SQ,
we run the above algorithm to compute the distance between $T[\dd i]$ and \SQ, which costs $\Ot(k^{2})$ time and space.

Finally, the algorithm explicitly stores the RLSLPs in $A$, the last $\tau$ grammars
and the last $\tau$ output of each \ksqh subroutine,
which uses $\tau\mu + \tau\mu + 2\tau\mu + \tau\Ot(k^2) = \Ot(k^2)$ bits of space in total.

Overall, our algorithm uses $\Ot(k^2)$ time per character and $\Ot(k^2)$ bits of space.

\section{Read-only algorithms for the LED problems}\label{sec:ro-edit}
In this section, we present read-only algorithms for \kpale and \ksqe.

\subsection{Structure of \texorpdfstring{$k$}{k}-error occurrences}
\subparagraph*{Approximate periodicity.}
\begin{definition}[{\cite{charalampopoulos2020faster}}]\label{def:k-error-periodic}
A string $U$ is \textit{$d$-error periodic} if there exists a \textit{primitive} string $Q$ such that $|Q| \leq |U|/128d$ and $\edd{U}{Q^\infty}\le 2d$. Such a string $Q$ is called the \emph{$d$-error period} of $U$.\footnote{The original definition of~\cite{charalampopoulos2020faster} considered the distance between $U$ and a substring of $Q^\infty$. Changing it does not break the structure of $k$-error occurrences, but is important for the read-only algorithms for \kpale and \ksqe.}
\end{definition}

Similarly to the Hamming distance, the condition $|Q| \leq |U|/128d$ implies that if $U$ is $d$-error periodic with $d$-error period $Q$, then $Q$ is equal to some substring of $U$. Furthermore, we exploit the following properties:

\begin{fact}[{\cite{kociumaka2022small}}]\label{fact:prefix-equal-per}
Suppose that a string $X$ is a prefix of a string $Y$, where $|X| < |Y| \le 2|X|$. If $X$ is $k$-error periodic with $k$-error period $Q$, then either $Y$ is not $k$-error periodic, or $Y$ is $k$-error periodic with $k$-error period $Q$.
\end{fact}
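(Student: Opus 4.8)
The plan is to mirror the proof of the Hamming-distance counterpart (\cref{claim:same-period}), replacing mismatches by edits throughout. If $Y$ is not $k$-error periodic, there is nothing to prove, so assume that $Y$ is $k$-error periodic with $k$-error period $Q'$: a primitive string with $|Q'|\le |Y|/128k$ and $\edd{Y}{Q'^\infty}\le 2k$. Since $|X|<|Y|\le 2|X|$, we have $|Q'|\le |X|/64k$ and $|Q|\le |X|/128k$. It suffices to prove that $Q=Q'$: then $Q$ is primitive, $|Q|\le |X|/128k<|Y|/128k$, and $\edd{Y}{Q^\infty}=\edd{Y}{Q'^\infty}\le 2k$, so $Q$ is a $k$-error period of $Y$.

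First I would transfer the approximate periodicity of $Y$ to its prefix $X$. Fix an optimal alignment $\caA=(u_t,v_t)_{t=0}^m$ of $Y$ onto the length-$|Y|$ prefix of $Q'^\infty$, of cost at most $2k$, and let $t^\star$ be the largest index with $u_{t^\star}\le |X|$. Restricting $\caA$ to $[0\dd t^\star]$ gives an alignment of $X=Y[\dd|X|]$ onto the length-$v_{t^\star}$ prefix of $Q'^\infty$ of cost at most $2k$, where moreover $\bigl||X|-v_{t^\star}\bigr|$ is at most the number of edits of $\caA$, hence at most $2k$. Since $\edd{Q'^\infty[\dd v_{t^\star}]}{Q'^\infty[\dd|X|]}\le \bigl||X|-v_{t^\star}\bigr|$ and $\edd{X}{Q'^\infty}=\edd{X}{Q'^\infty[\dd |X|]}$ by definition, the triangle inequality yields $\edd{X}{Q'^\infty}=O(k)$. (The constant here is slightly worse than $2k$; this is harmless and is absorbed, as in \cite{kociumaka2022small}, into the $128d$-slack built into \cref{def:k-error-periodic}.)

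Now $X$ lies within $O(k)$ edits of both $Q^\infty$ and $Q'^\infty$, while $|Q|,|Q'|\le |X|/64k$. By the triangle inequality the length-$|X|$ prefixes of $Q^\infty$ and $Q'^\infty$ are within $O(k)$ edits of each other, so an approximate-periodicity (approximate Fine--Wilf) argument of the kind established in \cite{charalampopoulos2020faster,kociumaka2022small} forces $Q$ and $Q'$ to be cyclic rotations of one another; in particular $|Q|=|Q'|$. To upgrade ``cyclic rotation'' to equality, I would use that a $k$-error period is realised as a concrete substring of the string it is a period of (the substring guaranteed because $|Q|\le |U|/128d$); since $X$ is a prefix of $Y$, the canonical occurrences of $Q$ in $X$ and of $Q'$ in $Y$ fix the same phase of the periodic pattern, giving $Q=Q'$.

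The routine parts are the restriction-of-alignment step and tracking the constants. The main obstacle is the approximate-periodicity uniqueness step together with the phase-matching that turns ``cyclic rotation'' into genuine equality $Q=Q'$; this is precisely where the footnote to \cref{def:k-error-periodic} is needed (using $Q^\infty$, not an arbitrary substring of $Q^\infty$, as the reference string), and I would invoke the corresponding lemma from \cite{charalampopoulos2020faster,kociumaka2022small}. Verifying that the alignment of $Y$ onto $Q'^\infty$ can be re-phased onto $Q^\infty$ at an extra cost absorbed into the $2d$ budget is the crux of that step.
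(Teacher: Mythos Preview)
The paper does not supply a proof of this fact; it is imported verbatim from \cite{kociumaka2022small}, just as the Hamming-distance analogue \cref{claim:same-period} is. So there is no in-paper argument to compare your proposal against, and the honest comparison is that both you and the paper ultimately defer to the cited source for the delicate step.

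On the substance of your sketch: the restriction-of-alignment step and the approximate Fine--Wilf step are fine and yield that any $k$-error period $Q'$ of $Y$ is a cyclic rotation of $Q$. The gap is your final ``phase-matching'' step. Under \cref{def:k-error-periodic} the $k$-error period is \emph{not} unique, so you cannot hope to prove $Q=Q'$ for an arbitrary $k$-error period $Q'$ of $Y$: for instance, with $U=(ab)^n$ (and $n\ge 128$) and $k=1$, both $ab$ and $ba$ qualify, since $\edd{(ab)^n}{(ba)^n}=2$. There is also no ``canonical occurrence'' in the definition that would pin down a common phase for $Q$ and $Q'$. What you must actually establish is $\edd{Y}{Q^\infty}\le 2k$ directly. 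Your route through the triangle inequality gives only $\edd{Y}{Q^\infty}\le \edd{Y}{Q'^\infty}+2\min(s,|Q|-s)$ with $Q'=\rot^s(Q)$; the shift term is $O(k)$ but need not vanish, so the sharp threshold $2k$---which the downstream algorithms rely on verbatim---is not recovered. Closing this requires the finer argument of \cite{kociumaka2022small} rather than a black-box triangle inequality; your proposal already identifies this as the crux and defers to that source, which is precisely what the present paper does as well.
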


\begin{restatable}{lemma}{errorperiodalgo}\label{lemma:compute-edit-period}
    Given random access to a string $U$,
    testing whether it is $d$-error periodic, and in the relevant case computing its $d$-error period,
    can be done using $O(|U|d^2)$ time and~$O(d)$ space.
\end{restatable}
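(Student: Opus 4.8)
The plan is to mirror the argument behind the $d$-mismatch analogue (\cref{lemma:compute-ham-period}), replacing each Hamming-distance primitive with its edit-distance counterpart; the extra factor of $d$ appears because a bounded edit-distance computation on strings of length $\ell$ costs $\Oh(\ell d)$ (Landau--Vishkin, with longest-common-extension queries answered by naive scanning in $\Oh(d)$ space), whereas the corresponding Hamming computation costs $\Oh(\ell)$. Write $m=|U|$. We may assume $m\ge 256d$, since otherwise \cref{def:k-error-periodic} forces $|Q|=1$, and we simply test $\edk{U}{U[1]^\infty}{2d}$ directly in $\Oh(md)=\Oh(d^2)$ time and $\Oh(d)$ space. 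Recall that if $U$ is $d$-error periodic then its $d$-error period $Q$ satisfies $|Q|\le m/128d$, so $Q$ occurs as a fragment of $U$; given the endpoints of one occurrence we can simulate constant-time random access to $Q$, and hence to $Q^\infty$.

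First I would establish two structural facts. \emph{Uniqueness}: if $Q_1,Q_2$ both satisfy \cref{def:k-error-periodic} for $U$, then $\edd{Q_1^\infty[\dd m]}{Q_2^\infty[\dd m]}\le 4d$; since each $Q_i^\infty[\dd m]$ is a concatenation of at least $100d$ copies of $Q_i$, there is a matched window of length at least $|Q_1|+|Q_2|$ on which $Q_1^\infty$ and $Q_2^\infty$ coincide, so Fine--Wilf together with the common phase forces $Q_1=Q_2$. \emph{A long power inside a prefix}: restricting an optimal (cost $\le 2d$) alignment of $U$ onto $Q^\infty$ to the prefix $V:=U[\dd\lceil m/2\rceil]$ matches $V$ against a prefix of $Q^\infty$ consisting of at least $60d$ full copies of $Q$, so the at most $2d$ edits leave a block of at least ten consecutive matched copies; after trimming at most two copies to a period boundary, $V$ contains $Q^{j}$ as an exact substring for some $j\ge 3$. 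Consequently, it suffices to (a) produce a short list of candidate primitive strings, each a fragment of $U$ of length at most $m/128d$, guaranteed to contain $Q$ whenever $U$ is $d$-error periodic, and then (b) verify the candidates.

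For step (b), verification is routine: for a candidate $Q$ we compute $\edk{U}{Q^\infty}{2d}$ by the Landau--Vishkin algorithm, answering its longest-common-extension queries by naive scanning; on each of the $\Oh(d)$ diagonals the furthest-reaching column only advances, so the total scanning work is $\Oh(md)$ and the working space is $\Oh(d)$. We return the first candidate that passes (by uniqueness it is the $d$-error period) and otherwise report that $U$ is not $d$-error periodic. Step (a) is the main obstacle: the length of $Q$ is unknown, so no single exact-matching call pins it down. Here I would transplant the period-finding routine underlying \cref{lemma:compute-ham-period}, which carries out $\Oh(d)$ PILLAR-type steps on fragments of $U$: each step that is a Hamming comparison of cost $\Oh(m)$ (a kangaroo extension, or a $k$-mismatch $\mathsf{IPM}$ call) is replaced by the corresponding bounded-edit-distance computation of cost $\Oh(md)$, which produces $\Oh(d)$ candidates in $\Oh(md^2)$ time and $\Oh(d)$ space. (A more hands-on alternative is to sweep the $\Oh(\log m)$ dyadic length scales $q_0\le m/128d$: for each scale, $\Oh(m/q_0)$ windows of width $\Theta(q_0)$ and $\Oh(1)$ exact $\mathsf{IPM}$/$\lcp$/$\lcp^R$ calls per window detect and extract a period-$\Theta(q_0)$ run of $V$ using the read-only PILLAR implementations recalled before \cref{lemma:compute-ham-period}; this costs $\Oh(m\log m)$ time and yields $\Oh(\log m)$ candidates, which is enough by the second fact but trades one factor of $d$ for a logarithm.) Combining (a) and (b) gives the claimed $\Oh(md^2)$ time and $\Oh(d)$ space, the search in step (a) — locating the period without knowing its length — being the crux of the argument.
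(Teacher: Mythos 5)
Your overall architecture matches the paper's: generate $\Oh(d)$ candidate periods, each given as a fragment of $U$, then verify every candidate $Q$ by computing $\edk{U}{Q^\infty}{2d}$ with the bounded edit-distance DP in $\Oh(|U|d)$ time and $\Oh(d)$ space; your observation that a long \emph{exact} power of $Q$ must survive the $\le 2d$ edits is also the right structural germ. The genuine gap is that you never carry out step~(a), which you yourself call the crux. Deferring to ``the period-finding routine underlying \cref{lemma:compute-ham-period}'' is not a proof: that statement is imported as a black box, its internals are not reproduced here, and ``replace each Hamming primitive by its edit counterpart'' is not a mechanical substitution (kangaroo extensions and the arithmetic-progression structure of $\mathsf{IPM}$ outputs do not survive insertions and deletions unchanged). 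The paper's candidate generation is instead elementary and self-contained: partition $U$ into $4d$ blocks $S_1,\dots,S_{4d}$ of length $|U|/(4d)$; at most $2d$ blocks are touched by the $\le 2d$ edits of an optimal alignment of $U$ onto $Q^\infty$, so some untouched block is an exact substring of $Q^\infty$ of length at least $32|Q|$ and hence exactly periodic with period $|Q|$; Rytter's algorithm extracts each block's exact period $Q_i'$ in linear time and logarithmic space, yielding $\Oh(d)$ candidates at fixed, known positions (the first $2d+1$ periodic blocks suffice).

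A second point you gloss over is the phase. \Cref{def:k-error-periodic} anchors $Q^\infty$ at position $1$ of $U$, so the candidate handed to your verification step must be the correct cyclic \emph{rotation} of the block's period, not merely the right string up to rotation; otherwise step~(b) rejects a true $d$-error period. The paper fixes the phase explicitly: for a periodic block $S_i$ ending at position $p_i$, it extends $S_i$ to the left by copies of $Q_i'$ and finds, via a bounded edit-distance alignment of $U[\dd p_i]$ against the extension (cost $\Oh(|U|d)$ per block), the offset $j$ that determines the rotation $Q_i=\rot^{j}(Q_i')$ to be tested. Your fallback dyadic-scale sweep has the same unresolved phase issue and, in addition, produces $\Oh(\log |U|)$ candidates, so verification alone costs $\Oh(|U|d\log |U|)$, which exceeds the claimed $\Oh(|U|d^2)$ bound whenever $d=o(\log |U|)$.
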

\begin{proof}
    Consider a partition of $U$ into $4d$ substrings $S_1,\ldots, S_{4d}$
    of equal length $|U|/{4d}$. At most $2d$ of them can contain an edit operation,
    hence the others must be equal to a substring of $Q^\infty$,
    and as $|Q| \le |U|/128d = |S_i|/32$, these $S_i$ are (exactly) periodic.

    This observation leads to the following algorithm:
    For $i \in [1\dd 4d]$, we test whether $S_i$ is (exactly) periodic:
    if it is, we compute its period $Q_i'$; otherwise, we go to the next $i$.
    Let $p_i$ denote the ending position of $S_i$ in $U$.
    Otherwise, we construct the string $S_i'$ by extending $S_i$ to the left
    with copies of $Q_i'$,
    and test whether there exists an index $j$ such that the edit distance
    between $U[\dd p_i]$ and $S_i'[j\dd ]$ is at most $2d$.
    Let $Q_i$ be the forward cyclic rotation of $Q_i'$ by $j \mod |Q_i'|$ positions:
    it is a candidate for the $d$-error period of $U$.
    We then test whether the edit distance between $U$ and $(Q_i')^\infty$
    is at most $2d$: in the affirmative case, $Q_i$ is a $d$-error period of $U$;
    otherwise, we go to the next $i$.
    If $U$ is $d$-error periodic, at least one of the first $2d+1$ tests
    must return the $d$-error period.

    Computing the exact period of a string $R$ can be done in linear time and logarithmic space,
    using the algorithm of Rytter~\cite{rytter2003maximal} to find the first occurrence of $R$ in $R[2\dd ]R$.
    Computing the index that minimizes the edit distance upper bounded by $d$
    can be done in $O(|U|d)$ time and~$O(d)$ space using
    the classical dynamic programming algorithm for the edit distance~\cite{ukkonen1983approximate}.
    We process $O(d)$ substrings $S_i$ sequentially,
    hence our algorithm takes $O(|U|d^2)$ time and~$O(d)$ space.
\end{proof}

\subparagraph*{Structure of approximate occurrences.}

\begin{definition}[Chain of $k$-error occurrences]\label{def:ed-chain}
 A sequence $\Cc = p_1 < \cdots < p_\ell$ of~$k$-error occurrences of $P$ in $T$ forms a \textit{chain} if the following two conditions are satisfied:
\begin{enumerate}
\item There exists an integer $q$ such that $p_1, \ldots, p_\ell$ is an arithmetic progression with difference $q$;
\item There exists an integer $k'\le k$ such that for every $p_i$, $\min_{j \le p_i}$ we have $\edd{P}{T[j\dd p_i]} = k'$.
\end{enumerate}  
\end{definition}

\begin{fact}[{From~\cite[Theorem 5.1, Claim 5.16, Claim 5.17]{charalampopoulos2020faster}, see also~\cite[Corollary III.5]{kociumaka2022small}}]\label{fact:ed-k-occ}
    Let $P, T$ be two strings such that $|T| \le 3/2|P|$ and
    $T$ starts and ends with a $k$-error occurrence of $P$. Then one of the following holds:
    \begin{enumerate}
        \item Either there are~$O(k^2)$ $k$-error occurrences of $P$ in $T$, or
        \item There is a primitive string $Q$ such that $P$ is $2k$-error periodic with $2k$-error period $Q$, $\edd{T}{Q^\infty} \leq 6k$, and the occurrences of $P$ in $T$ can be decomposed into $O(k^3)$ chains. For each chain, its difference equals $|Q|$ and the first position in it is within distance $10k$ from a multiple of $|Q|$.
    \end{enumerate}
\end{fact}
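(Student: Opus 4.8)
The claimed dichotomy is the edit-distance counterpart of \cref{fact:ham-k-occ}, and it is essentially a repackaging of the structural results of Charalampopoulos, Kociumaka, and Wellnitz~\cite{charalampopoulos2020faster} (with the refinement of~\cite{kociumaka2022small}); the plan is to reconstruct their argument via the same case split. First I would decide whether $P$ is $2k$-error periodic in the sense of \cref{def:k-error-periodic}. If not, I would bound the number of $k$-error occurrences of $P$ in $T$ by $O(k^2)$; if so, I would propagate the approximate period from $P$ to $T$ and then show that every occurrence must be aligned with that period, so the occurrences fall into $O(k^3)$ chains.

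\emph{Aperiodic case.} Suppose $P$ is not $2k$-error periodic. The key tool is the edit-distance analogue of the weak periodicity lemma: if $T[j_1\dd p_1]$ and $T[j_2\dd p_2]$ are two $k$-error occurrences of $P$ with $p_1 < p_2$ and $p_2 - p_1 \le |P|/(256k)$, then their overlap fragment has length $|P| - O(|P|/k)$, and splicing the two optimal alignments at a common position in the overlap exhibits a prefix and a suffix of $P$, each of length $|P| - O(k)$, at edit distance $O(k)$ from one another; taking $Q$ to be the primitive root of the resulting approximate period (of length at most $p_2 - p_1 + O(k)$) contradicts the failure of $2k$-error periodicity. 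A more careful analysis along these lines, stratifying the occurrences by their cost $k' = \min_{j\le p}\edd{P}{T[j\dd p]} \in [0\dd k]$ (which is what produces the extra factor of $k$ relative to the Hamming bound of \cref{fact:ham-k-occ}), shows that consecutive occurrences of each fixed cost are $\Omega(|P|/k)$ apart, so within a window of length $|T|\le\tfrac32|P|$ there are $O(k)$ occurrences per cost class and $O(k^2)$ in total; this is case~(1).

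\emph{Periodic case.} Suppose $P$ is $2k$-error periodic with $2k$-error period $Q$, so $\edd{P}{Q^\infty}\le 4k$. Since $T$ starts and ends with a $k$-error occurrence of $P$ and $|T|\le\tfrac32|P|<2|P|$, those two near-copies of $P$ overlap and cover all of $T$; applying \cref{fact:prefix-equal-per} (to transport the approximate period to the $P$-prefix of $T$, and then to $T$) together with a triangle-inequality computation over this covering yields $\edd{T}{Q^\infty}\le 6k$. Fixing an optimal alignment of $T$ onto $Q^\infty$, away from its $\le 6k$ edit positions $T$ reads like a fixed cyclic rotation of $Q$; hence, for any $k$-error occurrence $T[j\dd p]$ of $P$, the requirement that $T[j\dd p]$ be at edit distance $O(k)$ from $Q^\infty$ both ``through $T$'' and ``through $P$'' pins the endpoints $j$ and $p$ to within $O(k)$ of multiples of $|Q|$. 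Grouping the occurrences by cost and by this phase information then yields $O(k^3)$ chains in the sense of \cref{def:ed-chain}, each with common difference exactly $|Q|$ and first element within $10k$ of a multiple of $|Q|$; this is the content of~\cite[Claims 5.16, 5.17]{charalampopoulos2020faster} and is case~(2).

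\emph{Main obstacle.} The delicate part is the periodic case: proving that the occurrences decompose into only $O(k^3)$ \emph{exact} arithmetic progressions of difference $|Q|$. This needs a careful comparison of an arbitrary optimal alignment witnessing $\edd{P}{T[j\dd p]}\le k$ with the (essentially unique up to $O(k)$ choices) periodic alignment of $T[j\dd p]$ onto $Q^\infty$, the observation that translating an occurrence by $|Q|$ preserves being a $k$-error occurrence of the same cost, and an argument ruling out two same-cost, same-phase occurrences at a non-multiple-of-$|Q|$ distance. I would import the polynomial bound on the number of chains essentially verbatim from~\cite{charalampopoulos2020faster,kociumaka2022small} rather than reprove it, and reuse \cref{lemma:compute-edit-period} for the algorithmic side (computing $Q$).
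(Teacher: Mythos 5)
This statement is a black-box \emph{Fact} imported from~\cite{charalampopoulos2020faster} and~\cite{kociumaka2022small}; the paper offers no proof of its own beyond the citation. Your reconstruction follows the same aperiodic/periodic dichotomy underlying those works and, exactly like the paper, ultimately defers the hard quantitative steps (the $O(k^2)$ bound in the aperiodic case and the $O(k^3)$-chain decomposition in the periodic case) to the cited Theorem 5.1 and Claims 5.16--5.17, so it matches the paper's treatment; just be aware that some of your intermediate heuristics (e.g., taking a ``primitive root of an approximate period'' and the claim that same-cost occurrences are $\Omega(|P|/k)$ apart) are informal stand-ins for the more careful break/repetitive-region analysis actually carried out in the cited proof.
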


\subsection{Online deterministic read-only algorithm for finding \texorpdfstring{$k$}{k}-error occurrences}

\begin{fact}[\cite{landau1988fast}]\label{fact:LV}
Given two strings $U,V$. There is a data structure that can be built using~$O(k^2)$ \lcp queries and occupies $O(k^2)$ space and allows retrieving, for any two prefixes $U',V'$ of $U,V$ respectively, $\ed_{\le k}(U',V')$ in $O(k)$ time.
\end{fact}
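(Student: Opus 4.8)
The plan is to build the classical Landau--Vishkin ``furthest‑reaching'' table over the two parameters \emph{number of errors} and \emph{diagonal}, and to read off each queried distance from it. Index the edit‑distance grid of $U$ and $V$ by pairs $(a,b)$ with the meaning $D[a][b]=\edd{U[\dd a]}{V[\dd b]}$, and call $h=a-b$ the diagonal of $(a,b)$. For $e\in[0\dd k]$ and $h\in[-k\dd k]$ set
\[
\mathcal{L}[e][h] \;=\; \max\{\, a : D[a][a-h]\le e \,\},
\]
with $\mathcal{L}[e][h]=-\infty$ if the set is empty (in particular whenever $|h|>e$, since the two prefixes then differ in length by more than $e$). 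So $\mathcal{L}[e][h]$ records how far along diagonal $h$ one can reach using at most $e$ edits, and the whole table has $(k+1)(2k+1)=\Oh(k^2)$ entries.

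First I would establish the recurrence. Writing $\mathrm{extend}(h,a)=a+\lcp(U[a+1\dd|U|],\,V[a-h+1\dd|V|])$ when $\max(0,h)\le a\le\min(|U|,|V|+h)$ and $\mathrm{extend}(h,a)=-\infty$ otherwise, one has $\mathcal{L}[0][0]=\mathrm{extend}(0,0)=\lcp(U,V)$ and, for $e\ge1$,
\[
\mathcal{L}[e][h] \;=\; \mathrm{extend}\!\left(h,\ \max\{\,\mathcal{L}[e-1][h-1]+1,\ \mathcal{L}[e-1][h+1],\ \mathcal{L}[e-1][h]+1\,\}\right),
\]
where $-\infty$ predecessors are simply dropped from the maximum. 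The three terms are the last edit on an optimal path being a deletion from $U$ (arriving from diagonal $h-1$), an insertion into $U$ (from diagonal $h+1$), or a substitution (staying on diagonal $h$); after performing that edit one slides down matched characters, which is exactly one $\lcp$ query between a suffix of $U$ and a suffix of $V$. Filling the table in increasing order of $e$ thus issues one $\lcp$ query per cell, i.e.\ $\Oh(k^2)$ queries, and uses $\Oh(k^2)$ further time and $\Oh(k^2)$ space to store the values.

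Correctness rests on a monotonicity lemma: on each diagonal $h$, the map $a\mapsto D[a][a-h]$ is non‑decreasing. I would prove this by inspecting the last move of an optimal alignment of $U[\dd a{+}1]$ onto $V[\dd a{+}1{-}h]$, of cost $d$: if it aligns $U[a{+}1]$ with $V[a{+}1{-}h]$, removing that move gives an alignment of $U[\dd a]$ onto $V[\dd a{-}h]$ of cost $\le d$; if it deletes $U[a{+}1]$ (resp.\ inserts $V[a{+}1{-}h]$), removing it leaves an alignment of $U[\dd a]$ onto $V[\dd a{+}1{-}h]$ (resp.\ of $U[\dd a{+}1]$ onto $V[\dd a{-}h]$) of cost $\le d-1$, and deleting one extra character on the appropriate side costs one more, giving $\le d$. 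Hence $\mathcal{L}[e][h]\ge i$ if and only if $\edd{U[\dd i]}{V[\dd i-h]}\le e$. To answer a query for prefixes $U'=U[\dd i]$ and $V'=V[\dd j]$, put $h=i-j$: if $|h|>k$ output $\infty$ (the length difference already exceeds $k$); otherwise scan $e=0,1,\dots,k$ and output the first $e$ with $\mathcal{L}[e][h]\ge i$, or $\infty$ if there is none. By the lemma this equals $\edk{U'}{V'}{k}$, and the scan costs $\Oh(k)$ time.

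I expect the only real difficulty to be bookkeeping: pinning down the offsets in the recurrence, fixing the diagonal orientation, and handling boundary cells (empty prefixes, cells with $|h|$ close to $e$, and reaches that would run off the ends of $U$ or $V$, which must be clamped). Once the monotonicity lemma is in place the rest is mechanical, so I would prove that lemma first and let it drive the remaining verifications.
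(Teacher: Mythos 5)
Your proposal is correct: it is precisely the classical Landau--Vishkin ``furthest reaching point on each diagonal'' construction, with the right recurrence, the diagonal-monotonicity lemma needed to read off $\ed_{\le k}(U',V')$ from the table, and the $O(k)$-time query by scanning $e=0,\dots,k$. The paper states this fact without proof, citing Landau and Vishkin directly, and your reconstruction matches that source's argument.
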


To be able to utilise this fact, we show how to answer $\lcp$ queries in $k$-error periodic strings:
\begin{lemma}\label{lm:lcp_ed}
Consider two strings $U,V, Q \in \Sigma^*$. Assume that $Q$ is a primitive string and that there exist $O(k)$-length edit sequences $\ES_U, \ES_V$ between $U$ and $Q^\infty[1 \dd |U|]$ and $V$ and $Q^\infty[1 \dd |V|]$, respectively. There is a read-only algorithm that receives as an input $U,V$ and $\ES_U, \ES_V$ and computes $\lcp(U[i\dd ],V[j\dd ])$ in time $O(k|Q|)$ and space $\Ot(1)$.
\end{lemma}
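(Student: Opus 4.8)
The plan is to exploit that, by hypothesis, each of $U$ and $V$ is exactly $Q$-periodic except for $O(k)$ positions, and that two exactly $Q$-periodic strings either agree on their whole common prefix or first disagree within $|Q|$ characters, because $Q$ is primitive. Concretely: from $\ES_U$ one reconstructs the underlying alignment of $U$ onto $Q^\infty[1\dd|U|]$; since $|\ES_U| = O(k)$, this alignment has $O(k)$ non-matching operations, so it partitions $[1\dd|U|]$ into $O(k)$ maximal \emph{periodic runs} --- intervals $[l\dd r]$ on which $U[p] = Q^\infty[p+c]$ for a fixed integer \emph{offset} $c$ --- separated by $O(k)$ \emph{special positions} at which $U[p]$ may be arbitrary (but is still available via random access to $U$). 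I would not materialise this list: instead I scan $\ES_U$ in a single left-to-right pass, maintaining in $\Ot(1)$ space the current position in $U$, whether it currently lies in a periodic run together with that run's offset taken mod $|Q|$, and the end of the current run or special position. The same is done for $V$ and $\ES_V$. Random access to $Q$ is available since $Q$ is a fragment of the input, and restricting attention to the suffixes $U[i\dd]$ and $V[j\dd]$ only affects which run (or special position) is processed first.

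Then I would sweep both suffixes from the left in lockstep, maintaining the length $\ell$ matched so far. At each step the two structures induce a \emph{segment}: the maximal range, starting at the current column, on which neither the $U$-side structure nor the $V$-side structure changes --- i.e., up to the next run boundary or special position on either side, or the end of one of the suffixes. Since $U[i\dd]$ and $V[j\dd]$ contribute $O(k)$ pieces each, their common refinement has $O(k)$ segments. A segment is handled as follows before advancing $\ell$: if it reaches the end of the shorter suffix with no mismatch yet, return $\ell$ plus the segment length; if either side is a special position, compare the two characters by random access and return $\ell$ on a mismatch; and if both sides lie in periodic runs with offsets $c_U, c_V$, then either $c_U\equiv c_V\pmod{|Q|}$, in which case the two sides coincide on the whole segment (both equal the corresponding fragment of $Q^\infty$, using $Q^\infty[x]=Q^\infty[y]$ whenever $|Q|\mid x-y$), or $c_U\not\equiv c_V\pmod{|Q|}$, in which case $\rot^{c_U}(Q)\ne\rot^{c_V}(Q)$ by primitivity, so the two shifts of $Q^\infty$ first disagree at some offset $m<|Q|$, found by scanning $O(|Q|)$ characters of $Q$; if $m$ is below the segment length, return $\ell+m$, otherwise continue.

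For the complexity: the two passes over $\ES_U,\ES_V$ and the $O(k)$ segments take $O(k)$ time in total, and the only expensive step is the primitive-rotation mismatch search, costing $O(|Q|)$ per periodic-vs-periodic segment, of which there are $O(k)$; hence $O(k|Q|)$ time overall. The working space is $\Ot(1)$: a constant number of $O(\log n)$-bit counters plus the $O(\log|Q|)$-bit loop variable of the rotation scan. I expect the main obstacle to be the first part rather than the sweep: reconstructing the periodic-run/special-position structure directly from a raw edit sequence in a streaming, $\Ot(1)$-space fashion, and being careful about boundary effects --- how insertions and deletions shift the offset $c$ between consecutive periodic runs, and how starting at positions $i$ and $j$ interacts with a run that is split by the suffix boundary. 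The periodicity facts themselves ($|Q|$-periodicity of $Q^\infty$, and a guaranteed mismatch within $|Q|$ positions for distinct offsets when $Q$ is primitive) are routine and pose no difficulty.
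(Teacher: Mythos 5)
Your proposal is correct and matches the paper's proof in all essentials: both decompose the suffixes into $O(k)$ stretches that are either edited positions (compared directly) or unedited stretches aligned with shifted copies of $Q^\infty$, where equal shifts modulo $|Q|$ give agreement on the whole stretch and distinct shifts force a mismatch within $|Q|$ characters by primitivity, yielding $O(k)$ comparisons of cost $O(|Q|)$ each. The paper phrases this as a recursion on the leftmost edited character rather than your lockstep sweep over the segment refinement, but the two are equivalent, including the offset bookkeeping via the image length of each prefix under the edit sequence.
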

\begin{proof}
We consider two cases: (1) Both $U[i]$ and $V[j]$ are unedited by the edit sequences; (2) Either $U[i]$ is edited by $\ES_U$ or $V[j]$ is edited by $\ES_V$.

In the first case, let $i'$ (resp., $j'$) be the length of $U[1\dd i]$ (resp., $V[1\dd j]$) after we apply $\ES_U$ (resp., $\ES_V$) to it. We perform an \lcp query on $Q^\infty[i'\dd]$ and $Q^\infty[j'\dd]$. If $i' = j' \pmod{|Q|}$, the answer is $+\infty$. Otherwise, we compute the answer in a naive way, comparing $Q^\infty[i'\dd]$ and $Q^\infty[j'\dd]$ character-by-character. As $Q$ is primitive, we will find a mismatch after performing $O(|Q|)$ comparisons. Let $\ell = \lcp(Q^\infty[i'\dd], Q^\infty[j'\dd])$. If $U[i \dd \min\{i+\ell-1,|U|\}]$ and $V[j \dd \min\{j+\ell-1,|V|\}]$ are unedited by $\ES_U$ and $\ES_V$, then $\lcp(U[i \dd], V[j\dd]) = \min\{\ell, |U|-i+1, |V|-j+1\}$. Otherwise, if $t_U$ and $t_V$ are the leftmost edited characters in the two strings and $t = \min\{t_U, t_V\}$, we output $t-1+\lcp(U[i+t\dd], V[j+t\dd])$.

Consider now the second case. Assume first that $U[i]$ is deleted or substituted. If $U[i] = V[j]$, we output $1+\lcp(U[i+1\dd], V[j+1\dd])$, and otherwise $0$. The subcase when~$V[j]$ is deleted or substituted is symmetrical. As the parameters $i$ and $j$ are increasing, this case can occur $O(k)$ times.

    An \lcp query in $Q^\infty$ (the first case) is either the last step of the algorithm
    or is followed by a call to the first case of the algorithm.
    As the first case happens $O(k)$ times, the algorithm takes $O(k |Q|)$ time.
\end{proof}

\begin{restatable}{lemma}{onlinekedit}\label{lemma:ro-ke-algo}
There is a deterministic online read-only algorithm that finds all $k$-error occurrences of a length-$m$ pattern $P$ in a text $T$ using $\Ot(k^{4})$ bits of space and $\Ot(k^{4})$ amortised time per character.
\end{restatable}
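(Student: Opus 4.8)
The plan is to follow the read-only $k$-mismatch algorithm behind \cref{thm:ro-km-algo}, replacing every Hamming-distance ingredient with its edit-distance analogue: \cref{def:k-error-periodic}, \cref{fact:prefix-equal-per} and \cref{lemma:compute-edit-period} for approximate periodicity, \cref{fact:ed-k-occ} for the structure of $k$-error occurrences, and \cref{fact:LV} together with \cref{lm:lcp_ed} for extending occurrences inside approximately periodic regions. As before, I would maintain a family $\Pp=(P_j)_{j=1}^{t}$ of prefixes of $P$: take $R_j=P[\dd \min\{m,\lfloor (3/2)^j\rfloor\}]$ and, whenever $R_j$ is $2k$-error periodic but $R_{j+1}$ is not, also insert the shortest extension of $R_j$ that is not $2k$-error periodic. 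By \cref{fact:prefix-equal-per} (applied with parameter $2k$) this yields $t=O(\log m)$ prefixes with $\ell_{j+1}\le \frac32\ell_j$ such that, for each $2k$-error periodic $P_j$ with period $Q_j$, $\edd{P_{j+1}}{Q_j^\infty}=O(k)$ -- the edit-distance counterpart of \cref{claim:ro-sq-pref}. In the preprocessing phase I would, for each such $P_j$, compute $Q_j$ (as a fragment of $P_j$, via \cref{lemma:compute-edit-period}) and an $O(k)$-length edit sequence $\ES(P_{j+1},Q_j^\infty)$ from a banded edit-distance computation; this costs $O(mk^2)$ time and $\Ot(k^2)$ space, and, exactly as in \cref{thm:ro-km-algo}, it is run while the algorithm reads the first $m/2$ characters of $T$ and then catches up by processing two text characters per step, so no output is delayed (the leftmost $k$-error occurrence of $P$ ends at position $\ge m-k$).

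The text is scanned in $t=O(\log m)$ layers; the first layer handles $P_1=P[1]$ naively, and layer $j$ turns the stream of $k$-error occurrences of $P_j$ into that of $P_{j+1}$. The filter is the edit-distance analogue of \cref{obs:ro-km-filter}: cutting an optimal alignment witnessing $\edd{T(j'\dd i]}{P_{j+1}}\le k$ at the boundary of the prefix $P_j$ of $P_{j+1}$ shows that $P_j$ has a $k$-error occurrence ending at some $i''$ with $|i''-(i-(\ell_{j+1}-\ell_j))|\le k$. Hence, as in \cref{thm:ro-km-algo}, I would partition the output positions into blocks of length $\lceil\ell_j/2\rceil$ and, for each block, run a subroutine that is fed the $k$-error occurrences of $P_j$ ending in the corresponding range of length $\le\ell_j+O(k)$, with only $O(1)$ subroutines simultaneously active. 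Restricting such a range to the fragment $T_j'$ between its first and last $k$-error occurrence of $P_j$ gives $|T_j'|\le\frac32\ell_j$, so \cref{fact:ed-k-occ} applies. In the \textbf{aperiodic} case $T_j'$ contains $O(k^2)$ $k$-error occurrences of $P_j$; for each I would run an $O(k)$-diagonal banded dynamic program~\cite{ukkonen1983approximate} extending the alignment against the remaining suffix $P_{j+1}(\ell_j\dd\ell_{j+1}]$ of $P_{j+1}$, reporting the $k$-error occurrences of $P_{j+1}$ it produces. As each occurrence is extended over $O(\ell_j)$ characters with $O(k)$ work per character and each text position lies in $O(1)$ blocks, this case contributes $\Ot(k^3)$ amortised time per character and uses $\Ot(k^3)$ space.

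The \textbf{periodic} case is the heart of the argument: $P_j$ is $2k$-error periodic with period $Q_j$, $\edd{T_j'}{Q_j^\infty}\le 6k$, and the $k$-error occurrences of $P_j$ in $T_j'$ split into $O(k^3)$ chains, each with difference $|Q_j|$ and first element within $10k$ of a multiple of $|Q_j|$. As the text streams in, I would maintain an $O(k)$-length edit sequence $\ES(T_j',Q_j^\infty)$ by banded dynamic programming against $Q_j^\infty$, declaring the window aperiodic if this cost would exceed $6k$. Using $\ES(T_j',Q_j^\infty)$, the preprocessed $\ES(P_{j+1},Q_j^\infty)$, and \cref{lm:lcp_ed} (which answers an $\lcp$ query between a suffix of a near-power of $Q_j$ and a suffix of $P_{j+1}$ in $O(k|Q_j|)$ time), I can drive the Landau--Vishkin construction of \cref{fact:LV} to build, for each relevant starting position $j'$, a structure returning $\edk{T(j'\dd i]}{P_{j+1}}{k}$ for all ending positions $i$ in $O(k)$ time each; taking the minimum over $j'$ gives $\min_{j'}\edd{T(j'\dd i]}{P_{j+1}}$ and hence the $k$-error occurrences of $P_{j+1}$. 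The candidate values of $j'$ (and of $i$) are confined to $O(k)$ residue classes modulo $|Q_j|$ around $O(\ell_j/|Q_j|)$ multiples of $|Q_j|$, and each Landau--Vishkin structure costs $O(k^2)$ $\lcp$ queries; summing over the candidate starts in a window, the $O(|T|/\ell_j)$ windows of a layer, and the $O(\log m)$ layers yields $\Ot(k^4)$ amortised time per character, while keeping at most one Landau--Vishkin structure plus the $O(k^3)$ chain descriptions in memory gives $\Ot(k^4)$ space.

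The step I expect to be the main obstacle is precisely this periodic case. One has to (i) compute and maintain $\ES(T_j',Q_j^\infty)$ online under the read-only constraint while staying within the time budget, (ii) make sure the $\lcp$ queries issued by \cref{fact:LV} always fall into one of the two regimes handled by \cref{lm:lcp_ed} (a suffix of a near-power of $Q_j$ against $P_{j+1}$, itself a near-power of $Q_j$), and (iii) avoid redoing work across the $O(k^3)$ chains: since two chains related by a shift of $|Q_j|$ give rise to Landau--Vishkin structures that are shifts of one another, one should reuse them rather than rebuild each from scratch, and one must stitch the $P_{j+1}$-occurrences reported by consecutive overlapping blocks without double counting. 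Establishing these periodicity-and-shift invariants carefully is what keeps both the time and the space at $\Ot(k^4)$.
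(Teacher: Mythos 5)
Your proposal takes a genuinely different route from the paper. You port the layered prefix-filtering architecture of \cref{thm:ro-km-algo} to the edit-distance setting, turning occurrences of $P_j$ into occurrences of $P_{j+1}$ layer by layer. The paper instead recurses on the \emph{reporting delay} $d$: it splits $P$ into a long prefix $L$ and a short suffix $R=P(m-4d\dd m]$, reports occurrences of $L$ recursively with a delay that is at least four times larger, finds occurrences of $R$ by running the \emph{offline} algorithm of~\cite{charalampopoulos2020faster} on text blocks of length $d$ (obtaining the $\Oh(k^3)$ chains of \cref{fact:ed-k-occ} as a black box), and stitches the two via \cref{fact:LV} and \cref{lm:lcp_ed}. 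This delay-driven recursion is not a cosmetic choice: it is precisely what makes the algorithm online, since every Landau--Vishkin computation and every offline block computation is scheduled so that its answers are ready exactly when the corresponding position must be reported. Your layered scheme never addresses this scheduling. Each layer must deliver $\min_{j'}\ed_{\le k}(T(j'\dd i''],P_j)$ early enough for the next layer to finish its banded DP or LV computation by the time $T[i]$ arrives, and at the top layer the LV structure for $(T(j'\dd\,],P)$ is built from $\lcp$ queries that, if done naively, touch text positions beyond $i$. Without an explicit delay/buffering argument (which is the entire organizing principle of the paper's proof), the claim of an \emph{online} algorithm is not established.

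The second gap is the periodic case, which you yourself flag as the main obstacle and leave unresolved. Two concrete problems remain. First, your time accounting builds one \cref{fact:LV} structure per candidate \emph{starting} position $j'$; since each of the $\Oh(k)$ residue classes of ending positions of $P_j$-occurrences (from \cref{fact:ed-k-occ}) spawns $\Oh(k)$ candidate starts for $P_{j+1}$, the starts occupy $\Oh(k^2)$ residues modulo $|Q_j|$, and $\Oh(k^2\ell_j/|Q_j|)$ structures at $\Oh(k^2)$ $\lcp$ queries of cost $\Oh(k|Q_j|)$ each gives $\Oh(k^5)$ amortised time per character, not $\Oh(k^4)$. (The paper avoids this by amortising per \emph{ending} position, with a single LV computation covering all $2k+1$ starts for that end.) Second, the proposed reuse of LV structures across chains shifted by $|Q_j|$ does not follow from anything proved: the text window is only within edit distance $\Oh(k)$ of $Q_j^\infty$, so structures for shifted chains are not shifts of one another near the edited positions, and no invariant is given that bounds the extra work. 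Until the scheduling of outputs and the periodic-case bookkeeping are made precise, the proposal does not yet prove \cref{lemma:ro-ke-algo}.
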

\begin{proof}
We prove that, for every $d\in \mathbb{Z}_{\ge 0}$, there is an algorithm reporting the $k$-error occurrences with a delay exactly $d$.

If $d \ge \frac{m}{4}$, then we partition the text into disjoint blocks of $b=\lfloor{\frac{m}{4}}\rfloor$ characters. Consider a block $T(r-b\dd r]$.
Having processed $T[r+d-b]$, we use the offline algorithm~\cite[Main Theorem 9]{charalampopoulos2020faster} to retrieve the $k$-error occurrences of $P$ ending within the block $T(r-b\dd r]$, reported as~$\Oh(k^3)$ chains. This costs $\Ot(bk^{4})$ time and uses $\Ot(k^{4})$ space.
For every $i\in (r-b\dd r]$, while processing $T[i+d]$, we check if any of the chains contain an occurrence ending at position $i$ and, if so, report the underlying distance $\min_{j} \ed_{\le k}(P, T(j\dd i])$.

Now, suppose that $k \le d < \frac{m}{4}$. We partition $P$ into a prefix $L=P[1\dd m-4d]$ and a suffix $R=P(m-4d\dd m]$.
We recursively report the occurrences of $L$ with a delay of $5d-k \ge 4d$ and store them in a buffer of size $2k+1$.
In particular, while the algorithm processes $T[i+d]$, we have access to $\min_{j} \ed_{\le k}(L, T(j\dd i'])$ for all $i'\in [i-4d-k\dd i-4d+k]$.
We also partition the text into disjoint blocks of $d$ characters. Consider a block $T(r-d\dd r]$. Having received the entire block, we use the offline algorithm~\cite[Main Theorem 9]{charalampopoulos2020faster} to retrieve the $k$-error occurrences of $R$ ending within the block $T(r-d\dd r]$, reported as $\Oh(k^3)$ chains. This costs $\Ot(dk^{4})$ time and uses $\Ot(k^{4})$ space.
If $R$ is far from periodic, it has $\Oh(k^2)$ occurrences. Otherwise, $R$ has a $2k$-error period $Q$ and, if $p$ and $p'$ are the leftmost and rightmost positions in $T(r-d\dd r]$ where the reported occurrences end, then $T(p-4d-k\dd p']$ is at edit distance $\Oh(k)$ from a substring of $Q^\infty$. In that case, we retrieve the underlying edits as well as the $\Oh(k)$ edits between $R$ and a substring of $Q^\infty$.
In either case, while processing $T[i+d]$, if a $k$-edit occurrence of $R$ ends at position $i$, we compute $\ed_{\le k}(R,T(i'\dd i])$ for all $i'\in [i-4d-k\dd i-4d+k]$ using \cref{fact:LV}. In the non-periodic case, we use the naive $O(d)$-time implementation of \lcp queries. This costs $\Oh(dk^2)$ time per occurrence and $\Oh(k^4)$ amortised per position. Otherwise, we use the $\Oh(|Q|\cdot k)$-time implementation of \lcp queries; see \cref{lm:lcp_ed}. This costs $\Oh(|Q|\cdot k^3)$ time per occurrence and, since there are $\Oh(k)$ occurrences among every $|Q|$ positions, also $\Oh(k^4)$ amortised time per position.
We can combine $\ed_{\le k}(R, T(i'\dd i])$ with $\min_{j} \ed_{\le k}(L, T(j\dd i'])$ (minimizing over $i'\in [i-4d-k\dd i-4d+k]$) to obtain $\min_{j} \ed_{\le k}(P, T(j\dd i])$.

Finally, if $d < k$, we partition $P$ into a prefix $L=P[1\dd m-4k]$ and a suffix $R=P(m-4k\dd m]$.
We recursively report the occurrences of $L$ with a delay of $d+3k > 4d$ and store them in a buffer of size $2k+1$.
In particular, while the algorithm processes $T[i+d]$, we have access to $\min_{j} \ed_{\le k}(L, T(j\dd i'])$ for all $i'\in [i-5k\dd i-3k]$.
While processing $T[i+d]$, we compute $\ed_{\le k}(R,T(i'\dd i])$ for all $i'\in [i-5k\dd i-3k]$
using \cref{fact:LV} and the naive $O(k)$-time implementation of \lcp queries. This costs $\Oh(k^3)$ time per position.
We can combine $\ed_{\le k}(R, T(i'\dd i])$ with $\min_{j} \ed_{\le k}(L, T(j\dd i'])$ (minimizing over $i'\in [i-4d-k\dd i-4d+k]$) to obtain $\min_{j} \ed_{\le k}(P, T(j\dd i])$.

At each level of the recursive algorithm, we use $\Ot(k^4)$ space and amortised time per character. The recursive calls decrease the pattern length and increase the delay at least $4$ times, so the depth of the recursion is $\Oh(\log m)$.
\end{proof}

\subsection{Read-only algorithm for \texorpdfstring{\kpale}{k-LED-PAL}}\label{sec:ro-pal-edit}

\thmropaledit*

Consider a family $\Pp = \{P_j : P_j = T[\dd 2^j], j \in [1\dd \lfloor\log n\rfloor] \}$ of prefixes of the text. For each $j$, define $\ell_j = |P_j|$, $T_j = T[\dd \ell_{j+1})$, and $\occ_j$ to be the set of $2k$-error occurrences of $P_j^R$ in $T_j$ larger or equal to~$\ell_j$. Using \cref{lemma:compute-edit-period}, we can decide whether $P_j$ is $2k$-error periodic and if it is, compute its $2k$-error period $Q_j$ in $O(|P_j|k^2)$ time and $O(k)$ space. Using Ukkonen's algorithm~\cite{ukkonen1983approximate}, we can also compute a $O(k)$-length edit sequence $\ES_j$ from $P_j$ to $(Q_j)^\infty$ in $O(k)$ amortised time per character and $\Ot(k^2)$ space.

\begin{claim}\label{claim:correct_pal_ro_edit}
If $\edd{T[\dd i]}{\PAL} \le k$, then $i\in\occ_{j^*}$ for $j^* = \max\{j: \ell_j \le i\}$.
\end{claim}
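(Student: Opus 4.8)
The plan is to combine the self-similarity characterisation of $\edd{\cdot}{\PAL}$ (Property~\ref{prop:ed-to-pal}) with a routine alignment-restriction argument. First I would unwind the choice of $j^*$: since $\ell_j=2^j$, having $j^*=\max\{j:\ell_j\le i\}$ means precisely $\ell_{j^*}\le i<\ell_{j^*+1}$. This already gives $i\ge\ell_{j^*}$, and, since $T_{j^*}=T[\dd\ell_{j^*+1})$ has length $\ell_{j^*+1}-1\ge i$, every substring of $T$ ending at position $i$ is also a substring of $T_{j^*}$ ending at $i$. So the only remaining task is to exhibit a $2k$-error occurrence of $P_{j^*}^R$ ending at position $i$ in $T$.

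Set $U=T[\dd i]$. By Property~\ref{prop:ed-to-pal}, $\edd{U}{U^R}=2\edd{U}{\PAL}\le 2k$, so I would fix an optimal alignment $\caA=(u_t,v_t)_{t=0}^M$ of $U$ onto $U^R$, which has cost at most $2k$. Because $u$ increases from $u_0=0$ to $u_M=|U|=i\ge\ell_{j^*}$ in steps of $0$ or $1$, some index $t^*$ satisfies $u_{t^*}=\ell_{j^*}$; put $\ell:=v_{t^*}$, and note $0\le\ell\le v_M=i$ by monotonicity of $v$. Restricting $\caA$ to $t\in[0\dd t^*]$ is an alignment of $U[\dd\ell_{j^*}]=P_{j^*}$ onto $U^R[\dd\ell]$ whose cost is at most that of $\caA$; hence $\edd{P_{j^*}}{U^R[\dd\ell]}\le 2k$. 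Since $U^R[\dd\ell]=(T[\dd i])^R[\dd\ell]=T(i-\ell\dd i]^R$, reversing both arguments yields $\edd{P_{j^*}^R}{T(i-\ell\dd i]}\le 2k$.

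To conclude, $T(i-\ell\dd i]$ is a (possibly empty) substring of $T$ ending at $i$, with starting offset $i-\ell\ge 0$, so position $i$ is a $2k$-error occurrence of $P_{j^*}^R$ in $T$; together with the first paragraph this places $i$ in $\occ_{j^*}$ by definition. The argument is essentially mechanical, and I expect the only points needing a little care to be: (i) checking $0\le\ell\le i$ so that $U^R[\dd\ell]$ — and hence the witnessing substring — is well defined, which is just monotonicity of $v$; and (ii) the degenerate case $\ell=0$ (possible only when $\ell_{j^*}\le 2k$), where $T(i-\ell\dd i]=\varepsilon$ and $\edd{P_{j^*}^R}{\varepsilon}=\ell_{j^*}\le 2k$ still certifies the occurrence. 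The main ``obstacle'' is really just keeping the index bookkeeping between $T$, $T_{j^*}$, $P_{j^*}$, and $P_{j^*}^R$ straight, rather than any genuine difficulty.
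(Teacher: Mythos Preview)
Your proof is correct and follows essentially the same approach as the paper: both start from $\edd{T[\dd i]}{T[\dd i]^R}\le 2k$ (via Property~\ref{prop:ed-to-pal}) and then restrict an optimal alignment so that the portion corresponding to $P_{j^*}$ (or, equivalently after reversal, to the suffix $P_{j^*}^R$ of $T[\dd i]^R$) yields a witness substring $T(j\dd i]$ with $\edd{P_{j^*}^R}{T(j\dd i]}\le 2k$. The paper's proof is terse and leaves the alignment-restriction and index bookkeeping implicit; your write-up spells these out carefully, including the degenerate $\ell=0$ case, but there is no substantive difference in the argument.
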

\begin{proof}
By Corollary~\ref{cor:ro-pal-edit}, $\edd{T[\dd i]}{T[\dd i]^R} \le 2k$. We also have $\ell_j \le i \le 2\ell_j-1$. Hence, $i \in \occj$.
\end{proof}

Define $i' = \lfloor i/2\rfloor$. By~\cref{cor:ro-pal-edit}, to decide the edit distance between $T[\dd i]$ and \PAL, it suffices to compute $e = \min_{j\in [i'-k\dd i'+k]} \{\min\{\edk{T[\dd j]}{T[j+1\dd ]^R}{k}, \edk{T[\dd j]}{T[j+2\dd ]^R}{k}\}\}$.
We will only compute this value if $i \in \occ'_{j^*} \supseteq \occ_{j^*}$ (the set $\occ'_{j^*}$ to be defined below), and otherwise we will output $k+1$.

For each $j$, we run the online algorithm of \cref{lemma:ro-ke-algo} that uses $\Ot(k^4)$ amortised time per character and $\Ot(k^4)$ space to find the set $\occj$ of $k$-error occurrences of $P_j^R$ in $T_j$. We start it immediately after receiving $T[\ell_j]$ and catch up by processing the first $\ell_j$ characters of $T_j$ at once to be online. If $P_j$ is not $2k$-error periodic, then by~\cref{fact:ed-k-occ}, $|\occj| = O(k^2)$ and we define $\occj' = \occj$. If $P_j$ is $2k$-error periodic, let $p_j$ be the leftmost position in $\occj$, and $r_j$ the largest integer such that $\edd{T[p_j\dd p_j+r_j\cdot |Q_j|]}{(Q_j^R)^\infty} \le 12k$. Define $\occj' = \{p_j+m\cdot|Q_j|+\Delta : 0 \le m \le r_j, |\Delta| \le 10k\}$. By Fact~\ref{fact:ed-k-occ}, $\occj' \supseteq \occj$. The integer $r_j$ can be computed by running an instance of Ukkonen's online algorithm~\cite{ukkonen1983approximate} for $T[p_j+1\dd ]$ and $(Q_j^R)^\infty$, which takes $O(k)$ amortised time per character and $\Ot(k^2)$ space. If $i \le p_j + r_j \cdot |Q_j|$ is the current position, the algorithm also allows extracting an $O(k)$-length edit sequence $\ES_j'$ between $T[p_j\dd i]$ and $(Q_j^R)^\infty$ in $O(k)$ time.

\subparagraph*{Computing the distances. }
Let $i$ be the current position and $j^* = \max\{j: \ell_j \le i\}$. Assume that $i \in \occj'$. If $P_j$ is not $2k$-error periodic, we compute $e$ using $k$ instances of Ukkonen's online algorithm~\cite{ukkonen1983approximate}, which takes
$O(k^2)$ amortised time per character in total and $\Ot(k^2)$ space. Otherwise, the value $e$ is computed as follows. First, if $i = p_j$, we run another instance of Ukkonen's online algorithm~\cite{ukkonen1983approximate} to compute an $O(k)$-length edit sequence $\ES_j''$ between $T(p_j-\ell_j-k\dd p_j]$ and $(Q_j^R)^\infty)$, which must exist as $p_j$ is a $2k$-error occurrence of $P_j^R$ and $P_j$ is $k$-error periodic with period $Q_j$. This takes $O(k)$ amortised time and $\Ot(k^2)$ space. Now, to compute the value $e$, we extract $O(k)$-length edit sequences between $T[1\dd \ell_j+k]$ and $Q_j^\infty$ and between $T(i-\ell_j-k\dd i]$ and $(Q_j^R)^\infty$ in $O(k)$ time from $\ES_j$, $\ES_j'$, and $\ES_j''$, and then apply~\cref{fact:LV} and~\cref{lm:lcp_ed} to compute $e$ in $O(k^3 |Q_j|)$ time and $\Ot(k^2)$ space.

\subparagraph*{Summary.}
The algorithm of \cref{lemma:ro-ke-algo} uses $\Ot(k^4)$ amortised time per character and $\Ot(k^4)$ space. Processing aperiodic prefixes costs $\Ot(k^2)$ amortised time per character and $\Ot(k^2)$ space. To upper bound the complexity of processing periodic prefixes, note that we test $O(k)$ positions out of $|Q_j|$, and therefore use $O(k^4)$ amortised time per character and $\Ot(k^2)$ space.

\subsection{Read-only algorithm for \texorpdfstring{\ksqe}{k-LED-SQ}}
\thmsquaresedit*

We first define a filtering family of prefixes $\Pp$ of the text.
Start by considering the prefixes $R_j = T[\dd \lfloor(3/2)^j\rfloor]$,
$j\in [1\dd \lfloor\log_{3/2} n\rfloor]$.
If $R_j$ is $k$-error periodic but $R_{j+1}$ is not,
add to $\Pp$ the shortest extension of $R_j$ that is not $k$-error periodic. Hereafter, let $\Pp = \{P_j\}$ denote the resulting family of prefixes, sorted in order of increasing lengths. For each $j$, let $\ell_j = |P_j|$, $T_j = T[1\dd \ell_{j+1}+\ell_j]$, and $\occj$ the set of $3k$-error occurrences of $P_j$ in $T_j$. We build $\Pp$ as we read $T$. When $T[\ell_j]$ arrives, we add $R_j$ to $P_j$ and launch an instance of the patter-matching algorithm for $T_j$ to compute $\occj$ (we process the first $\ell_j$ characters of $T_j$ at once to be online). Finally, we apply~\cref{lemma:compute-edit-period} to test $R_j$ for $3k$-error periodicity, which requires $O(k^2)$ amortised time and $\Ot(k)$ space. If $R_j$ is $3k$-periodic, the lemma also allows to compute the period $Q_j$ of $R_j$. We finish by computing the longest $3k$-error periodic extension $R'_j$ of $R_j$: by~\cref{fact:prefix-equal-per}, the $3k$-error period of $R'_j$ equals $Q_j$ and therefore we can compute $R'_j$ by running Ukkonen's algorithm on $T$ and $Q_j$, which requires $O(k)$ amortised time and $\Ot(k^2)$ space. The algorithm also outputs $O(k)$-length edit sequence $\ES_j$ between $R'_j$ and $Q_j^\infty[1\dd |R'_j|]$.

\begin{claim}\label{claim:ksqe-filtering}
If $\edd{T[\dd i]}{\SQ} \le k$, then $i'+\ell_{j^*} \in \occ_{j^*}$, where $i' = \lfloor i/2\rfloor$ and $j^* = \max\{j : \ell_j+\lceil{k/2\rceil} \leq i'\}$. Furthermore, if $P_{j^*}$ is $3k$-periodic, then $\edd{T(i'+\ell_{j^*}\dd i]}{Q_{j^*}^\infty} \le 10k+1$.
\end{claim}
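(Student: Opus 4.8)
The plan is to mirror the structure of \cref{claim:streaming-k-sq-occ} for the Hamming case, but now invoking the edit-distance self-similarity of \cref{prop:sq-ed} (and its threshold form \cref{cor:k-sq-ed}) in place of \cref{prop:k-sq-ham}, and the edit-distance approximate-periodicity structure of \cref{fact:ed-k-occ} in place of \cref{fact:ham-k-occ}. First I would fix $i$ with $\edd{T[\dd i]}{\SQ}\le k$, set $i'=\lfloor i/2\rfloor$, and use \cref{cor:k-sq-ed} to get an index $m\in[i'-k\dd i'+k]$ such that $\edd{T[\dd m]}{T(m\dd i]}\le k$; since $m$ and $i'$ differ by at most $k$, we have $m\ge i'-k\ge \ell_{j^*}$ (the choice $\ell_{j^*}+\lceil k/2\rceil\le i'$ is designed precisely so that $P_{j^*}$ fits as a prefix of $T[\dd m]$ with room to spare). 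The edit-distance analogue of \cref{obs:ksqh} is the observation that if $\edd{T[\dd m]}{T(m\dd i]}\le k$ and $\ell_{j^*}\le m$, then $P_{j^*}=T[\dd \ell_{j^*}]$ has a $k$-error occurrence ending somewhere near $m+\ell_{j^*}$ in $T$ (aligning the prefix $T[\dd \ell_{j^*}]$ of the first copy with the corresponding fragment of the second copy costs at most $k$ edits). I would then check that this occurrence is contained in $T_{j^*}=T[\dd \ell_{j^*+1}+\ell_{j^*})$: because $i\in[2\ell_{j^*}\dd 2\ell_{j^*+1})$ roughly, and $\ell_{j^*+1}\le \tfrac32\ell_{j^*}$ by the construction of $\Pp$, the endpoint $m+\ell_{j^*}\approx i'+\ell_{j^*}$ lies below $\ell_{j^*+1}+\ell_{j^*}$. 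Hence $i'+\ell_{j^*}\in\occ_{j^*}$ (using $3k$-error occurrences gives slack to absorb the $O(k)$ discrepancy between $m$ and $i'$).

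For the second part, assume $P_{j^*}$ is $3k$-error periodic with $3k$-error period $Q_{j^*}$. The goal is to bound $\edd{T(i'+\ell_{j^*}\dd i]}{Q_{j^*}^\infty}$. I would proceed by the triangle inequality through $(Q_{j^*})^\infty$: we know $\edd{P_{j^*}}{Q_{j^*}^\infty}\le 6k$ (the $3k$-error-periodicity bound $2\cdot 3k$), and we know from the first part (combined with \cref{fact:ed-k-occ} applied to $P_{j^*}$ and $T_{j^*}$) that $T_{j^*}$ — and in particular the region between the first $3k$-error occurrence of $P_{j^*}$ and the occurrence ending at $i'+\ell_{j^*}$ — is within edit distance $O(k)$ of $Q_{j^*}^\infty$. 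Since $\edd{T[\dd m]}{T(m\dd i]}\le k$ glues the first copy (which looks like $Q_{j^*}^\infty$ up to $O(k)$ edits on its prefix $P_{j^*}$, extended to the whole of $T[\dd m]$ via \cref{fact:ed-k-occ}) to the second copy $T(m\dd i]$, the fragment $T(i'+\ell_{j^*}\dd i]$ inherits approximate $Q_{j^*}$-periodicity: it is the portion of the second copy that corresponds, under the cost-$\le k$ alignment, to $T(\ell_{j^*}\dd m]$, which itself is within $O(k)$ edits of a substring of $Q_{j^*}^\infty$. Carefully adding up the constants — $6k$ from the periodicity of $P_{j^*}$, the $6k$ from $\edd{T_{j^*}}{Q_{j^*}^\infty}$ in \cref{fact:ed-k-occ}, and the $k$ from the square alignment, together with the $O(k)$ from $|m-i'|$ and the chain-offset slack — should land at the stated bound $10k+1$ (possibly after tightening which sub-fragments one runs the triangle inequality through, to avoid double-counting).

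The main obstacle I expect is the second part: edit distance does not split cleanly along fixed positions the way Hamming distance does, so "the portion of the second copy corresponding to $T(\ell_{j^*}\dd m]$ under the optimal alignment" is only defined up to a window of width $O(k)$, and one must be careful that the fragment boundary at $i'+\ell_{j^*}$ (a fixed position, not an alignment-dependent one) is close enough to the alignment image of position $\ell_{j^*}$ in the first copy. This is exactly the kind of bookkeeping that \cref{cor:k-sq-ed} is set up to handle by quantifying over $m\in[i'-k\dd i'+k]$, and \cref{fact:ed-k-occ} is set up to handle by allowing chain offsets within $10k$ of a multiple of $|Q_{j^*}|$; the proof will consist of threading those two slack parameters together so that the fixed split point $i'+\ell_{j^*}$ always falls within the guaranteed $O(k)$-neighbourhood of a genuine alignment boundary, and then the triangle inequality through $Q_{j^*}^\infty$ closes the argument. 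The first part, by contrast, I expect to be a routine adaptation of the Hamming argument and should go through with the constants chosen generously (which is why the claim uses $3k$-error occurrences rather than $k$-error ones).
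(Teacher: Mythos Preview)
Your first part is essentially the paper's argument. One small fix: you need the tighter bound $|m-i'|\le\lceil k/2\rceil$ (which follows because $\edd{T[\dd m]}{T(m\dd i]}\le k$ forces $|2m-i|\le k$), not $|m-i'|\le k$; otherwise the defining condition $\ell_{j^*}+\lceil k/2\rceil\le i'$ does not guarantee $m\ge\ell_{j^*}$.

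Your second part takes a detour through \cref{fact:ed-k-occ} that has a real gap and that the paper avoids entirely. You propose to apply \cref{fact:ed-k-occ} (with threshold $3k$) to $P_{j^*}$ and a region of $T_{j^*}$ in order to conclude that region is within $O(k)$ edits of $Q_{j^*}^\infty$. But that fact, in its periodic branch, only asserts the existence of \emph{some} primitive $Q$ witnessing $2\cdot(3k)$-error periodicity of $P_{j^*}$ and closeness of the text to $Q^\infty$; it does not say $Q=Q_{j^*}$. Nor can you simply substitute $Q_{j^*}$: being $3k$-error periodic with period $Q_{j^*}$ does not imply $6k$-error periodicity with the same $Q_{j^*}$, since the length condition $|Q|\le|P|/(128\cdot 6k)$ is strictly tighter. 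So your triangle inequality would pass through an unspecified $Q$, whereas the claim is specifically about $Q_{j^*}$. On top of that, your constant tally $6k+6k+k+\ldots$ already overshoots $10k+1$.

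The paper's route is much more direct and uses only the construction of~$\Pp$. Since $m=i'+t\le \ell_{j^{*}+1}+k$, the first half $T[\dd m]$ is, up to at most $k$ characters, a prefix of $P_{j^{*}+1}$. By the edit-distance analogue of \cref{claim:ro-sq-pref}(\ref{claim:enum-close}) (obtained from the way $\Pp$ was built together with \cref{fact:prefix-equal-per}), one has $\edd{P_{j^{*}+1}}{Q_{j^*}^\infty}\le 6k+1$. Combining these, and using that trimming equal-length suffixes cannot increase edit distance, bounds $\edd{T[\dd m]}{Q_{j^*}^\infty}$; a single application of the triangle inequality with $\edd{T[\dd m]}{T(m\dd i]}\le k$ then yields the stated bound. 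In short: the family $\Pp$ was engineered precisely so that the approximate $Q_{j^*}$-periodicity of $P_{j^*}$ extends (with one extra error) to $P_{j^{*}+1}$, hence to all of $T[\dd m]$ --- use that structural property directly rather than the occurrence-structure fact.
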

\begin{proof}
By Corollary~\ref{cor:k-sq-ed}, we have $\edd{T[\dd i'+t]}{T[i'+t+1\dd i]} \leq k$
for some $t$, $|t|\leq \lceil k/2\rceil$. As $P_{j^*}$ is a prefix of $T[\dd i'+t]$, we have $\edd{P_{j^*}}{T[i'+t+1\dd i'+t+\ell_{j^*}+\Delta]} \leq k$ for some $-k \le \Delta \le k$ and therefore by the triangle inequality $\edd{P_{j^*}}{T[i'+t+1\dd i'+\ell_{j^*}]} \leq 3k$. It follows that $\ell_{j^*} < i'+\ell_{j^*} \le \ell_{j^*+1}+\ell_{j^*}$
is a $3k$-error occurrence of $P_{j^*}$ and hence $i'+\ell_{j^*} \in  \occ_{j^*}$. To show the second part of the claim, note that $i'+t \le \ell_{j^*+1} + k$. Additionally, by the construction of $\Pp$ and~\cref{fact:prefix-equal-per} we have $\edd{P_{j^*}}{(Q_{j^*})^\infty} \le 7k+1$. Applying the triangle inequality one more time, we obtain that $\edd{T[i'+t+1\dd i]}{(Q_{j^*})^\infty} \le 10k+1$,
as cutting equal-length suffixes of both strings cannot increase the edit distance.
\end{proof}

By Corollary~\ref{cor:k-sq-ed}, to get the edit distance between $T[\dd i]$ and \SQ, it suffices to compute the value $e = \min_{j \in [i'-k\dd i'+k]} \edk{T[\dd j]}{T(j\dd ]}{k}$. We will only compute this value if $i'+\ell_{j^*} \in \occ_{j^*}' \supseteq \occ_{j^*}$ to be defined below, and otherwise we output $\infty$. The set $\occ_{j^*}'$ is defined differently depending on whether $P_j$ is $3k$-error periodic.

If $P_j$ is not $3k$-error periodic, $\occ_{j^*}' = \occ_{j^*}$ and by~\cref{fact:ed-k-occ}, occupies $\Ot(k^2)$ space. Using~\cref{lemma:ro-ke-algo}, $\occ_{j^*}'$ can be computed in $\Ot(k^4)$ amortised time per character and $\Ot(k^4)$ space. If $P_j$ is $3k$-error periodic with a period $|Q_j|$, $\occ_{j^*}'$ is defined to be $\{p_{j^*}+t\cdot |Q_{j^*}| + \Delta : 0 \le t \le r_{j^*}, |\Delta| \le 10 k\}$, where $p_{j^*}$ is the leftmost position in $\occ_{j^*}$ and $r_{j^*}$ is the largest integer such that $\edd{T[p_j\dd p_j+r_j\cdot |Q_j|]}{(Q_j^R)^\infty} \le 2 \cdot (10k+1)$. By~\cref{fact:ed-k-occ} and~\cref{claim:ksqe-filtering}, $\occ_{j^*}' \supseteq \occ_{j^*}$.
The set $\occ_{j^*}'$ is computed as follows in this case: First, we identify $p_j$ using~\cref{lemma:ro-ke-algo}. Then, we launch Ukkonen's online algorithm~\cite{ukkonen1983approximate} to compute $r_j$, which takes $O(k^2)$ amortised time per character and $O(k^2)$ space. 

\subparagraph*{Computing the distances. }
Consider the moment when $T[i]$ arrives.
Let $i' = \lfloor i/2\rfloor$ and $j^* = \max\{j : \ell_j+\lceil{k/2\rceil} \leq i'\}$. If $i'+\ell_{j^*} \notin \occ'_{j^*}$, the algorithm outputs $k+1$. Below we assume $i'+\ell_{j^*} \in \occ'_{j^*}$.

If $P_{j^*}$ is not $3k$-error periodic, we compute $e = \min_{j \in [i'-k\dd i'+k]}  \edk{T[\dd j]}{T(j\dd ]}{k}$ by running $k$ instances of Ukkonen's algorithm~\cite{ukkonen1983approximate}. As $|\occ'_{j^*}| = O(k^2)$, testing all positions in $\occ'_{j^*}$ requires $\Ot(k)$ space and $O(k^4)$ amortised time per character in total.

Otherwise, we compute $e$ as follows. If $i'+\ell_{j^*} = p_{j^*}$, we use one instance of Ukkonen's algorithm~\cite{ukkonen1983approximate} to compute an $O(k)$-length edit sequence $\ES''_{j^*}$ between $T[i' \dd i'+\ell_{j^*}]^R$ and $Q_j^\infty[1\dd \ell_{j^*}]$. For each $j \in [i'-k\dd i'+k]$, we first extract $O(k)$-edit sequences between $T[1\dd j]$ and $(Q_{j^*})^\infty[1\dd j]$ and between $T(j \dd i]$ and $(Q_{j^*})^\infty[1\dd j-i]$ from $\ES_{j^*}$, $\ES'_{j^*}$, and $\ES''_{j^*}$ in $O(k)$ time, and then apply Fact~\ref{fact:LV} and Lemma~\ref{lm:lcp_ed} to compute $\edk{T[\dd j]}{T(j\dd ]}{k}$ using $O(|Q_{j^*}| k^3)$ time and $O(k^2)$ space.  

\subparagraph*{Summary.}
Computing $\Pp$ takes $\Ot(k)$ amortised time per character and $\Ot(k^2)$ space. 
The pattern-matching algorithm of~\cref{lemma:ro-ke-algo} takes $\Ot(k^4)$ amortised time per character and $\Ot(k^4)$ space. Processing aperiodic prefixes requires $O(k^4)$ amortised time per character and $\Ot(k)$ space. Processing periodic prefixes takes $O(k^4)$ time and $O(k^2)$ space as by the definition of $\occj$ we test $O(k)$ positions out of $|Q_j|$.
\end{document}